\def\eqref#1{equation~\ref{#1}}
\def\1{\bm{1}}
\DeclareMathAlphabet{\mathsfit}{\encodingdefault}{\sfdefault}{m}{sl}
\SetMathAlphabet{\mathsfit}{bold}{\encodingdefault}{\sfdefault}{bx}{n}
\definecolor{darkblue}{rgb}{0,0.08,0.45}
\definecolor{backcolour}{rgb}{0.95,0.95,0.92}
\renewcommand{\Function}[2]{%
  \csname ALG@cmd@\ALG@L @Function\endcsname{#1}{#2}%
  \def\dara@currentfunction{#1}%
}
\newcommand{\funclabel}[1]{%
  \@bsphack
  \protected@write\@auxout{}{%
    \string\newlabel{#1}{{\dara@currentfunction}{\thepage}}%
  }%
  \@esphack
}
\newtheorem{theorem}{Theorem}[section]
\newtheorem{lemma}[theorem]{Lemma}
\newtheorem*{remark}{Remark}
\newcommand{\myparagraph}{\textbf}
\title{A Watermark for Black-Box Language Models}
\author{\name Dara Bahri, John Wieting \\
      \addr Google DeepMind
      }
\def\expect{{\mathbb{E}}}
\def\variance{{\mathbb{V}}}
\def\prob{{\mathbb{P}}}
\def\unif{{U(0,1)}}
\begin{document}

\maketitle

\begin{abstract}
Watermarking has recently emerged as an effective strategy for detecting the outputs of large language models (LLMs). Most existing schemes require \emph{white-box} access to the model's next-token probability distribution, which is typically not accessible to downstream users of an LLM API. In this work, we propose a principled watermarking scheme that requires only the ability to sample sequences from the LLM (i.e. \emph{black-box} access), boasts a \emph{distortion-free} property, and can be chained or nested using multiple secret keys. We provide performance guarantees, demonstrate how it can be leveraged when white-box access is available, and show when it can outperform existing white-box schemes via comprehensive experiments.
\end{abstract}

\section{Introduction}
It can be critical to understand whether a piece of text is generated by a large language model (LLM). For instance, one often wants to know how trustworthy a piece of text is, and those written by an LLM may be deemed untrustworthy as these models can hallucinate.
The goal of watermarking is to cleverly bias the LLM so that detecting its generations becomes easier. Most proposed techniques do not modify the underlying LLM’s model weights or its training procedure but rather inject the watermark during autoregressive decoding at inference time. They require access to the next-token logits and inject the watermark every step of the sampling loop. This required access prevents third-party users of an LLM from applying their own watermark as proprietary APIs currently do not support this option. Supporting this functionality presents a security risk in addition to significant engineering considerations. Concretely, \citet{carlini2024stealing} showed that parts of a production language model can be stolen from API access that exposes logits. In this work, we propose a watermarking scheme that gives power back to the people --- third-party users can watermark a language model given nothing more than the ability to sample sequences from it. Our scheme is faithful to the underlying language model and it can outperform existing white-box schemes.

\section{Related Work}
Watermarking outside the context of generative LLMs, which is sometimes referred to as linguistic steganography, has a long history and typically involves editing specific words from an non-watermarked text. Watermarking in the modern era of generative models is nascent --- \cite{venugopal2011watermarking} devised a scheme for machine translation, but interest in the topic grew substantially after the more recent seminal works of \cite{kirchenbauer2023watermark,kirchenbauer2023reliability} and \cite{aaronson}. Many effective strategies employ some form of pseudorandom functions (PRFs) and cryptographic hashes on token $n$-grams in the input text. \cite{kirchenbauer2023watermark} proposes modifying the next-token probabilities every step of decoding such that a particular subset of the vocabulary, referred to as \emph{green tokens}, known only to those privy to the secret key, are made more probable. Watermarked text then is expected to have more green tokens than non-watermarked text and can be reliably detected with a statistical test. The scheme distorts the text, but with the right hyperparameters a strong watermark may be embedded with minimal degradation in text quality.

Meanwhile, \cite{aaronson} proposes a clever \emph{distortion-free} strategy which selects the token that is both highly probable and that achieves a high PRF value. \cite{kuditipudi2023robust} applies a scheme similar in spirit to \cite{aaronson} but to improve robustness to attacks, pseudorandom numbers (PRNs) are determined by cycling through a fixed, pre-determined sequence of values called the \emph{key}, rather than by $n$-grams. They compute a $p$-value using a permutation test to determine if the text was watermarked with \emph{that specific} key.

\cite{lee2023wrote} adapts \cite{kirchenbauer2023watermark}'s scheme for code-generation by applying the watermark only at decoding steps that have sufficient entropy. \cite{zhao2023provable} investigates a special case of \cite{kirchenbauer2023watermark} for improved robustness to adversarial corruption.
\cite{fernandez2023three} tests various watermarking schemes on classical NLP benchmarks and also introduces new statistical tests for detection --- most notably, they suggest skipping duplicate $n$-grams during testing.

\cite{yang2023watermarking} introduces a scheme that relies on black-box access to the LLM. Their method samples from the LLM and injects the watermark by replacing specific words with synonyms. Although their approach shares the assumption of black-box LLM access, as in our work, it has limitations not present in ours: the watermarking process is restricted to words that can easily be substituted with multiple synonyms, synonym generation is powered by a BERT model~\citep{devlin2018bert}, making it computationally expensive, and the scheme is not distortion-free. \cite{chang2024postmark} presents \textsc{PostMark}, a black-box watermarking method that uses semantic embeddings to identify an input-dependent set of words. These words are then inserted into the text by an LLM after decoding. However, this approach is also not distortion-free, as the insertion of words by the LLM often results in significantly longer watermarked text.

Given the weakness of many schemes to paraphrasing or word substitution attacks, some have proposed watermarking based on semantics and other features that would remain intact for common attack strategies~\citep{liu2023semantic,hou2023semstamp,ren2023robust,yoo2023robust}. Meanwhile, others have viewed the problem through the lens of cryptography and classical complexity theory~\citep{christ2023undetectable,christ2024pseudorandom}.
Lastly, \cite{liu2023unforgeable} proposes an un-forgeable publicly verifiable watermark algorithm that uses two different neural networks for watermark generation and detection. \cite{huang2023towards} improves the statistical tests used for detection, providing faster rates than prior work.

As the deployment of watermarks to LLMs is still early and also presumably secretive, the correct threat model is still undetermined. \cite{krishna2024paraphrasing} shows that paraphrasing can evade both third-party and watermarking detectors alike. 
Some may posit that attacks like paraphrasing or round-trip translation are unrealistic since either they are too expensive to conduct at scale or parties in possession of a capable paraphrasing model have adequate resources to serve their own LLM. \cite{zhang2023watermarks} show that attackers with weaker computational capabilities can successfully evade watermarks given access to a \emph{quality oracle} that can evaluate whether a candidate output is a high-quality response to a prompt, and a \emph{perturbation oracle} which can modify an output with a non-trivial probability of maintaining quality. Alarmingly, \cite{gu2023learnability} demonstrates that watermarks can be learned --- an adversary can use a teacher model that employs decoder-based watermarking to train a student model to emulate the watermark. \cite{thibaud2024black} formulates tests to determine whether a black-box language model is employing watermarking, and they do not find strong evidence of watermarking among currently popular LLMs.

\begin{algorithm*}[!t]
\small
\caption{Black-Box Watermarking}\label{algo:flat}
\begin{algorithmic}[1]
    \Function{Watermark}{cdf $F$, key $K$, \# cand $m$, ctx len $n$, prompt $P$, seq len $k$, \texttt{LM}} \funclabel{Watermark}
        \State $O \gets \phi$
            \While{$\neg\;\textsc{stopCond}(O)$} \Comment{Continue until stop token is encountered or max length reached.}
            \State $O \gets O\; |\; \Call{WatermarkSingle}{F, K, m, n, P | O, k, \texttt{LM}}$
            \EndWhile
        \State \Return $O$
    \EndFunction
    \Statex
    \Function{WatermarkSingle}{cdf $F$, key $K$, \# cands $m$, ctx len $n$, prompt $P$, seq len $k$, \texttt{LM}} \funclabel{WatermarkSingle}
            \State $Q_1,\dots,Q_m \sim \texttt{LM}\left(\;\cdot \;|\; P; \;k\right)$ \Comment{Draw $m$ sequences from LM, each with at most $k$ tokens.}
        \State $(X_1, c_1), \dots, (X_j, c_j) \gets \textsc{uniqueSeqsWithCounts}((Q_1, \dots, Q_m))$
        \State $u_1, \dots, u_j \gets \Call{ScoreSeqs}{F, (X_1, \dots, X_j), K, n, P}$
        \State $i^* \gets \operatorname{argmax}_{i=1}^{j} u_i^{m/c_i}$
        \State \Return $X_{i^*}$
    \EndFunction
    \Statex
    \Function{ScoreSeqs}{cdf $F$, candidates $C$, key $K$, ctx len $n$, prefix $P$} \funclabel{ScoreSeqs}
        \State $Z \gets \phi$
        \For{$X_i$ in $C$}
            \For{$w$ in $\textsc{NGrams}\left(X_i, n, P\right)$} \Comment{Don't compute $n$-grams over original prompt.}
                \State $Z \gets Z\; |\; (i, \textsc{inthash}(K | w))$ \Comment{Apply cryptographically secure integer hash.}
            \EndFor
        \EndFor
        \State $Z \gets \textsc{removeDuplicates}(Z)$
        
        \For{$i, S$ in $\textsc{sortedGroupBy}(Z)$} \Comment{Iterate through each candidate's set of unique seeds.}
            \State $R$ $\gets$ ($F[s]$ for $s$ in $S$)
            \State $u_i \gets F_{|R|}\left(\sum_j R_j\right)$
        \EndFor
        \State \Return $u_1,\dots, u_{|C|}$
    \EndFunction
    \Statex

    \Function{Detect}{cdf $F$, tokens $X$, key $K$, ctx len $n$} \Comment{$p$-value-based detection.}
    \State $S$ $\gets \phi$
    \For{$w$ in $\textsc{NGrams}\left(X, n, \phi\right)$}
        \State $S \gets S \;|\; \textsc{inthash}(K | w)$
    \EndFor
    \State $S \gets \textsc{removeDuplicates}(S)$
    \State $R$ $\gets$ ($F[s]$ for $s$ in $S$)
    \State \Return $F_{|R|}\left(\sum_j R_j\right)$ \Comment{ Higher score means higher likelihood of being watermarked.}
    \EndFunction
    \Statex
   \end{algorithmic}
\end{algorithm*}

\begin{algorithm*}[!t]
\small
   \caption{Recursive Black-Box Watermarking}\label{algo:recursive}
\begin{algorithmic}[1]

    \Function{WatermarkRecursive}{$F$, ($K_1,\dots, 
    K_t$), $m^t$, $n$, $P$, $k$, \texttt{LM}} \Comment{Sub. for \textsc{WatermarkSingle.}} \funclabel{WatermarkRecursive}
    
    \If{$t = 1$}
    \State $M = \texttt{LM}\left(\;\cdot\;|\; \cdot\; ;\;\cdot\;\right)$
    \Else
    \State $M = \Call{WatermarkRecursive}{F, \left(K_2, \dots, K_{t}\right), m^{t-1}, n,\;\cdot\;,\;\cdot\;, \texttt{LM}}$
    \EndIf
    \State \Return $\Call{WatermarkSingle}{F, K_1, m, n, P, k, M}$
    \EndFunction
    \Statex

    \Function{DetectRecursive}{$F$, $X$, ($K_1,\dots, K_t$), $n$}
    \State $P$ $\gets \phi$
    \For{$K_i$ in ($K_1,\dots,K_t$)}
        \State $P$ $\gets$ $P\; |\; (1-\Call{Detect}{F, X, K_i, n})$
    \EndFor
    \State $y \gets -2\sum_i{\log{P_i}}$ \Comment{Combine $p$-values using Fisher's method.}
    \State \Return $\chi_{2t}^2\left(y\right)$
    \EndFunction
    \Statex
   \end{algorithmic}
\end{algorithm*}
\section{Algorithm}
\myparagraph{High-level sketch.} At a high level, our scheme operates autoregressively; each step, we sample multiple generations from the LLM, score each with our secret key, and output the highest scoring one. We do this repeatedly until our stopping condition (e.g. reaching the stop-token or the max length) is met. To determine whether a piece of text was watermarked, we score it using our key --- if it's high, it's likely watermarked. We now describe the algorithm more formally.

\myparagraph{Preliminaries.} We begin with some preliminaries. If $F$ is a cumulative distribution function (CDF), we let $F[s]$ (square brackets) refer to a single draw from a pseudorandom number generator (PRNG) for $F$ seeded by integer seed $s$. Let $F_k$ be the CDF for $\sum_{i=1}^k X_i$, where $X_i \overset{iid}\sim F$. We sometimes abuse notation and treat a distribution as its CDF (e.g. $N(0,1)(2)$ is the standard normal CDF evaluated at 2) and when the context is clear we let $-F$ be the distribution of $-X$ where $X \sim F$.
Now, we detail our proposed algorithm, for which pseudocode is provided in Algorithm~\ref{algo:flat}. A Python implementation and example usage are presented in the Appendix.

Let $F$ be a \emph{continuous} CDF of our choosing, $P$ the input prompt, $K$ a secret integer key known only to the watermark encoder and decoder, $\texttt{LM}$ a conditional language model with vocabulary $\mathcal{V}$ of size $V$, and $h$ a cryptographic hash function (e.g. SHA-256) from $\mathbb{Z}^*$ to $\mathbb{Z}$. Let $n$ be the number of tokens (typically 4 or 5) that serves as input to our pseudorandom function. Our PRF $g: \mathcal{V}^* \to \mathbb{R}$ is given by $g(w) = F[h\left(K | w\right)]$, where $|$ denotes concatenation.

\myparagraph{Watermark encoding}. We sample $m$ sequences $\{Q_1, \dots, Q_m\}$, each consisting of at most $k$ tokens from $\texttt{LM}\left(\;\cdot\;| \;P; \; k\right)$. Let $\{(X_1, c_1), \dots, (X_j, c_j)\}$ be the \emph{unique} sequences along with their counts from $\{Q_i\}$ --- for example, the sequence $X_t$ appears $c_t$ times in $\{Q_i\}$.
To score each distinct sequence $X_t$, we first extract its $n$-grams as $\left\{(X_{t, i-n-1},\dots,X_{t,i})\right\}_{i=1}^{|X_t|}$, where we allow the left endpoint to spill over only to earlier-generated tokens and not the original prompt tokens. $l$-grams are taken instead for boundary indices with only $l-1 < n-1$ eligible tokens strictly left of it. We compute an integer seed for each $n$-gram $w$, as $h(K | w)$. Given a collection of seeds with their associated sequences, we deduplicate seeds across the collection. We do this by picking one instance of the seed \emph{at random} and remove all remaining instances from the collection. We ensure every sequence has at least one seed by adding a random seed not already used, if necessary. For each sequence $X_t$, we iterate through its new seeds $S_t$ (order does not matter) and compute the quantity $u_t = F_{|S_t|}\left(\sum_{i=1}^{|S_t|} F[S_{t,i}]\right)$. Finally, we compute $i^* = \operatorname{argmax}_{i=1}^{j} u_i^{m/c_i}$ and choose $X_{i^*}$ as our watermarked sequence of length at most $k$. To generate longer texts, we run the aforementioned process autoregressively until our stopping condition, where we condition the language model on $P$ and the tokens generated thus far.

One may notice that the LLM is expected to return at most $k$ tokens. This choice is made to simplify the analysis. In practice, the API may only return texts, not tokens, with no option to specify max length. The watermarker can generate $n$-grams from the responses however they would like (with custom tokenization or not). Furthermore, there is no constraint on $k$; $k$ can be set adaptively to the max length in each batch of returned responses. The main consideration though is smaller $k$ begets a stronger watermark, so if the adaptive $k$ is too large, detectability will suffer.

\myparagraph{Watermark detection}. We treat detection as a hypothesis test, where the null $\mathcal{H}_0$ is that the query text was \emph{not} watermarked with our scheme and secret key and the alternative $\mathcal{H}_1$ is that it was. While Bayesian hypothesis testing could be used, this would require choosing priors for both hypotheses, which could be challenging and a poor choice could lead to terrible predictions. Let $X$ be the query text. Akin to the encoding process, we extract $W$, the set of \emph{unique} $n$-grams from $X$, permitting smaller one near the left boundary. For each $n$-gram $w_t$ we compute $R_t = F\left[h\left(K | w_t\right)\right]$. Under $\mathcal{H}_0$ (assuming that the test $n$-grams are independent), $R_t \overset{iid}\sim F$, so $\sum_{t=1}^{|W|} R_t \sim F_{|W|}$  giving a $p$-value $p = 1-F_{|W|}\left(\sum_{t=1}^{|W|} R_t\right)$. Our detection score $s$ is $1-p$ (higher means more likely to be watermarked).

Another way to compute a $p$-value is to compute token-level $p$-values and, assuming they are independent, combine them using Fisher's method. This way, $p = 1-\chi^2_{2|W|}\left(-2 \sum_{t=1}^{|W|} \log\left(1-F(R_t)\right)\right)$. Furthermore, tests that incorporate the alternative distribution can be used --- the best example being the likelihood ratio test: $s = \sum_{t=1}^{|W|}\left(\log f_1(R_t) - \log f_0(R_t)\right)$, where $f_0$ and $f_1$ are the densities of $R_t$ under $\mathcal{H}_0$ and $\mathcal{H}_1$ respectively. For some choices of $F$ and under some assumptions, $f_1$ may be written explicitly. In other cases, one can \emph{estimate} $f_1$ by logging values of $R_t$ for the watermarked sequence as the encoding is run live or via simulation and then building a kernel density estimator. We consider these alternative detection strategies later for ablative purposes.

\myparagraph{Recursive watermarking}. Since our scheme requires only a black box that samples sequences, it can be applied iteratively or recursively. Consider the following. User 1 uses User 2's LLM service who uses User 3's LLM service, so on so forth until User $t$. Our scheme allows User $i$ to watermark its service with its secret key $K_i$.
Each user can then run detection using its key oblivious to whether other watermarks were embedded upstream or downstream. Furthermore, the users can cooperate in joint detection by sharing only $p$-values without revealing their secret keys.

Consider the special case that all users are actually the same entity in possession of $t$ distinct keys $(K_1, \dots, K_t)$. Then the iterative watermarking becomes a recursive one, where $K_i$ is used to watermark the result of watermarking with keys $(K_{i+1},\dots,K_t)$. The entity can run \textsc{Detect} to get a $p$-value for each key and these $t$ $p$-values can subsequently be combined using Fisher's method. We present this recursive scheme in Algorithm~\ref{algo:recursive}.

\myparagraph{White-box watermarking}. In the case of $k=1$, our scheme can be efficiently run for users who have white-box access --- with the next-token distribution in hand, one can sample a large number of candidate tokens without any inference calls to the model.

\myparagraph{Extensions}.
At its crux, the proposed scheme samples sequences of text from a service, divides each unique sequence into a bag of units (namely $n$-grams) where each unit is scored using a PRF and the scores are combined in an order-agnostic way. The strength of the watermark depends on the number of \emph{distinct} units across the candidate sequences and the robustness depends on how many of the units are kept intact after the attack. Although any symmetric monotone function can be used instead of the simple summation of the PRNs for each unit, we do not see any compelling reason to make our algorithm more general in this way. However, we briefly highlight some other possible extensions.

\emph{Beam search}. Rather than drawing i.i.d. samples from the model, one can apply our watermark selection to the sequences that arise from beam search, with the caveat that this would violate our distortion-free property.

\emph{Semantic watermarking}. Rather than use $n$-grams, the watermarker can extract a set of meaningful semantic units for each sampled text. Robustness may be improved as these units will largely remain intact under an attack like paraphrasing. On the other hand, many of the sampled sequences will have the same \emph{meaning}, so there may be a lot of duplicate units across the candidate sequences, which would degrade the watermark strength.

\emph{Paraphrasing}. Thus far, we assumed the service provides $m$ draws from the LLM. If $m$ is large, this can be prohibitively expensive. The resource-constrained may consider the following alternative: draw one sample from the LLM and feed it to a much cheaper paraphrasing model to generate $m$ paraphrases. The downside is that there may be a lot of duplicate $n$-grams across the candidate set.

\paragraph{Relation to similar work.}
With our algorithm presented, we now discuss its relation to two similar methods whose treatment we had deferred.

\textsc{SynthID}~\citep{dathathri2024scalable} is a white-box scheme which resembles the recursive variant we propose.
Specifically, when the sequence length is 1 and each watermarking party samples 2 sequences, then our encoding procedure overlaps with theirs. Both processes can be visualized via a binary tree where nodes represent tokens. At the very bottom, leaf nodes represent tokens sampled i.i.d. from the next-token distribution. The token for a non-leaf node is decided by applying a specific PRF determined by the node's level to the two children tokens. Scoring, or decoding, differs however in that while they propose training a Bayesian detector or zero-shot detection by combining PRF values for the unique $n$-grams across levels of the tree (equivalently, virtual watermarking parties) using a weighted average (or corresponding $z$-scores) that accounts for entropy differences across depths, we suggest combining $p$-values using Fisher's method with equal contributions across the layers. Furthermore, they suggest using a Bernoulli distribution instead of the standard uniform and consider much deeper trees (e.g. a depth of 30) than the ones we do in this work.

Most similar to our work is the \textsc{WaterMax} algorithm of \citet{giboulot2024watermax}. At a high level, their scheme is very similar to ours: i.i.d. responses are sampled from the LLM for a specific prompt, pseudorandom values are determined for each token in each sequence based on $n$-grams and an aggregate score is computed for each sequence. The sequence with the largest aggregate score is returned as the watermarked text. However, very crucially, their algorithm is not distortion-free, as ours is, and their theory does not capture the critical role of entropy on detectability. Meanwhile, we are able to formally and empirically quantify the dependence on entropy on performance and formulate provably optimal tests. Furthermore, we propose a recursive variant of the algorithm and also carefully study the effect of the distribution used for the PRF.

\section{Theory}
Our goal here is to show that our scheme is faithful to the model's next-token distribution and to give detection performance guarantees.
All proofs are in the Appendix.
\begin{theorem}[Distortion-free property]
\label{thm:distortion}
Let $X$ be any finite sequence and $P$ any prompt. Let $X_u \sim \texttt{LM}\left(\;\cdot\;|\; P\right)$ be the non-watermarked output of the conditional autoregressive language model. Let $X_w$ be the output of the watermarking procedure (\textsc{Watermark} in Algorithm~\ref{algo:flat}, for both recursive and non-recursive settings) for the same prompt and model and any choice of remaining input arguments with the constraint that $F$ is a continuous distribution. Furthermore, assume that the deduplicated seeds (determined by hashing the secret key and $n$-grams) across sequences, are conditionally independent given the counts of the sampled sequences. Then, $\mathbb{P}(X_u = X) = \mathbb{P}(X_w = X)$.
\end{theorem}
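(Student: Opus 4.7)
The plan is to reduce the theorem to a single invocation of \textsc{WatermarkSingle} and then exploit an exponential-clock argument for the argmax. The outer \textsc{Watermark} loop is autoregressive: conditional on the tokens produced so far, the next block is the output of \textsc{WatermarkSingle} with prompt $P \mid O$. So if I can show that, for every prompt $P'$, the output of \textsc{WatermarkSingle}$(F,K,m,n,P',k,\texttt{LM})$ has the same law as a single draw from $\texttt{LM}(\cdot\mid P';k)$, then an induction on the number of autoregressive iterations (and for the recursive variant, on the depth $t$, using that the inner ``$M$'' passed to \textsc{WatermarkSingle} is itself distortion-free by the inductive hypothesis) delivers the full statement.

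\textbf{Reduction to a single call.} Fix the prompt $P'$, write $p(\cdot) := \texttt{LM}(\cdot\mid P';k)$, and let $(Q_1,\dots,Q_m)\overset{iid}\sim p$. Let $\{(X_i,c_i)\}_{i=1}^{j}$ be the unique sequences together with their multiplicities, so $\sum_i c_i = m$. I will show
\[
\mathbb{P}\bigl(X_w = X \bigm| \{(X_i,c_i)\}\bigr) \;=\; \frac{c(X)}{m},
\]
where $c(X)=c_i$ if $X=X_i$ for some $i$ and $0$ otherwise. Once this is established, taking expectation and using linearity gives
\[
\mathbb{P}(X_w=X) \;=\; \mathbb{E}\!\left[\frac{c(X)}{m}\right] \;=\; \frac{1}{m}\sum_{k=1}^{m}\mathbb{P}(Q_k=X) \;=\; p(X) \;=\; \mathbb{P}(X_u=X).
\]

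\textbf{Uniformity of the scores $u_i$.} Condition on the counts. After deduplication, the seed sets $S_1,\dots,S_j$ are pairwise disjoint by construction, and the padding step ensures $|S_i|\ge 1$. By the theorem's assumption, the deduplicated seeds are conditionally independent across sequences; combined with continuity of $F$ (so the PRN draws at distinct seeds are i.i.d.\ $F$), the sum $\sum_{s\in S_i} F[s]$ has CDF $F_{|S_i|}$. The probability integral transform therefore gives $u_i \sim U(0,1)$, and by the same independence, $u_1,\dots,u_j$ are independent. Ties occur with probability zero, so the argmax is almost surely unique.

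\textbf{Exponential-clock argument.} Set $V_i := u_i^{m/c_i}$, so $i^* = \operatorname{argmax}_i V_i = \operatorname{argmin}_i (-\log V_i)$. Since $-\log u_i \sim \mathrm{Exp}(1)$, we have $-\log V_i = -(m/c_i)\log u_i \sim \mathrm{Exp}(c_i/m)$, and these are independent across $i$. The classical identity that the minimum of independent exponentials with rates $\lambda_1,\dots,\lambda_j$ is attained at index $i$ with probability $\lambda_i/\sum_k \lambda_k$ yields
\[
\mathbb{P}\bigl(i^* = i \bigm| \{(X_i,c_i)\}\bigr) \;=\; \frac{c_i/m}{\sum_k c_k/m} \;=\; \frac{c_i}{m},
\]
where the last equality uses $\sum_k c_k = m$. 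Summing over $i$ with $X_i = X$ gives the reduction claim from the previous paragraph.

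\textbf{Main obstacle.} The only delicate point is verifying that the $u_i$ are genuinely independent uniforms conditional on the counts: this is exactly where continuity of $F$ and the stated conditional-independence assumption on the deduplicated seeds are needed, and it is the linchpin that lets the exponential-minimum identity apply. Everything else, including the autoregressive loop in \textsc{Watermark} and the recursion in \textsc{WatermarkRecursive}, reduces to a straightforward induction in which the inductive hypothesis says the underlying ``model'' at each level is itself distortion-free, so \textsc{WatermarkSingle} preserves its distribution by the single-call argument above.
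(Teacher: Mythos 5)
Your proposal is correct and follows essentially the same route as the paper: reduce to a single call of \textsc{WatermarkSingle}, show the conditional scores $u_i$ are i.i.d.\ uniform given the counts, and conclude that the argmax selects $X_i$ with probability $c_i/m$ (the paper phrases this via the Gumbel-Max trick, which is the same exponential-minimum identity you use), then handle the autoregressive loop and recursion by induction. Your marginalization $\mathbb{E}[c(X)/m]=p(X)$ by linearity is a slightly slicker version of the paper's explicit Binomial computation, but the argument is the same.
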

Theorem \ref{thm:distortion} tells us that sampling tokens using our proposed scheme is, from a probabilistic perspective, indistinguishable from sampling from the underlying model, with the caveat that the unique seed values are conditionally independent given the counts of sequences. If we dismiss hash collisions as very low probability events, then since the key is fixed, this reduces to the assumption that unique $n$-grams across the sampled sequences are independent. How strong of an assumption this is depends on many factors such as $m$, the underlying distribution, and the counts $(c_1,\dots,c_j)$  themselves. One can construct cases where the assumption is reasonable and others where it is blatantly violated (e.g. if $n$-grams within a sequence are strongly correlated). We study how well this assumption holds in practice in the Appendix. One direction to making the assumption more palatable is to draw a fresh keys i.i.d. for \emph{each} hash call. This would obviously destroy detectability. As a trade-off, one can leverage a set of secret keys (i.e. by drawing keys uniformly at random from a key set), which may reduce distortion, but will hurt detection as each key in the set needs to be tested against.

\begin{theorem}[Lower bound on detection ROC-AUC]
\label{thm:rocauc_unif}
Consider the specific case of using flat (i.e. non-recursive) watermarking with $k=1$ and $F = U(0,1)$. Let $s_0$ be the score under null that the $T$ test tokens\footnote{To be more precise, these are $T$ \emph{unique} $n$-grams.},assumed to be independent, were generated without watermarking and $s_1$ be the score if they were. With $c_i$ as the number of times vocabulary token $i$ was sampled, we have the following lower bound on the detector's ROC-AUC.
\begin{align*}
\prob(s_1 \geq s_0) &\geq \frac{1}{1 + 1/(3T\lambda^2 \alpha^2)},\; \text{where}\\
\lambda = \frac{1}{\log(m)}\left(\frac{m}{m+1} - \frac{1}{2}\right) \; &\text{and}\; \alpha =\expect_c\left[-\sum_{i=1}^V \mathbf{1}[c_i > 0] \frac{c_i}{m} \log\left(\frac{c_i}{m}\right)\right].
\end{align*}
$\alpha$ represents the average Shannon entropy in the sampled next-token distribution.
\end{theorem}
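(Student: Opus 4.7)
The plan is to interpret the ROC-AUC as a Mann--Whitney probability and control its tail with a one-sided Chebyshev (Cantelli) inequality. The detector's score is $s = F_T(\sum_{t=1}^T R_t)$, strictly monotone in the sum of the $T$ (assumed independent) PRN values, so $\prob(s_1 \geq s_0) = \prob(\sum_t R_t^{(1)} \geq \sum_t R_t^{(0)})$, where $R_t^{(0)} \overset{iid}{\sim} \unif$ and the $R_t^{(1)}$ are i.i.d.\ copies of the PRN associated with a watermarked token. Writing $\Delta_t = R_t^{(1)} - R_t^{(0)}$ with mean $\mu$ and variance $\sigma^2$, Cantelli applied to $T\mu - \sum_t \Delta_t$ (mean zero, variance $T\sigma^2$) gives $\prob(\sum_t \Delta_t \geq 0) \geq 1/(1 + \sigma^2/(T\mu^2))$. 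Popoviciu's inequality on $R^{(1)} \in [0,1]$ together with $\variance(\unif) = 1/12$ already bounds $\sigma^2 \leq 1/4 + 1/12 = 1/3$, so only $\mu \geq \lambda\alpha$ remains.

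For the mean under $\mathcal{H}_1$, I would condition on the count vector $c = (c_1,\dots,c_j)$. Given $c$, the selected PRN is $u^* = u_{i^*}$ with $i^* = \argmax_i u_i^{m/c_i}$ and $u_i \overset{iid}{\sim} \unif$. A direct integration using independence yields
\[
\expect[u_i \, \mathbf{1}(i^* = i) \mid c] \;=\; \int_0^1 u \prod_{j \neq i} u^{c_j/c_i}\, du \;=\; \int_0^1 u^{m/c_i}\, du \;=\; \frac{c_i}{m + c_i}.
\]
Summing and setting $p_i = c_i/m$ gives $\expect[u^* \mid c] = \sum_i p_i/(1+p_i)$.

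Using $\sum_i p_i = 1$ rewrites the conditional mean gap as $\expect[u^* \mid c] - 1/2 = \sum_i p_i\bigl[1/(1+p_i) - 1/2\bigr]$, so the core step is the pointwise inequality
\[
\psi(p) \;:=\; \frac{1-p}{2(1+p)} - \lambda \log(1/p) \;\geq\; 0 \qquad \text{for all } p \in [1/m,\, 1].
\]
By the very definition of $\lambda$, $\psi(1/m) = \psi(1) = 0$. The stationarity condition $\psi'(p) = -1/(1+p)^2 + \lambda/p = 0$ reduces to the quadratic $\lambda p^2 + (2\lambda - 1)p + \lambda = 0$, whose roots have product $1$; hence exactly one critical point lies in $(0,1)$. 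A quick check that $\psi'(1) = \lambda - 1/4 < 0$, equivalent to the elementary bound $2(m-1) < (m+1)\log m$ for $m \geq 2$, forces that critical point to be a local maximum, and combined with the vanishing endpoint values this gives $\psi \geq 0$ on $[1/m,1]$. Multiplying by $p_i$ and summing produces $\expect[u^* \mid c] - 1/2 \geq \lambda H(c/m)$; an outer expectation over $c$ then yields $\mu \geq \lambda\alpha$. Substituting $\sigma^2 \leq 1/3$ and $\mu^2 \geq \lambda^2\alpha^2$ into the Cantelli bound delivers the claim.

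The main obstacle is the pointwise inequality above: the constant $\lambda$ is pinned by two-sided endpoint matching at $p = 1/m$ and $p = 1$, so any coarser substitute for $\log(1/p)$ (e.g.\ $\log(1/p) \leq (1-p)/p$) loses the stated rate. The structural fact that the stationarity quadratic has constant term equal to its leading coefficient (so its two positive roots multiply to $1$) is precisely what collapses the analysis to a short monotonicity argument rather than a case analysis on $m$.
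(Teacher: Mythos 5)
Your proposal is correct and follows essentially the same route as the paper's proof: reduce ROC-AUC to $\prob\bigl(\sum_t R_t^{(1)} \geq \sum_t R_t^{(0)}\bigr)$, apply Cantelli with Popoviciu's variance bound ($\sigma^2 \leq 1/4 + 1/12 = T/3$ after summing), compute the conditional mean of the selected PRN as $\sum_i p_i/(1+p_i)$, and lower-bound the mean gap via the pointwise inequality $\frac{1-p}{2(1+p)} \geq \lambda\log(1/p)$ on $[1/m,1]$ with the same endpoint-pinned $\lambda$. The only differences are minor and to your credit: you obtain the conditional mean by direct integration of $\expect[u_i\,\mathbf{1}(i^*=i)\mid c]$ where the paper first derives the $\text{Beta}(m/c_{i^*},1)$ law of the selected value via an exponential/Gumbel argument, and you supply an explicit argument (vanishing endpoints, reciprocal roots of the stationarity quadratic, sign of $\psi'(1)$) for the pointwise inequality that the paper asserts only as a ``tedious calculation.''
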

Theorem \ref{thm:rocauc_unif} connects detection performance to the language model's underlying distribution, number of sampled tokens $m$, and number of test samples $T$. More entropy and more test samples guarantee higher performance. When the model is extremely confident, $\alpha \to 0$ and so does our lower bound. Note that because $\alpha$ measures the entropy of the empirical distribution arising from \emph{sampling} tokens, it depends on both the underlying next-token probability distribution as well as $m$. Concretely, when conditioned on the next-token probabilities $p$, $c \sim \text{Multinomial}\left(m, p\right)$. The largest $\alpha$ is achieved when the nonzero $c_i$'s are 1, which can occur when the underlying distribution is uniform (maximal uncertainty) and/or $m$ is not large. In this case, $\alpha \to \log(m)$ and our bound goes to $1/\left(1 + 1/\left(3T\left(\frac{m}{m+1} - \frac{1}{2}\right)^2\right)\right)$. This quantity has very sharp diminishing returns with respect to $m$, so there may be little value in increasing $m$ beyond a certain point. When $m \to \infty$, the bound goes to $1/(1 + 4/(3T))$, which increases very quickly with $T$. A mere 50 test tokens guarantees at least $97\%$ ROC-AUC. We study the interplay of the various factors on our lower bound more carefully in the Appendix.

The intuitions here carry over to other choices of $F$ and $k > 1$, though formal bounds can be tricky to obtain because of difficulty quantifying the alternative distribution. The null distribution is easy --- $p$-values are $U(0,1)$ under $\mathcal{H}_0$, and as a result, we have a straightforward equality on the false positive rate.
\begin{theorem}[False positive rate]
\label{thm:fpr_general}
No matter the choice of watermarking settings, assuming that the unique test $n$-grams are independent, we have the following equality on the false positive rate of \textsc{Detect} (Algorithm~\ref{algo:flat}), using decision threshold $t$.
\begin{align*}
\text{FPR}= \prob_{\mathcal{H}_0}(s > t) = 1-t.
\end{align*}
This also holds for \textsc{DetectRecursive} (Algorithm~\ref{algo:recursive}) if we further assume the $p$-values across secret keys are independent.
\end{theorem}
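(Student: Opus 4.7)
The statement is essentially an application of the probability integral transform (PIT), applied once for the flat case and twice for the recursive case, so the main work is lining up the distributions carefully rather than any nontrivial analysis. I would structure the proof in two parts corresponding to \textsc{Detect} and \textsc{DetectRecursive}.

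\textbf{Flat case.} First I would argue that under $\mathcal{H}_0$ the text $X$ was generated without knowledge of the secret key $K$, so the seeds $h(K\mid w_t)$ for the distinct $n$-grams $w_t\in W$ are, modulo hash collisions, just fixed integers with no distributional relationship to $F$. Treating the PRNG draws $R_t = F[h(K\mid w_t)]$ as i.i.d.\ samples from $F$ then follows from the standing independence hypothesis on unique test $n$-grams together with the pseudorandomness of $F[\cdot]$. Next, by the very definition of $F_{|W|}$ as the CDF of a sum of $|W|$ i.i.d.\ draws from $F$, the random variable $Y := \sum_{t=1}^{|W|} R_t$ has CDF $F_{|W|}$. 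Since $F$ is continuous, so is $F_{|W|}$, and PIT gives $F_{|W|}(Y) \sim U(0,1)$. But the detection score returned by \textsc{Detect} is exactly $s = F_{|W|}(Y)$, so under $\mathcal{H}_0$ we have $s\sim U(0,1)$ and therefore $\prob_{\mathcal{H}_0}(s>t) = 1-t$.

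\textbf{Recursive case.} I would reuse the flat result as a black box. For each secret key $K_i$ in the recursive detector, the call to \textsc{Detect}$(F,X,K_i,n)$ returns a score $s_i$ which, by the flat argument applied to key $K_i$, is $U(0,1)$ under $\mathcal{H}_0$. Hence $P_i = 1 - s_i \sim U(0,1)$ as well. Under the added independence assumption across keys, $-2\log P_i \overset{\mathrm{iid}}{\sim} \chi^2_2$ (this is the standard fact powering Fisher's method, since $-\log U(0,1)$ is $\mathrm{Exp}(1)$), so $y := -2\sum_{i=1}^{t}\log P_i \sim \chi^2_{2t}$. Applying the $\chi^2_{2t}$ CDF (which is continuous) and PIT once more yields $\chi^2_{2t}(y) \sim U(0,1)$, and this is exactly the value returned by \textsc{DetectRecursive}. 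The FPR equality $1-t$ follows immediately.

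\textbf{Anticipated obstacles.} There is no real analytic difficulty; the only delicate point is stating precisely why, under $\mathcal{H}_0$, the PRNG outputs $F[h(K\mid w_t)]$ behave as if i.i.d.\ from $F$. I would address this by explicitly invoking the two hypotheses already in the theorem statement --- independence of the unique test $n$-grams and implicit dismissal of hash collisions (as discussed in the remark following Theorem~\ref{thm:distortion}) --- and appealing to the pseudorandomness of $F[\cdot]$ to pass from deterministic seeds to $F$-distributed draws. Once that reduction is in place, each step is a one-line invocation of PIT or the definition of $F_{|W|}$ / $\chi^2_{2t}$.
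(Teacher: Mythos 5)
Your proposal is correct and follows essentially the same route as the paper's proof: invoke the independence assumption to get $R_t \overset{iid}{\sim} F$ under $\mathcal{H}_0$, apply the probability integral transform to conclude $s \sim U(0,1)$, and then repeat the argument via Fisher's method and the $\chi^2_{2t}$ CDF for the recursive detector. Your write-up is simply a more explicit version of the paper's two-line argument, with the added (and reasonable) care about hash collisions and pseudorandomness.
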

Selecting distinct independent secret keys $(K_1,\dots,K_t)$ (and ignoring hash collisions that arise across calls to \textsc{Detect} within \textsc{DetectRecursive}), will help attain the necessary independence.

Although the alternative score distribution is generally intractable, with the strong assumption that there are no duplicate $n$-grams across the candidate sequences, then for a special choice of $F$, we can write the alternative in closed form, formulate the optimal detection test and write down false-positive and false-negative rates exactly.

\begin{theorem}[Optimal detection for Gamma]
\label{thm:distortion_gamma}
Assume that candidate sequences are unique with length $k$ and that the $n$-grams are independent and contain no duplicates. Suppose we choose $F = -\text{Gamma}\left(1/k, \beta\right)$ (flat scheme), for any rate parameter $\beta$. Let $F_0 = F$ with pdf $f_0$, $F_1 = -\text{Gamma}\left(1/k, m\beta\right)$ with pdf $f_1$, and $R$ the PRF values of the $T$ test tokens (unique $n$-grams), assumed to be independent. Then, $\forall i$, $R_i \overset{iid}\sim F_0$ under the null that the text was watermarked using our procedure and $R_i \overset{iid}\sim F_1$ otherwise. The uniformly most powerful test is the log-likelihood ratio test (LRT) with score
\begin{align*}
    s(R) = \sum_{i=1}^T \log\frac{f_1(R_i)}{f_0(R_i)}.
\end{align*}
Furthermore, for any decision threshold $t$ on score $s$, we have that:
\begin{align*}
     \text{FPR (Type-I error)} &= \mathbb{P}_{\mathcal{H}_0}(s > t) = \text{Gamma}(T/k, \beta)\left(Q(t)\right), \; \text{and}\\
    \text{FNR (Type-II error)} &= \mathbb{P}_{\mathcal{H}_1}(s \leq t) = 1-\text{Gamma}(T/k, m\beta)\left(Q(t)\right), \; \text{where}\\
    Q(t) &= \frac{T \log(m)/k - t}{(m-1)\beta}.
\end{align*}
\end{theorem}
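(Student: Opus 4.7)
The plan is to separately characterize the distribution of the per-token PRF values $R_i$ under $\gH_0$ and $\gH_1$, then invoke Neyman--Pearson to identify the log-likelihood ratio test as uniformly most powerful, and finally evaluate the rejection event and both error probabilities using the fact that sums of independent $\mathrm{Gamma}(\cdot,\lambda)$ variables with a common rate $\lambda$ remain Gamma with shapes added.

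The null law is immediate: for non-watermarked text the unique hashed $n$-grams yield fresh seeds, so $R_i = F[h(K|w_i)] \overset{iid}{\sim} F_0 = -\mathrm{Gamma}(1/k,\beta)$. The alternative is the substantive step, and I would handle it via the classical Beta--Gamma / Dirichlet decomposition of i.i.d.\ Gammas. Write $X_{i,j} = -R_{i,j} \overset{iid}{\sim} \mathrm{Gamma}(1/k,\beta)$ across $i \in [m]$ and $j \in [k]$; per sequence, $S_i := \sum_j X_{i,j}$ is $\mathrm{Exp}(\beta)$ and the normalized proportions $D_i := X_{i,\cdot}/S_i$ are $\mathrm{Dirichlet}(1/k,\dots,1/k)$ and independent of $S_i$. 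Because all $c_i = 1$ and $F_k$ is monotone, the scheme picks $i^* = \argmax_i u_i = \argmin_i S_i$, which is a function of $(S_1,\dots,S_m)$ alone. Hence $S^* := S_{i^*}$ is a minimum of $m$ i.i.d.\ $\mathrm{Exp}(\beta)$ variables, i.e.\ $\mathrm{Exp}(m\beta) = \mathrm{Gamma}(k\cdot 1/k,\, m\beta)$, while $D_{i^*}$ is still $\mathrm{Dirichlet}(1/k,\dots,1/k)$ and remains independent of $S^*$ because each $D_\ell$ is independent of every $S_r$. Invoking the converse direction of the Beta--Gamma decomposition---if independent $S \sim \mathrm{Gamma}(k\alpha,\lambda)$ and $D \sim \mathrm{Dirichlet}(\alpha,\dots,\alpha)$ then $SD_j$ are i.i.d.\ $\mathrm{Gamma}(\alpha,\lambda)$---gives $X_{i^*,j} \overset{iid}{\sim} \mathrm{Gamma}(1/k, m\beta)$, so $R_{i^*,j} \overset{iid}{\sim} F_1$.

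With both laws reduced to fully specified i.i.d.\ product measures, Neyman--Pearson delivers the LRT $s(R) = \sum_i \log(f_1(R_i)/f_0(R_i))$ as UMP at every level. Subtracting the two $-\mathrm{Gamma}(1/k,\lambda)$ log-densities cancels the shape-dependent terms and leaves the affine statistic $s(R) = (T/k)\log m + (m-1)\beta \sum_i R_i$, so $\{s > t\}$ coincides with $\{-\sum_i R_i < Q(t)\}$ for $Q(t) = (T\log m / k - t)/((m-1)\beta)$. Under $\gH_0$ and $\gH_1$ the quantity $-\sum_i R_i$ is respectively $\mathrm{Gamma}(T/k,\beta)$ and $\mathrm{Gamma}(T/k, m\beta)$ (independence across test $n$-grams coming from the no-duplicate hypothesis, and under $\gH_1$ from the per-coordinate i.i.d.\ claim above), and substituting into $\mathbb{P}(-\sum_i R_i < Q(t))$ and $1 - \mathbb{P}(-\sum_i R_i < Q(t))$ yields the stated FPR and FNR.

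The main obstacle is the alternative law: one must notice that the $\argmax$ selection rule depends only on the Gamma scale factors $S_i$ and is therefore statistically independent of the Dirichlet-distributed within-sequence proportions $D_i$. That independence is what preserves the law of $D_{i^*}$ after conditioning on the winner and lets the converse Beta--Gamma turn ``$S^* \sim \mathrm{Exp}(m\beta)$ together with $D_{i^*} \sim \mathrm{Dirichlet}(1/k,\dots,1/k)$ independent of $S^*$'' back into jointly i.i.d.\ $\mathrm{Gamma}(1/k, m\beta)$ coordinates. Everything else is Neyman--Pearson and Gamma algebra.
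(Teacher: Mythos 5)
Your proposal is correct and follows essentially the same route as the paper: identify the null law directly, reduce the selection rule to $\operatorname{argmin}_i S_i$ of i.i.d.\ $\mathrm{Exp}(\beta)$ sums, conclude the winning sum is $\mathrm{Exp}(m\beta)$, and finish with Neyman--Pearson plus Gamma algebra for the error rates. The one place you go beyond the paper is welcome: the paper passes from ``$\sum_j R_{i^*,j}\sim -\mathrm{Exp}(m\beta)$'' to ``each $R_{i^*,j}\sim -\mathrm{Gamma}(1/k,m\beta)$ i.i.d.'' with a bare ``thus,'' whereas your Beta--Gamma/Dirichlet decomposition --- the selection depends only on the scales $S_i$, so the winner's Dirichlet proportions retain their law and independence from $S^*$ --- is precisely the justification that step needs.
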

In the Appendix, we use Theorem \ref{thm:distortion_gamma} to study the impact of $k$, $m$, and $T$ on TPR at fixed FPR. For example, with $T=100$, $k=50$, $m=64$, $\beta=1$, we can achieve $99.9\%$ TPR at $1\%$ FPR.

For other choices of $F$, we can estimate $f_1$ via simulation. If we assume candidate sequences have the same length $k$ with no duplicate $n$-grams, then we can fill an $m \times k$ matrix with i.i.d. draws from $F$ and pick the first element of the row with the largest row-sum (among the $m$). We do this until we have sufficiently large (e.g. 10,000) samples from $f_1$. We apply a Gaussian kernel-density estimator where the bandwidth is chosen using Scott's rule~\citep{scott2015multivariate} to estimate $f_1(r)$ for test value $r$. Despite having $f_0$ in closed-form, for consistency, we can also estimate it non-parametrically by drawing from $F$.

\section{Experiments}

In this section, we compare the performance of our scheme with that of prior work.

\subsection{Models, Datasets, and Hyperparameters}
\myparagraph{Models and Datasets}.
Our main model and dataset is \textsc{Mistral-7B-instruct}~\citep{jiang2023mistral} hosted on Huggingface\footnote{\url{https://huggingface.co/mistralai/Mistral-7B-Instruct-v0.1}} with bfloat16 quantization, and \emph{databricks-dolly-15k}\footnote{\url{https://huggingface.co/datasets/databricks/databricks-dolly-15k}}~\citep{DatabricksBlog2023DollyV2}, an open source dataset of instruction-following examples for brainstorming, classification, closed QA, generation, information extraction, open QA, and summarization. We use prompts from the brainstorming, generation, open QA (i.e. general QA), and summarization categories, whose human responses are at least 50 tokens long (save one example, which was removed because the prompt was extremely long). For each of the 5233 total prompts, we generate two non-watermarked responses --- a stochastic one using temperature 1, and the greedy / argmax decoding --- along with a watermarked one for each scheme. We always force a minimum (maximum) of 250 (300) new tokens by disabling the stop token for the first 250 tokens, re-enabling it, and stopping the generation at 300, regardless of whether the stop token was encountered.
To simulate real-world use, we de-tokenize the outputs to obtain plain text, and re-tokenize them during scoring. We study performance as a function of token length $T \leq 250$ by truncating to the first $T$ tokens.

For completeness, we also present the key results when \textsc{Gemma-7B-instruct}~\citep{team2024gemma}\footnote{\url{https://huggingface.co/google/gemma-7b-it}} with bfloat16 quantization is applied to the test split of \emph{eli5-category}\footnote{\url{https://huggingface.co/datasets/rexarski/eli5_category}}. Prompts are formed by concatenating the the \emph{title} and \emph{selftitle} fields. Only examples with non-empty \emph{title} and whose prompt contains a \emph{?} are kept --- for a total of 4885 examples.

\myparagraph{Hyperparameters}.
We consider the following choices of CDFs $F$ / $F_k$. \textbf{(1)} $F = U(0, 1)$ and $F_k = \text{IrwinHall}(k)$. \textbf{(2)} $F = N(0, 1)$ and $F_k = N(0, k)$. \textbf{(3)} $F = -\text{Gamma}(1/k, 1)$ and $F_k = -\text{Exp}(1)$. \textbf{(4)} $F = \chi_2^2$ and $F_k = \chi_{2k}^2$. We set $n=4$.

\subsection{Evaluation Metrics}
We evaluate performance using three criteria. For all metrics and methods, standard errors are computed by running watermark generation multiple times, keeping the secret keys fixed. 

\myparagraph{Detectability}.
How well can we discriminate between non-watermarked and watermarked text? We choose non-watermarked text to be text generated by the same model, just without watermarking applied during decoding. There are three reasons for choosing the negative class in this way. Firstly, it makes controlling for text length easier as we can generate as many tokens as we do for watermarked samples --- in contrast, human responses are of varying lengths. Secondly, watermarked text has far more token / $n$-gram overlap with its non-watermarked counterpart than the human reference, which makes detection more challenging. Lastly, since one intended use case of our scheme is for third-party users of a shared LLM service, users may want to distinguish between their watermarked text and non-watermarked text generated by the same LLM service.

Our primary one-number metric is ROC-AUC for this balanced binary classification task. Since performance at low FPR is often more useful in practice, we report the partial ROC-AUC (pAUC) for FPR $\leq$ a target FPR (taken to be 1\%), which we find to be more meaningful than TPR at the target FPR. We look at performance as a function of length by truncating the positive and negatives samples to lengths $\{25, 50, 75, 100, 150, 200, 250\}$. To understand aggregate performance, we pool all different length samples together and compute one ROC-AUC. Here, it is paramount that the detection score be length-aware to ensure that a single decision threshold can be used across lengths.

\myparagraph{Distortion}.
Our scheme, along with most of the baselines, boasts a \emph{distortion-free} property. This property comes with assumptions that are often violated in practice, for example by reuse of the secret key across watermarking calls. We would like to quantify the amount of distortion actually induced by the watermark but doing so is no straightforward task.

A meaningful metric is the KL-divergence between the true non-watermarked next-token distribution and the watermarked one conditioned on the same context, where the former is used as the reference, for all contexts that occur in the non-watermarked generation. This is challenging because while the explicit watermarked distribution is easily derived for \citet{kirchenbauer2023watermark}'s scheme which explicitly tweaks the probability of each token in the vocabulary, this is not the case for our scheme; we could \emph{estimate} it via simulations at each step but this would be computationally expensive. As an alternative, we measure the \emph{perplexity} of the watermarked text under the LLM sans watermarking. The caveat is that while perplexity is often used in the literature to measure text quality, the goal here is not necessarily to minimize it (which greedy decoding would do) but to verify that watermarking does not materially change perplexity levels compared to that of non-watermarked randomly sampled text. We also report \emph{likelihood} as the log-probabilities used in calculating perplexity can over-emphasize outliers.

\myparagraph{Quality}.
Watermarking may distort the text per the model, but does the distortion tangibly affect the \emph{quality} of the text? Quality can be challenging to define and measure --- one proxy is likelihood under a much larger model than the generator. Alternatively, one can run standard benchmark NLP tasks and use classic metrics like exact match, etc. We instead opt for using Gemini-1.5-Pro as an LLM judge and compute pairwise win rates for each watermark strategy against no watermarking (greedy decoding). We do this in two ways for each scheme --- (1) we compute win rates using a single response for each prompt and (2) we first ask the LLM judge to pick the best of 3 responses for each prompt and compute win rates using the best response. (2) represents the common practice of sampling a few generations from the LLM and selecting the best one using some criterion. It captures diversity, as methods that can express an answer in a few different good ways will have an advantage. A caveat with win rates is that they may not reflect the \emph{degree} by which one method is better or worse. For instance, if one strategy's output was always \emph{marginally} worse than no watermarking, the win rate would be 0\% --- the same as if it were \emph{much} worse.

\subsection{Adversarial Attacks}
An adversary in possession of watermarked text (but who lacks knowledge of the secret key) may try to evade detection. We study how detectability degrades under two attack strategies ---- \textit{random token replacement} and \textit{paraphrasing}.

\myparagraph{Random token replacement}.
Here, we take the watermarked tokens and a random $p$-percent of them are corrupted by replacing each with a random different token. $p$ is taken to be $\{10, 20, 30, 40, 50\}$.
This attack strategy is cheap for the adversary to carry out but will significantly degrade the quality of the text.

\myparagraph{Paraphrasing}.
In this attack, the adversary attempts to evade detection by paraphrasing the watermarked text. We use Gemini-1.5-Pro to paraphrase each non-truncated watermarked generation. Details are deferred to the Appendix.

\begin{table}[!t]
\small
\centering
\setlength{\tabcolsep}{5pt}
\begin{tabular}{r||c|c|c|c|c|c|c|c|c}
& PPL  & WR & WR (3) & AUC  & pAUC & C. AUC & C. pAUC & P. AUC & P. pAUC \\ \toprule
Max Std. Error                 & 0.03 & -      & -      & 0.1  & 0.3  & 0.2      & 0.3       & -         & -          \\ \midrule
Greedy Decoding                 & 1.37 & -      & -      & -    & -    & -        & -         & -         & -          \\
Random Sampling             & 3.50 & 49.6   & 65.3   & -    & -    & -        & -         &  -         &       -     \\ \midrule
Aaronson                      & 2.81 & 45.3   & 45.3   & 71.7 & 65.5 & 65.6     & 60.3      & 53.9      & 50.5       \\
Aaronson Cor.                     & 2.81 & 45.3      & 45.3      & 97.9 & 83.6 & 94.8     & 73.2      & 58.8      & 50.7       \\ \midrule
Kuditipudi                      & 3.55 & 50.3   & 67.3   & 87.8 & 76.6 & 87.2     & 74.4      & 75.9      & 53.2       \\ \midrule
\multirow{5}{*}{Kirchenbauer}
\makecell[r]{\hfill$0.5$}               & 3.39 & 49.6   & 66.6   & 73.2 & 52.0 & 71.0     & 51.4      & 49.0      & 49.8       \\
\makecell[r]{\hfill$1$}                 & 3.37 & 50.1   & 67.0   & 86.9 & 60.6 & 83.7     & 57.1      & 52.9      & 49.9       \\
\makecell[r]{\hfill$2$}                 & 3.69 & 47.9   & 64.1   & 97.0 & 83.3 & 95.4     & 77.4      & 58.4      & 50.3       \\
\makecell[r]{\hfill$3$}                 & 4.67 & 41.5   & 58.4   & 99.3 & 94.4 & 98.6     & 90.9      & 63.4      & 51.5       \\
\makecell[r]{\hfill$4$}                 & 5.81 & 26.0   & 41.2   & 99.8 & 98.4 & 99.6     & 96.8      & 66.4      & 52.7       \\ \midrule
\multirow{6}{*}{Flat $(k=1)$}
\makecell[r]{\hfill$2$}                      & 3.46 & 50.0   & 66.4   & 90.2 & 68.8 & 82.0     & 58.7      & 50.5      & 50.3       \\
\makecell[r]{\hfill$4$}                      & 3.36 & 50.8   & 67.0   & 95.8 & 82.9 & 90.3     & 70.5      & 51.3      & 50.6       \\
\makecell[r]{\hfill$16$}                     & 3.20 & 47.7   & 64.5   & 97.7 & 89.7 & 93.9     & 79.1      & 52.7      & 51.1       \\
\makecell[r]{\hfill$32$}                     & 3.06 & 48.4   & 65.3   & 97.8 & 90.2 & 94.2     & 80.0      & 53.0      & 50.8       \\
\makecell[r]{\hfill$512$}                    & 2.63 & 47.7   & 62.5   & 97.7 & 90.0 & 94.1     & 79.7      & 54.6      & 51.3       \\
\makecell[r]{\hfill$1024$}                   & 2.61 & 47.7   & 62.2   & 97.7 & 90.0 & 94.0     & 79.7      & 52.8      & 51.1       \\ \midrule
\multirow{4}{*}{Flat $(k=10)$}
\makecell[r]{\hfill$2$}                      & 4.10 & 46.1   & 62.2   & 83.4 & 55.8 & 73.6     & 52.0      & 49.0      & 50.0       \\
\makecell[r]{\hfill$4$}                      & 4.06 & 45.2   & 61.5   & 93.8 & 72.7 & 85.7     & 59.4      & 51.3      & 50.3       \\
\makecell[r]{\hfill$16$}                     & 3.86 & 44.6   & 60.6   & 97.8 & 87.0 & 93.1     & 73.5      & 54.3      & 50.7       \\
\makecell[r]{\hfill$32$}                     & 3.80 & 43.0   & 60.8   & 98.2 & 89.0 & 94.0     & 76.7      & 55.0      & 50.8       \\ \midrule
\multirow{4}{*}{Flat $(k=50)$}
\makecell[r]{\hfill$2$}                      & 3.79 & 48.5   & 64.2   & 69.6 & 50.7 & 62.2     & 50.3      & 47.0      & 50.0       \\
\makecell[r]{\hfill$4$}                      & 3.76 & 47.7   & 63.9   & 82.9 & 53.5 & 71.9     & 51.3      & 49.4      & 50.0       \\
\makecell[r]{\hfill$16$}                     & 3.72 & 48.3   & 64.2   & 92.7 & 66.7 & 83.1     & 55.6      & 50.5      & 50.1       \\
\makecell[r]{\hfill$32$}                     & 3.67 & 47.3   & 63.9   & 94.2 & 71.6 & 85.5     & 58.1      & 51.1      & 50.5       \\ \midrule
\multirow{4}{*}{Rec. $(k=1)$}
\makecell[r]{\hfill$4$}                      & 3.41 & 49.0   & 65.0   & 93.4 & 75.5 & 86.3     & 63.2      & 48.4      & 50.4       \\
\makecell[r]{\hfill$16$}                     & 3.33 & 49.2   & 66.2   & 95.4 & 82.9 & 90.6     & 71.8      & 53.4      & 50.8       \\
\makecell[r]{\hfill$32$}                     & 3.29 & 48.4   & 64.3   & 96.3 & 85.0 & 91.6     & 73.5      & 49.4      & 50.8       \\
\makecell[r]{\hfill$512$}                    & 3.05 & 48.3   & 64.5   & 97.2 & 87.9 & 92.6     & 76.5      & 50.4      & 51.2       \\ \midrule
\multirow{3}{*}{Rec. $(k=10)$}
\makecell[r]{\hfill$4$}                      & 4.13 & 45.7   & 61.3   & 88.6 & 61.7 & 78.8     & 53.8      & 48.0      & 50.0       \\
\makecell[r]{\hfill$16$}                     & 4.13 & 43.7   & 59.7   & 93.4 & 74.0 & 86.8     & 61.6      & 52.9      & 50.4       \\
\makecell[r]{\hfill$32$}                     & 4.06 & 42.9   & 59.5   & 94.8 & 76.9 & 88.1     & 63.2      & 50.6      & 50.3       \\ \midrule
\multirow{3}{*}{Rec. $(k=50)$}
\makecell[r]{\hfill$4$}                      & 3.79 & 48.2   & 63.8   & 74.2 & 51.2 & 65.1     & 50.5      & 46.5      & 49.9       \\
\makecell[r]{\hfill$16$}                     & 3.77 & 47.0   & 64.0   & 81.2 & 54.5 & 73.3     & 51.9      & 51.4      & 50.2       \\
\makecell[r]{\hfill$32$}                     & 3.79 & 47.2   & 63.3   & 83.3 & 55.7 & 74.4     & 52.2      & 49.4      & 50.0       \\ \bottomrule
\end{tabular}
\caption{Main table of results, showing our black-box scheme (\emph{Flat}) and its recursive variant (\emph{Rec.}) for various $k$'s and $m$'s, along with baselines. The $k=1$ setting can be executed very efficiently when white-box access is available. PPL, WR and WR (3) refer to perplexity, win rate of a single response, and win rate of the best-of-3 responses respectively. pAUC is ROC-AUC up to max FPR of 1\%. C and P stand for 10\% corruption and paraphrasing attack. For paraphrasing, target lengths of \{150, 200, 250\} are used in calculating AUC and pAUC as performance is essentially random on shorter lengths. The standard errors are quite small and the maximum across rows is shown for each column. AUCs, pAUCS, and their standard errors are scaled by 100.}
\label{table:bigtable}
\end{table}

\begin{figure*}[!t]
    \centering
    \includegraphics[width=0.49\textwidth]{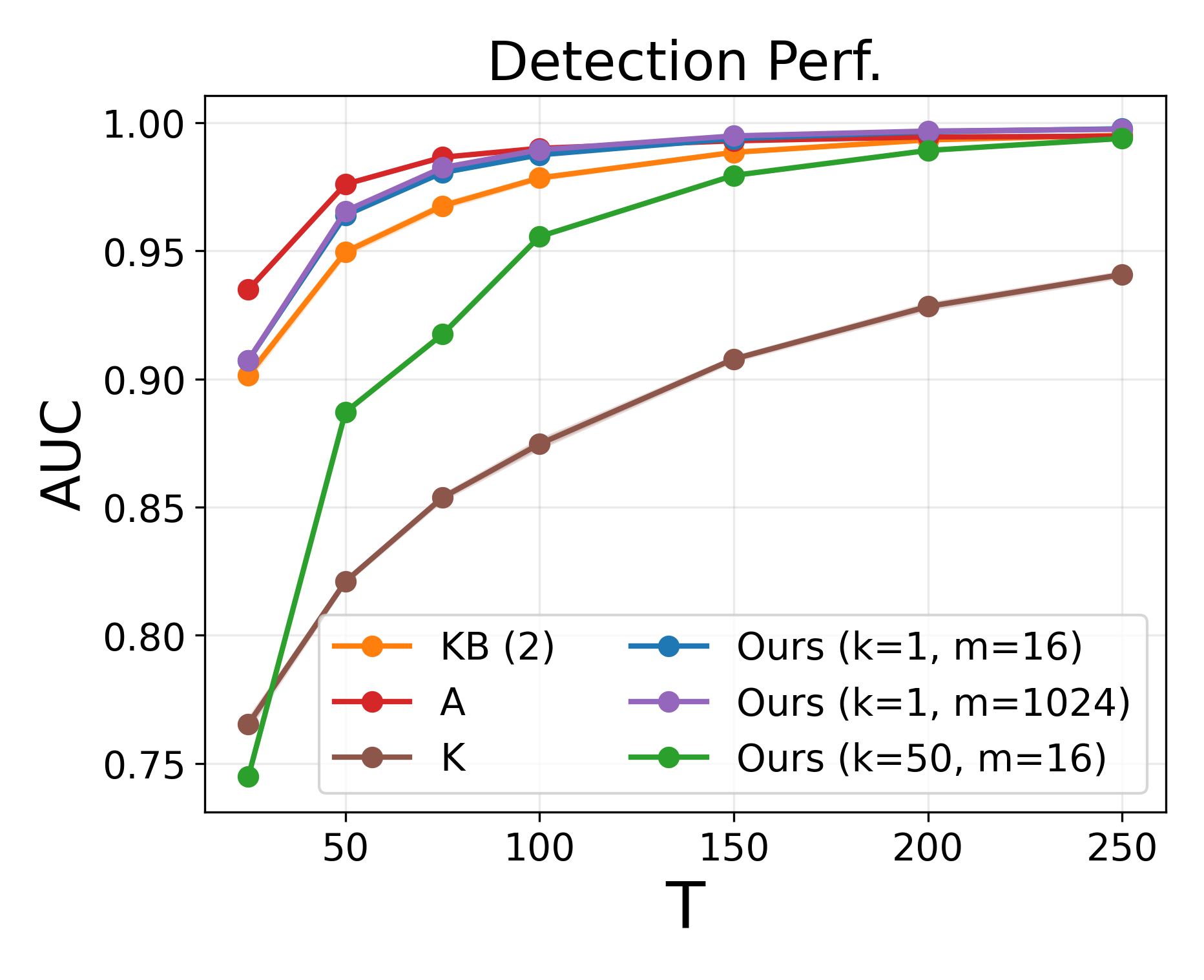}
    \includegraphics[width=0.49\textwidth]{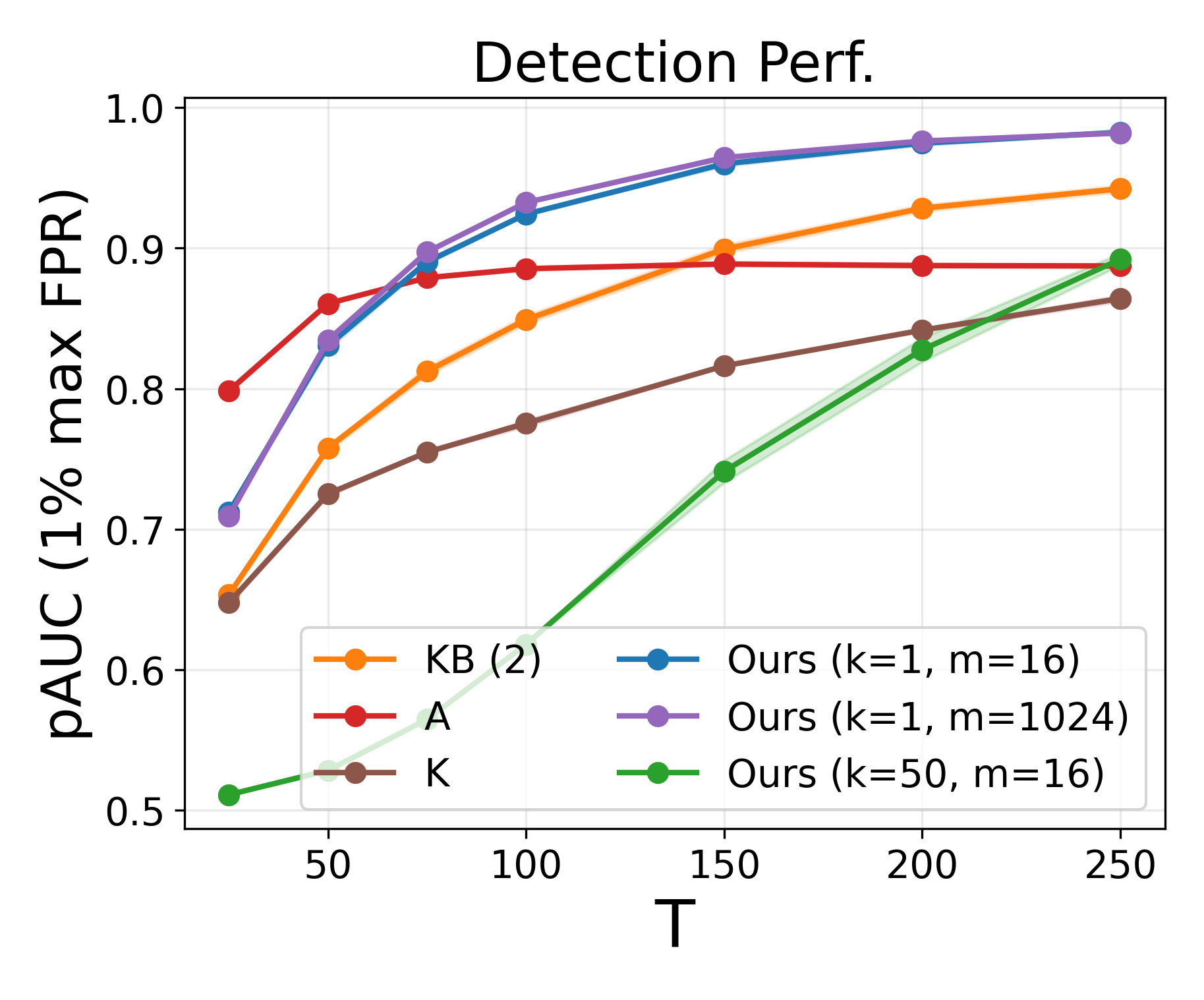}\\
    \includegraphics[width=0.49\textwidth]{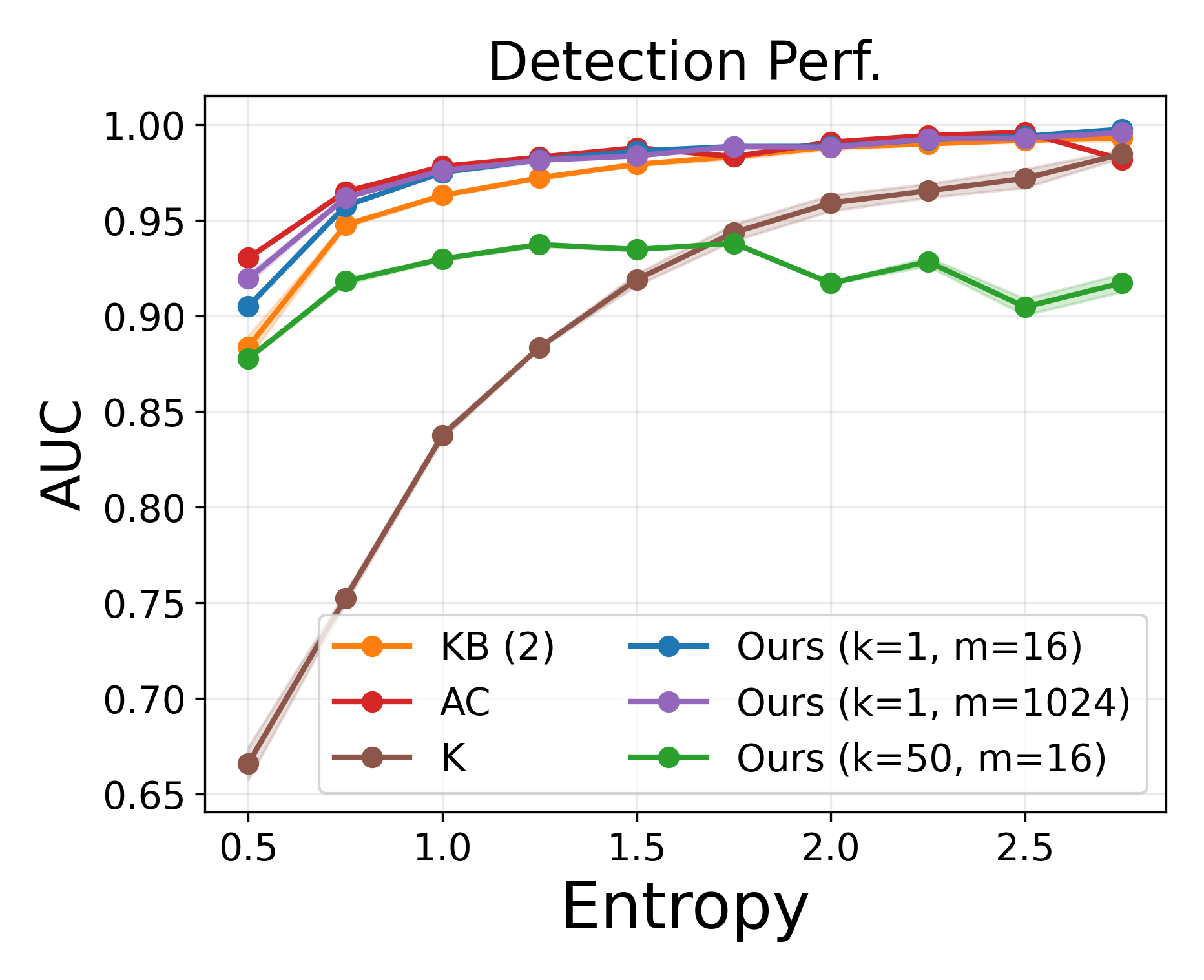}
    \includegraphics[width=0.49\textwidth]{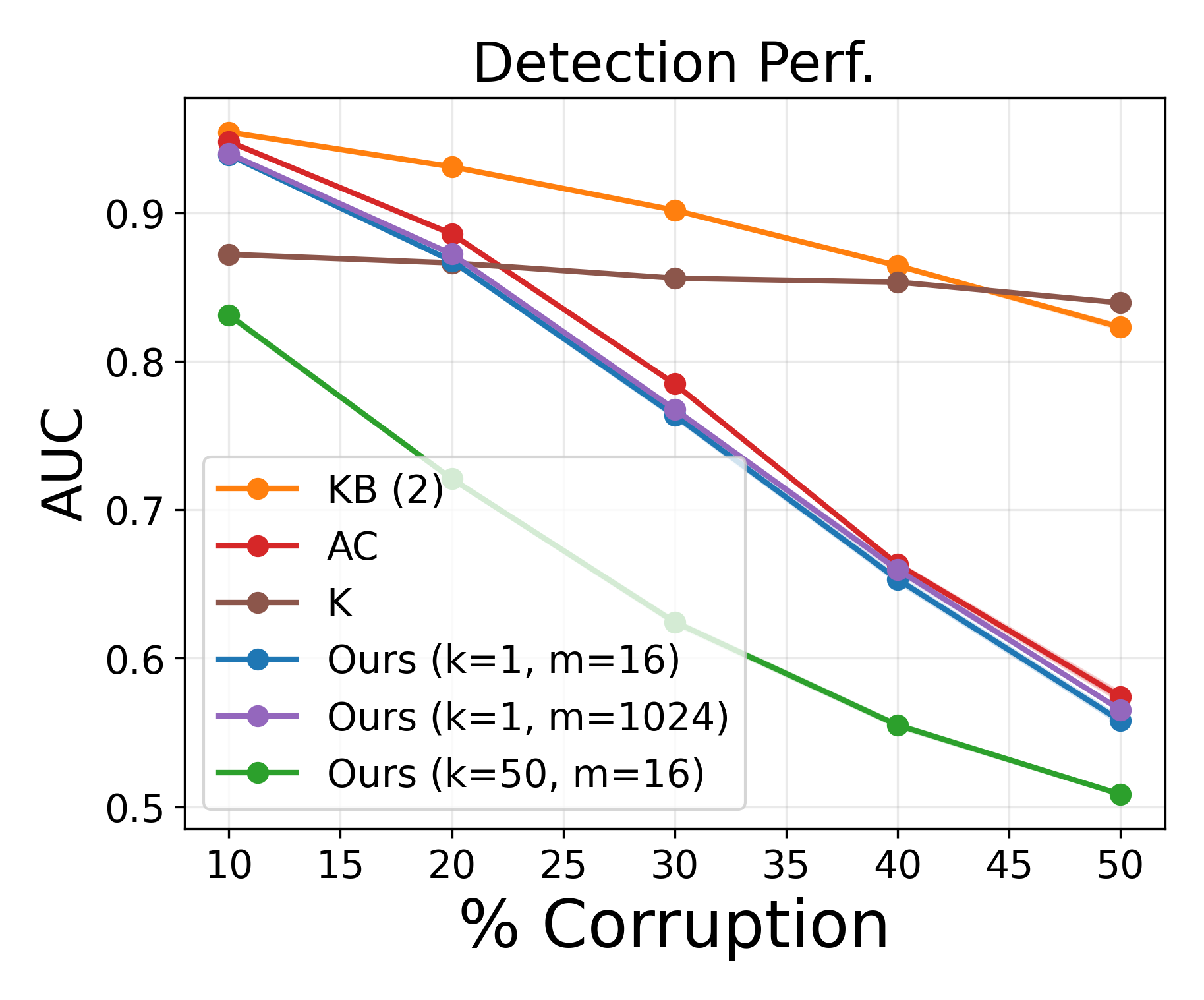}
    \caption{Performance of our flat scheme. \textbf{Top}: Detection AUC and pAUC with 1\% max FPR for a range of target text lengths when there is no corruption. \textbf{Bottom Left}: AUC (mixed $T$'s) as a function of the average non-watermarked response entropy of the examples used in the calculation. $x$-coordinate $x$ corresponds to the bucket of examples whose entropy is between $[x-0.25, x]$ nats. \textbf{Bottom Right}: Effect of amount of random token corruption on AUC (mixed $T$'s).}
    \label{fig:len_ent_cor}
\end{figure*}

\subsection{Baselines}
The watermark schemes we consider here operate token-by-token in the autoregressive decoding loop.
Let $p$ be the next-token probability distribution. Higher detection scores indicate higher confidence that the query text is watermarked.

\myparagraph{Aaronson (A)}. \citet{aaronson} computes a PRN for each token $i$ in the vocabulary as $u_i = U(0, 1)[h(K|w|i)]$, where $w$ is the preceding $(n-1)$-gram, $K$ is the secret key and $h$ is a cryptographic hash. Token $i^*$ is selected, where $i^* = \operatorname{argmax}_i u_i^{1/p_i}$. At test time, $n$-grams $\{w_i\}_{i=1}^T$ are extracted from the query test and the detection score $s_A = -\sum_{i=1}^T \log\left(1-R_i\right)$, where $R_i = U(0,1)\left[h(K|w_i)\right]$.
$n$ is set to 4. This choice strikes a good balance between generation quality / diversity and robustness to attacks. The scheme boasts a \emph{distortion-free} property, but the generated text is a deterministic function of the prompt --- i.e. only one generation is possible conditioned on a particular prompt.

\begin{remark}
If $k=1$ and $F = U(0, 1)$, then our watermark encoding can be viewed as a stochastic version of \cite{aaronson}'s. As $m\to\infty$, $c_t / m \overset{a.s.}{\to} p_t$, where $p_t$ and $c_t$ are the probability and observed occurrences of token $t$.
\end{remark}

\myparagraph{Aaronson Corrected (AC)}.
\citet{aaronson}'s detection score $s_A$ is not length-aware and consequently a single decision threshold across scores involving various lengths results in poor performance, as we later show. Observing that $s_A$ is a sum of (negative) log $p$-values, $s_A \sim \text{Gamma}(T, 1)$, or equivalently, $2 s_A \sim \chi_{2T}^2$ under the null that all test tokens are non-watermarked. We propose the new \emph{corrected} detection score, $s_{AC} = \text{Gamma}(T, 1)(s_A) =\chi_{2T}^2(2 s_A)$. For completeness we also experiment with a $p$-value computed in the way we do for our method --- concretely as, $\text{IrwinHall}(T)\left(\sum_{i=1}^T R_i\right)$. Note that both transformations are monotonic so they have no effect on ROC-AUC when T is \emph{fixed}.

\myparagraph{Kirchenbauer (KB)}. \citet{kirchenbauer2023watermark} uses the $n$ previous tokens to pseudorandomly partition the vocabulary for the next token into two lists: a green list of size $\gamma V$ and a red list consisting of the remainder. A positive bias of $\delta$ is added to the logits of the green list tokens while those of the red list are left unchanged. This has the effect of modifying $p$ so that green list tokens are more probable. The score for a text consisting of $T$ tokens, $T_g$ of which were found to be green is, $s_{KB} = (T_g - \gamma T) / \sqrt{T \gamma (1 - \gamma)}$.
We incorporate the latest updates to the algorithm\footnote{\url{https://github.com/jwkirchenbauer/lm-watermarking}} such as including the current token in the $n$-gram and skipping duplicate $n$-grams at test time. We set $n=4$, $\gamma=0.25$, and $\delta \in \{0.5, 1, 2, 3, 4\}$.

\myparagraph{Kuditipudi (K)}. A drawback of using the last $n$ tokens as a basis for the PRF is that changing just one of them changes the output and hurts detection. \citet{kuditipudi2023robust} addresses this limitation as follows. Consider a secret, finite ordered list of seeds of length $k$. Start watermarking by selecting a position in the seed list uniformly at random and apply the selection rule of \citet{aaronson} with the PRNG seeded to the current value. Advance to the next seed in the list (wrap-around if you are at the end) and repeat. Scoring is done by conducting a permutation test evaluating how compatible the query text is with the specific list of seeds used during encoding as opposed to any other random list of seeds of the same length.
As the random starting position is not known during scoring, an alignment score based on the Levenshtein distance is given that considers alignments of various subsequences of the text and seeds. The proposed method is quite similar to \citet{aaronson} with the difference of using a fixed list of seeds (instead of input tokens to determine the seed) and using a permutation test for scoring. The upside is robustness to token substitution attacks; the downside is significantly higher computational cost for scoring. Larger $k$ offers more diversity and quality in generation but comes with costlier and weaker detection. The scheme is distortion-free. Following their work, we let $k=256$ and accelerate the permutation test by pre-computing 5000 reference values for the secret list using snippets from the train set of \emph{C4-realnewslike}~\citep{2019t5} at the various target lengths we evaluate on.

\subsection{Experimental Results}

Table~\ref{table:bigtable} shows results for baselines and our scheme using $F=U(0,1)$ and $p$-values for scoring, as detailed in Algorithms~\ref{algo:flat} and \ref{algo:recursive}. For the recursive scheme, depth is $\text{lg}(m)$ (i.e. $m=2$ for each virtual watermarker). Here, the negative class is non-watermarked argmax/greedy generations. Results for using stochastic (temperature 1) generations as the negative as well as the average likelihood scores are presented in Table~\ref{table:bigtable_stochastic} (Appendix); the trends remain the same. We summarize our observations on \textsc{Mistral-7B-instruct} on \emph{databricks-dolly-15k}, which also hold for \textsc{Gemma-7B-instruct} on \emph{eli5-category} (presented in the Appendix).

\subsubsection{Overall performance of our flat and recursive schemes}

\myparagraph{Our scheme is a competitive option for white-box watermarking}. Is it better to use our method or alternatives in the white-box setting? When $k=1, m=1024$, we are able to achieve lower perplexity (2.61 vs. 2.81), better diversity (62.2\% vs. 45.3\% on best-of-3 win rates) than and comparable detection performance to  \citet{aaronson}. Furthermore, it has lower perplexity (2.61 vs. 3.55) and detection performance (97.7\% vs. 87.8\% AUC) than \citet{kuditipudi2023robust}. By cranking up $\delta$, \citet{kirchenbauer2023watermark} can achieve strong detection but at the expense of perplexity. When matched on perplexity, we achieve better detection. For example, $\delta = 0.5$ achieves 3.39 PPL and 73.2\% AUC compared to our 2.61 PPL and 97.7\% AUC. \textsc{Gemma-7B-instruct} on \emph{eli5-category} with $k=1, m=1024$ outperforms \citet{kuditipudi2023robust} and is on-par with \citet{aaronson} (see Appendix). \cite{kirchenbauer2023watermark} with $\delta = 0.5$ gives 1.649 PPL and 61.6\% AUC whereas $k=1, m=1024$ gets us 1.610 PPL with 93.2\% AUC and \emph{even} 1.645 PPL with 89.7\% AUC when $k=50, m=16$ (black-box).

\myparagraph{Flat watermarking outperforms recursive.} Across metrics and settings we see that the flat scheme outperforms its recursive counterpart, suggesting it is more effective when a strong signal is embedded using a single key rather than when multiple weak signals are embedded with different keys. For example, when $k=1, m=32$ flat (recursive) PPL and AUC are 3.06 (3.29) and 97.8\% (96.3\%) respectively.

\subsubsection{Effects of Hyperparameters}

\myparagraph{Increasing $m$ improves perplexity but hurts diversity.}
Across $k$'s, we observe that perplexity decreases as $m$ increases, but that win rates, especially when best-of-3 generations are used, decrease. For example, when $k=1$, increasing $m$ from 2 to 1024 decreases perplexity from 3.46 to 2.61 but also drops the best-of-3 win rate from 66.4\% to 62.2\%. As remarked earlier, as $m\to\infty$, $c_t/m \to p_t$ and our scheme becomes less diverse --- deterministic conditioned on the prompt, like Aaronson's. On the flip side, large $m$ reduces sampling noise which drives down perplexity.

\myparagraph{Increasing $m$ improves detection but has diminishing returns.}
Across the board we see that detection improves as $m$ increases, but there are diminishing returns. For example, when $k=1$, our AUC increases from 90.2\% to 95.8\% as $m$ goes from 2 to 4, but flattens out when $m$ hits 16. This corroborates our theoretical intuition from Theorem~\ref{thm:rocauc_unif} which is further explored in Figure~\ref{fig:auc_theory_m_and_T} (Appendix).

\myparagraph{For fixed $m$, increasing $k$ hurts detection performance.} For fixed $m$ and target generation length $T$, increasing $k$ gives us fewer opportunities (fewer calls to \textsc{WatermarkSingle}) to inject the watermark signal, and detection consequently suffers. For example, when $m=32$, AUC drops from 97.8\% to 94.2\% when $k$ increases from 1 to 50.

\myparagraph{$U(0,1)$ slightly outperforms alternative distributions. Flat distributions may offer better robustness to attacks.}
In Table~\ref{table:alt_dists} (Appendix), we see that $U(0,1)$ fares comparably to $N(0,1)$ and slightly outperforms $\chi_2^2$ both on detection and perplexity. For example, when $k=50, m=2$, $U(0,1)$ and $\chi_2^2$ have AUCs of 69.6\% and 68.1\% respectively. Furthermore, we find evidence that $U(0,1)$ offers better protection to attacks. For example, when $k=50, m=32$, the AUC for $U(0,1)$ ($\chi_2^2)$ degrades from 94.2\% (94.5\%) to 85.5\% (84.5\%) in the presence of 10\% random token corruption. We provide some intuition for why flat distributions like $U(0,1)$ may be more robust than those with quickly decaying tails. Consider shaping the continuous $F$ so it approaches $\text{Bern}(p)$ (i.e. $f(x) \approx (1-p)\delta(x) + p\delta(x-1)$), where $p$ is very small. Suppose $k$ is large and $m$ is small. Then, the winning sequence $X_{i^*}$ will have extremely few (if any) of its $R_j$'s equal to $1$. If the text is unmodified and these few $n$-grams are kept intact, we are fine, but if they are corrupted in an attack, then the watermarking signal is effectively lost. In other words, flat distributions \emph{smear} the watermarking signal over more tokens than do sharper distributions, which localize the signal to few lucky token positions. However, whereas scoring with $F_T = \text{IrwinHall}(T)$ when $F = U(0, 1)$ involves computing $T$-fold convolutions or cardinal B-splines, when $F = N(0, 1)$, $F_T$ is easier to compute for very large $T$; specifically, $F_T(x) = F\left(x/\sqrt{T}\right)$.

\subsubsection{Observations on detection}
\myparagraph{Length correction of \cite{aaronson} is crucial.} Recall that the ROC-AUCs presented in Table~\ref{table:bigtable} are computed over a pool of different lengths. Our $p$-value-based score for \cite{aaronson} improves detection significantly; for example, AUC goes from 71.7\% to 97.9\%. Table~\ref{table:ac_sum_pvalue} (Appendix) shows that the \emph{sum-based} $p$-value correction fares a bit worse, which was a little surprising given that this worked the best for our scheme, even for the $k=1$ case.

\myparagraph{Sum-based $p$-values outperform Fisher ones.} In Table~\ref{table:unif_fisher} (Appendix), we observe that replacing our \emph{sum-based} $p$-value (where $\mathcal{H}_0$ is that $\sum_i^T r_i \sim F_T$) by a Fisher combination of token-level $p$-values hurts detection performance. For example, when $k=1, m=2$, AUC degrades from 90.2\% to 86.3\%. Note that when $k=1$, this setting corresponds exactly to a stochastic version of \cite{aaronson}.

\myparagraph{Likelihood-ratio scoring does well for large $k$ and small $m$, when its assumptions are more realistic.} In Table~\ref{table:lrt} (Appendix), we observe that likelihood-based scoring --- both when the distribution is Gamma and the exact likelihood ratio test (LRT) is used and under KDE with alternative distributions --- performs the best when the assumptions of no duplicate sequences or $n$-grams hold better. This happens when the sequences are long (large $k$) and when fewer sequences are sampled (small $m$). For example, when $k=1$, AUC \emph{degrades} monotonically from 78.1\% to 55.4\% as $m$ increases from $2$ to $1024$. In contrast, AUC under the $p$-value-based scoring \emph{increases} monotonically with $m$, from 90.2\% to 97.7\%. Larger $m$ increases the number of duplicate sequences sampled, increasing the importance of the latent exponent $m/c_i$ used in the scoring and deviating us further from the LRT assumptions. However, LRT has the potential to be an effective alternative when $k$ is large. For example, when $k=50$ and $m=32$, Uniform KDE-based LRT gives AUC of 95.5\% compared to $p$-value's 94.2\%.

\myparagraph{Detection performance improves sharply with test samples $T$}. Figure~\ref{fig:len_ent_cor} shows the effect of $T$ on AUC. We see sharp improvements with respect to $T$, even when $k$ is large and $m$ is small, highlighting the power of more test samples to counteract a weaker watermark signal.

\myparagraph{Entropy improves detection performance}. In Figure~\ref{fig:len_ent_cor} we bucket prompts based on the entropy of their non-watermarked response and then look at  detection AUC on samples in each bucket. As we expect, detection improves when the prompts confer more entropy in the response. This trend is more stark for our method.

\myparagraph{Paraphrasing can be extremely effective at destroying watermarks}. We observe that paraphrasing can effectively erase the watermark as detection performance for most methods is near random. \cite{kuditipudi2023robust} and \cite{kirchenbauer2023watermark} with large $\delta$ do better on AUC (but not so much on pAUC). Furthermore, in Figures~\ref{fig:len_ent_cor} and \ref{fig:pauc_pcor} (Appendix) we observe that large amounts of random token corruption hurts our scheme and \cite{aaronson}'s more than it that of \cite{kirchenbauer2023watermark} or \cite{kuditipudi2023robust}.

\section{Conclusion}
In this work, we present a framework for watermarking language models that requires nothing more than a way to sample from them. Our framework is general and extensible, supporting various real world use-cases, including the setting where the next-token probabilities are in fact available. We study its various components and the trade-offs that arise, provide formal guarantees for the theoretically-inclined as well as concrete recommendations for the practitioner.

\subsection*{Broader Impacts}
Our watermarking scheme, like other ones, can be used to understand the provenance of text --- this can be both good and bad. On one hand, it can be used to catch inappropriate uses of a language model (e.g. illicit applications that violate terms of service), or help users understand whether they can trust a piece of text (if it is LLM-generated and its factuality is suspicious, it may be a hallucination). On the other hand, it may violate user privacy --- LLM providers can see where users post the outputs of their models, and to make matters worse, users may not be able to detect if this is happening.

\subsubsection*{Acknowledgements}
The authors would firstly like to thank Rob McAdam, Abhi Shelat, and Matteo Frigo for their early, unpublished work on the topic and Sumanth Dathathri for critical technical feedback on it as well as on our work. They proposed and experimentally validated a white-box scheme whose encoding procedure looks identical to that of our flat one when $k=1$. Secondly, the authors would like to thank Ashkan Soleymani, John Kirchenbauer, and Mukund Sundararajan for insightful discussions that were instrumental in the development of this work.

\bibliography{custom}

@article{aaronson,
  title={Watermarking of Large Language Models},
  author={Aaronson, Scott},
  journal={Large Language Models and Transformers Workshop at Simons Institute for the Theory of Computing},
  year={2023}
}

@article{kuditipudi2023robust,
  title={Robust distortion-free watermarks for language models},
  author={Kuditipudi, Rohith and Thickstun, John and Hashimoto, Tatsunori and Liang, Percy},
  journal={arXiv preprint arXiv:2307.15593},
  year={2023}
}

@article{team2024gemma,
  title={Gemma: Open models based on gemini research and technology},
  author={Team, Gemma and Mesnard, Thomas and Hardin, Cassidy and Dadashi, Robert and Bhupatiraju, Surya and Pathak, Shreya and Sifre, Laurent and Rivi{\`e}re, Morgane and Kale, Mihir Sanjay and Love, Juliette and others},
  journal={arXiv preprint arXiv:2403.08295},
  year={2024}
}

@article{devlin2018bert,
  title={Bert: Pre-training of deep bidirectional transformers for language understanding},
  author={Devlin, Jacob},
  journal={arXiv preprint arXiv:1810.04805},
  year={2018}
}

@article{giboulot2024watermax,
  title={WaterMax: breaking the LLM watermark detectability-robustness-quality trade-off},
  author={Giboulot, Eva and Furon, Teddy},
  journal={arXiv preprint arXiv:2403.04808},
  year={2024}
}

@article{dathathri2024scalable,
  title={Scalable watermarking for identifying large language model outputs},
  author={Dathathri, Sumanth and See, Abigail and Ghaisas, Sumedh and Huang, Po-Sen and McAdam, Rob and Welbl, Johannes and Bachani, Vandana and Kaskasoli, Alex and Stanforth, Robert and Matejovicova, Tatiana and others},
  journal={Nature},
  volume={634},
  number={8035},
  pages={818--823},
  year={2024},
  publisher={Nature Publishing Group UK London}
}

@article{kirchenbauer2023watermark,
  title={A watermark for large language models},
  author={Kirchenbauer, John and Geiping, Jonas and Wen, Yuxin and Katz, Jonathan and Miers, Ian and Goldstein, Tom},
  journal={arXiv preprint arXiv:2301.10226},
  year={2023}
}

@article{2019t5,
  author = {Colin Raffel and Noam Shazeer and Adam Roberts and Katherine Lee and Sharan Narang and Michael Matena and Yanqi Zhou and Wei Li and Peter J. Liu},
  title = {Exploring the Limits of Transfer Learning with a Unified Text-to-Text Transformer},
  journal = {arXiv e-prints},
  year = {2019},
  archivePrefix = {arXiv},
  eprint = {1910.10683},
}

@online{DatabricksBlog2023DollyV2,
    author    = {Mike Conover and Matt Hayes and Ankit Mathur and Jianwei Xie and Jun Wan and Sam Shah and Ali Ghodsi and Patrick Wendell and Matei Zaharia and Reynold Xin},
    title     = {Free Dolly: Introducing the World's First Truly Open Instruction-Tuned LLM},
    year      = {2023},
    url       = {https://www.databricks.com/blog/2023/04/12/dolly-first-open-commercially-viable-instruction-tuned-llm},
    urldate   = {2023-06-30}
}

@book{scott2015multivariate,
  title={Multivariate density estimation: theory, practice, and visualization},
  author={Scott, David W},
  year={2015},
  publisher={John Wiley \& Sons}
}

@article{thibaud2024black,
  title={Black-Box Detection of Language Model Watermarks},
  author={Thibaud, Gloaguen and Nikola, Jovanovi{\'c} and Robin, Staab and Martin, Vechev},
  journal={arXiv preprint arXiv:2405.20777},
  year={2024}
}

@article{hou2023semstamp,
  title={Semstamp: A semantic watermark with paraphrastic robustness for text generation},
  author={Hou, Abe Bohan and Zhang, Jingyu and He, Tianxing and Wang, Yichen and Chuang, Yung-Sung and Wang, Hongwei and Shen, Lingfeng and Van Durme, Benjamin and Khashabi, Daniel and Tsvetkov, Yulia},
  journal={arXiv preprint arXiv:2310.03991},
  year={2023}
}

@article{kirchenbauer2023reliability,
  title={On the reliability of watermarks for large language models},
  author={Kirchenbauer, John and Geiping, Jonas and Wen, Yuxin and Shu, Manli and Saifullah, Khalid and Kong, Kezhi and Fernando, Kasun and Saha, Aniruddha and Goldblum, Micah and Goldstein, Tom},
  journal={arXiv preprint arXiv:2306.04634},
  year={2023}
}

@article{carlini2024stealing,
  title={Stealing Part of a Production Language Model},
  author={Carlini, Nicholas and Paleka, Daniel and Dvijotham, Krishnamurthy Dj and Steinke, Thomas and Hayase, Jonathan and Cooper, A Feder and Lee, Katherine and Jagielski, Matthew and Nasr, Milad and Conmy, Arthur and others},
  journal={arXiv preprint arXiv:2403.06634},
  year={2024}
}

@article{huang2023towards,
  title={Towards optimal statistical watermarking},
  author={Huang, Baihe and Zhu, Banghua and Zhu, Hanlin and Lee, Jason D and Jiao, Jiantao and Jordan, Michael I},
  journal={arXiv preprint arXiv:2312.07930},
  year={2023}
}

@article{liu2023unforgeable,
  title={An unforgeable publicly verifiable watermark for large language models},
  author={Liu, Aiwei and Pan, Leyi and Hu, Xuming and Li, Shu’ang and Wen, Lijie and King, Irwin and Yu, Philip S},
  journal={arXiv preprint arXiv:2307.16230},
  year={2023}
}

@article{gu2023learnability,
  title={On the learnability of watermarks for language models},
  author={Gu, Chenchen and Li, Xiang Lisa and Liang, Percy and Hashimoto, Tatsunori},
  journal={arXiv preprint arXiv:2312.04469},
  year={2023}
}

@article{zhang2023watermarks,
  title={Watermarks in the sand: Impossibility of strong watermarking for generative models},
  author={Zhang, Hanlin and Edelman, Benjamin L and Francati, Danilo and Venturi, Daniele and Ateniese, Giuseppe and Barak, Boaz},
  journal={arXiv preprint arXiv:2311.04378},
  year={2023}
}

@inproceedings{yoo2023robust,
  title={Robust multi-bit natural language watermarking through invariant features},
  author={Yoo, KiYoon and Ahn, Wonhyuk and Jang, Jiho and Kwak, Nojun},
  booktitle={Proceedings of the 61st Annual Meeting of the Association for Computational Linguistics (Volume 1: Long Papers)},
  pages={2092--2115},
  year={2023}
}

@article{christ2023undetectable,
  title={Undetectable watermarks for language models},
  author={Christ, Miranda and Gunn, Sam and Zamir, Or},
  journal={arXiv preprint arXiv:2306.09194},
  year={2023}
}

@article{christ2024pseudorandom,
  title={Pseudorandom Error-Correcting Codes},
  author={Christ, Miranda and Gunn, Sam},
  journal={arXiv preprint arXiv:2402.09370},
  year={2024}
}

@article{ren2023robust,
  title={A robust semantics-based watermark for large language model against paraphrasing},
  author={Ren, Jie and Xu, Han and Liu, Yiding and Cui, Yingqian and Wang, Shuaiqiang and Yin, Dawei and Tang, Jiliang},
  journal={arXiv preprint arXiv:2311.08721},
  year={2023}
}

@article{liu2023semantic,
  title={A semantic invariant robust watermark for large language models},
  author={Liu, Aiwei and Pan, Leyi and Hu, Xuming and Meng, Shiao and Wen, Lijie},
  journal={arXiv preprint arXiv:2310.06356},
  year={2023}
}

@article{yang2023watermarking,
  title={Watermarking text generated by black-box language models},
  author={Yang, Xi and Chen, Kejiang and Zhang, Weiming and Liu, Chang and Qi, Yuang and Zhang, Jie and Fang, Han and Yu, Nenghai},
  journal={arXiv preprint arXiv:2305.08883},
  year={2023}
}

@article{lee2023wrote,
  title={Who wrote this code? watermarking for code generation},
  author={Lee, Taehyun and Hong, Seokhee and Ahn, Jaewoo and Hong, Ilgee and Lee, Hwaran and Yun, Sangdoo and Shin, Jamin and Kim, Gunhee},
  journal={arXiv preprint arXiv:2305.15060},
  year={2023}
}

@inproceedings{venugopal2011watermarking,
  title={Watermarking the outputs of structured prediction with an application in statistical machine translation.},
  author={Venugopal, Ashish and Uszkoreit, Jakob and Talbot, David and Och, Franz Josef and Ganitkevitch, Juri},
  booktitle={Proceedings of the 2011 Conference on Empirical Methods in Natural Language Processing},
  pages={1363--1372},
  year={2011}
}

@article{jiang2023mistral,
  title={Mistral 7B},
  author={Jiang, Albert Q and Sablayrolles, Alexandre and Mensch, Arthur and Bamford, Chris and Chaplot, Devendra Singh and Casas, Diego de las and Bressand, Florian and Lengyel, Gianna and Lample, Guillaume and Saulnier, Lucile and others},
  journal={arXiv preprint arXiv:2310.06825},
  year={2023}
}

@inproceedings{fernandez2023three,
  title={Three bricks to consolidate watermarks for large language models},
  author={Fernandez, Pierre and Chaffin, Antoine and Tit, Karim and Chappelier, Vivien and Furon, Teddy},
  booktitle={2023 IEEE International Workshop on Information Forensics and Security (WIFS)},
  pages={1--6},
  year={2023},
  organization={IEEE}
}

@article{krishna2024paraphrasing,
  title={Paraphrasing evades detectors of ai-generated text, but retrieval is an effective defense},
  author={Krishna, Kalpesh and Song, Yixiao and Karpinska, Marzena and Wieting, John and Iyyer, Mohit},
  journal={Advances in Neural Information Processing Systems},
  volume={36},
  year={2024}
}

@article{zhao2023provable,
  title={Provable robust watermarking for ai-generated text},
  author={Zhao, Xuandong and Ananth, Prabhanjan and Li, Lei and Wang, Yu-Xiang},
  journal={arXiv preprint arXiv:2306.17439},
  year={2023}
}

@article{chang2024postmark,
  title={PostMark: A Robust Blackbox Watermark for Large Language Models},
  author={Chang, Yapei and Krishna, Kalpesh and Houmansadr, Amir and Wieting, John and Iyyer, Mohit},
  journal={arXiv preprint arXiv:2406.14517},
  year={2024}
}
\bibliographystyle{tmlr}

\clearpage
\appendix
\section{Appendix}
\label{sec:appendix}

\subsection{Additional Experimental Details}
\myparagraph{Prompting strategies for Gemini.} We use Gemini for paraphrasing and as an LLM judge. Occasionally, Gemini will refuse to return a response due to safety filters that cannot be bypassed.
We use the following prompt to compute win rates:

\emph{``Is (A) or (B) a better response to PROMPT? Answer with either (A) or (B). (A): GREEDY RESPONSE. (B): WATERMARKED RESPONSE.''}

For determining the best response, we use:

\emph{``Is (A), (B), or (C) the best responses to PROMPT? Answer with either (A), (B), (C). (A): RESPONSE 1. (B): RESPONSE 2. (C): RESPONSE 3.''}

In both cases, we search for the first identifier (i.e. ``(A)'', ``(B)'', ``(C)''). If one is not found or if Gemini does not return a response, the example is not used in the win rate calculation or the first response is chosen.

For paraphrasing, we use the following:

\emph{``Paraphrase the following: RESPONSE''}.

We skip examples for which Gemini does not return a response.

\subsection{Omitted Experimental Results}

Figure~\ref{fig:pauc_pcor} shows the effect of varying the amount of random token corruption on detection pAUC. We observe the same trend as for AUC. Figure~\ref{fig:auc_theory_ent} plots a histogram of the entropy of the underlying next-token probability distribution under temperature 1 random sampling without watermarking across our dataset. We see the entropy is concentrated between 0.5 and 3 nats. We plot the AUC lower bound predicted by Theorem~\ref{thm:rocauc_unif} ($k=1, m=1024$) sweeping \emph{our} entropy term $\alpha$ across this range, with the understanding that for sufficiently large $m$, our $\alpha$ is a good estimator of the true underlying entropy. In Figure~\ref{fig:auc_theory_m_and_T} we look at the impact of $m$ and $T$ on our AUC bound when the optimal $\alpha = \log(m)$ is plugged in. We see sharp diminishing returns with respect to $m$ (performance saturates after around $m=10$ for all $T$'s). We empirically observe this saturation in Table~\ref{table:bigtable}, where AUC saturates at 97.7\% at $m=16$ --- that is, increasing $m$ beyond 16 has negligible impact. Furthermore, we observe that the bound increases sharply with $T$, corroborating the trend we see empirically in Figure~\ref{fig:len_ent_cor}.

Given a next-token distribution over the vocabulary, we can estimate $\alpha$ via simulation. In Figure~\ref{fig:sim_ent} we plot the effect of $m$ on $\hat{\alpha}$, our simulated entropy, for two distributions $p$ --- uniform and Zipf --- over a 32k token vocabulary. Neither may be realistic in practice, but the exercise is still informative as we observe that $\hat{\alpha}$ follows $\log(m)$ pretty well for even large $m$'s when $p$ is uniform. As expected, $\hat{\alpha}$ is smaller when $p$ is Zipf (lower entropy) and deviates from $\log(m)$ for large $m$.

Figure~\ref{fig:gamma_lrt} plots the performance that Theorem~\ref{thm:distortion_gamma} predicts when using the optimal likelihood ratio test with the Gamma distribution.

Table~\ref{table:eli5_table} shows perplexity and detection performance for \textsc{Gemma-7B-instruct} on the \emph{eli5-category} dataset. The trends here are as before. Figure~\ref{fig:eli5_auc_t} shows the impact of number of test samples on detection.

\paragraph{Testing the independence assumption.} Central to our theory was the assumption that the seeds, deduplicated across candidate sequences, are independent. We now empirically test that assumption as follows.
We watermark 20 random prompts from \emph{databricks-dolly-15k} using \textsc{Mistral-7B-instruct} with a minimum and maximum text length of 250 and 300 tokens respectively. The flat scheme with $F = U(0, 1)$ for various $k$, $m$, and most importantly $n$, is used. During the watermarking procedure we track the $u_i$ for each candidate sequence. We pool all these $u_i$'s together across decoding steps and prompts, and run a hypothesis test. Our null hypothesis is that all the seeds are independent. Under the null, the set of $u_i$'s should look like one giant i.i.d. sample from $U(0, 1)$, and so we can apply a two-sided Kolmogorov-Smirnov (KS) test to observe how well our sample's empirical CDF agrees with $U(0, 1)$. 
Table~\ref{table:ks} shows the statistic and $p$-value for various hyperparameter settings, while Figure~\ref{fig:empirical_cdf} depicts the empirical CDFs themselves for visual confirmation. We clearly observe the importance of $n$ on the independence assumption. As we expect, decreasing $n$ means a greater the chance of reusing seeds and as a result implies less independence and more distortion. Recall, however, that the benefit of smaller $n$ is greater robustness to attacks, so there are trade-offs. The choice of $k$ has less of an effect than $n$.

\begin{figure*}[!t]
    \centering
    \includegraphics[width=0.49\textwidth]{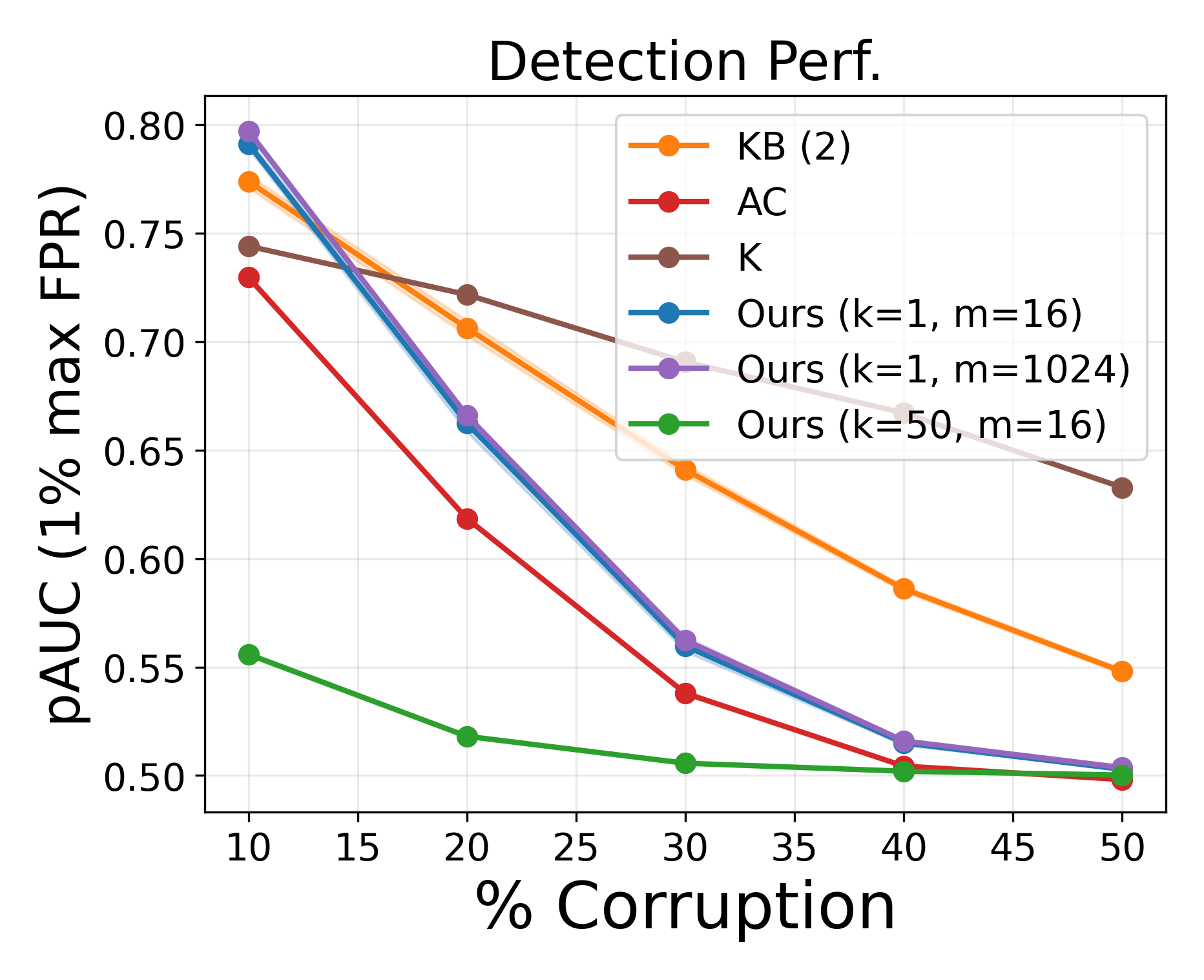}
    \caption{Effect of the amount of (random token replacement) corruption on detection pAUC (flat scheme; mixed $T$'s) with 1\% max FPR.}
    \label{fig:pauc_pcor}
\end{figure*}

\begin{figure*}[!t]
    \centering
    \includegraphics[width=0.49\textwidth]{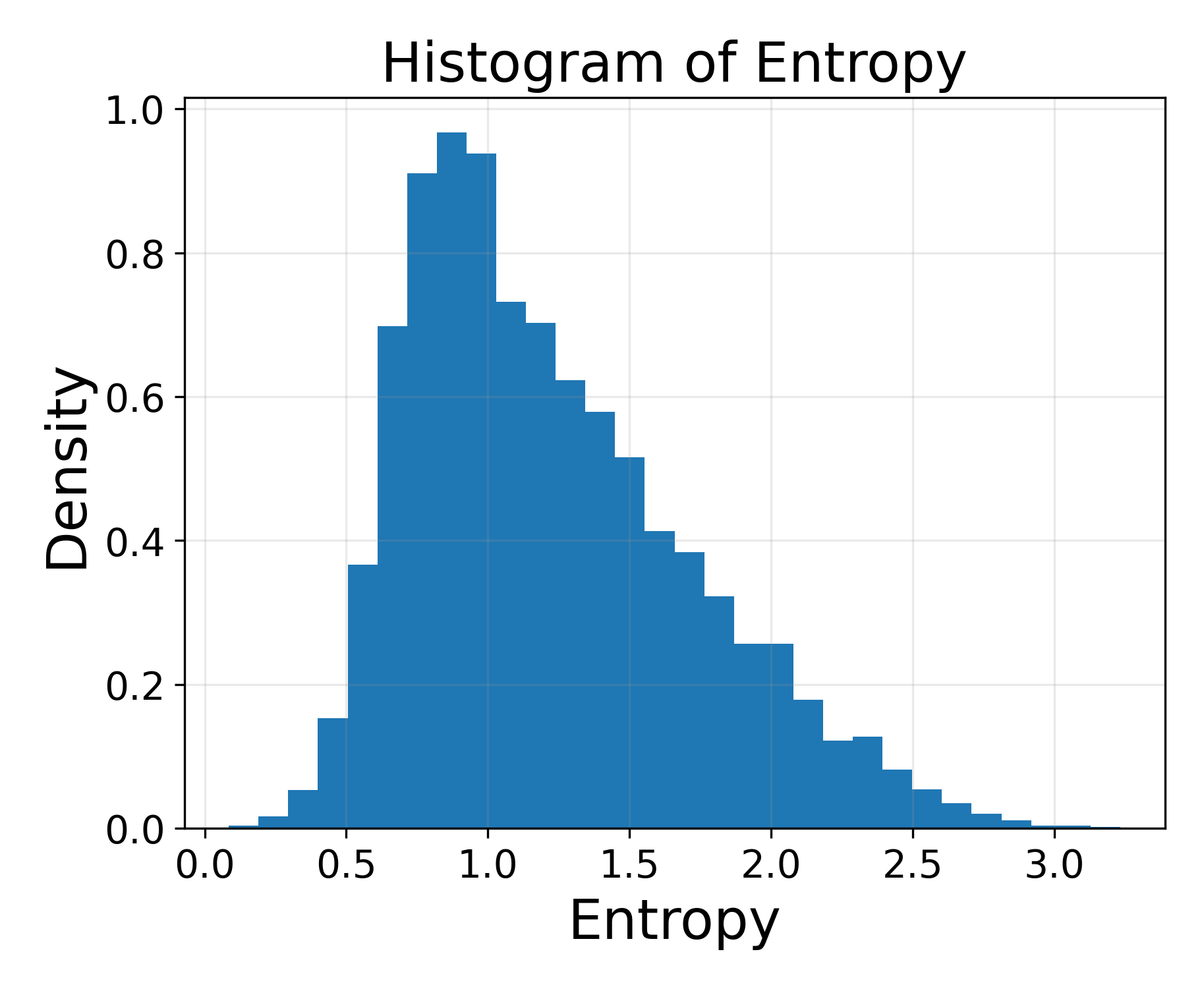}
    \includegraphics[width=0.49\textwidth]{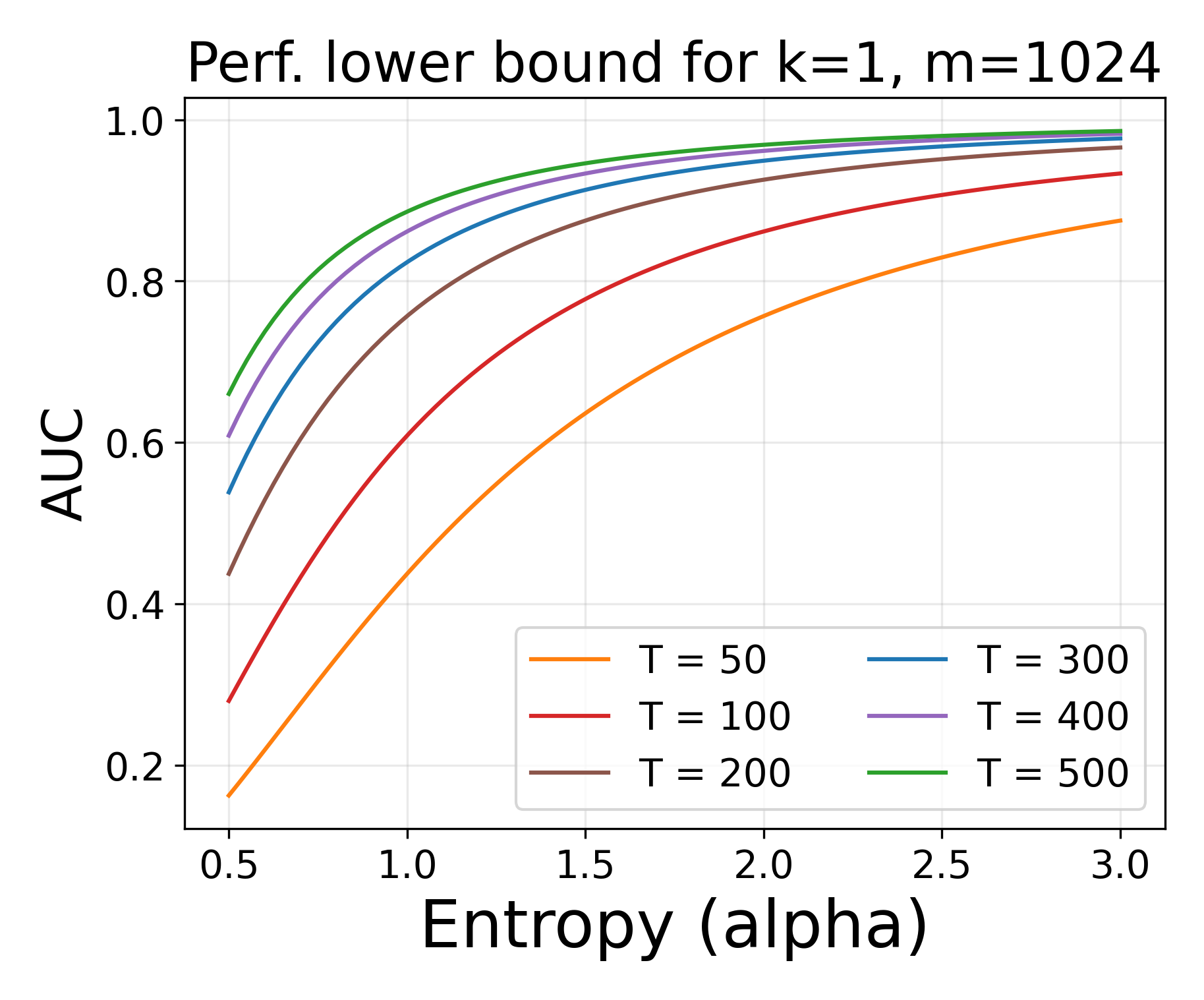}
    \caption{\textbf{Left}: Histogram of the average entropy (nats) in the LLM's underlying next-token distribution across non-watermarked response tokens. \textbf{Right}: A lower bound for ROC-AUC predicted by Theorem~\ref{thm:rocauc_unif} as a function of the entropy term $\alpha$ for the range of values we observe empirically. When $m$ is large, $\alpha$ becomes a reasonable estimator of the LLM's entropy.}
    \label{fig:auc_theory_ent}
\end{figure*}

\begin{figure*}[!t]
    \centering
    \includegraphics[width=0.49\textwidth]{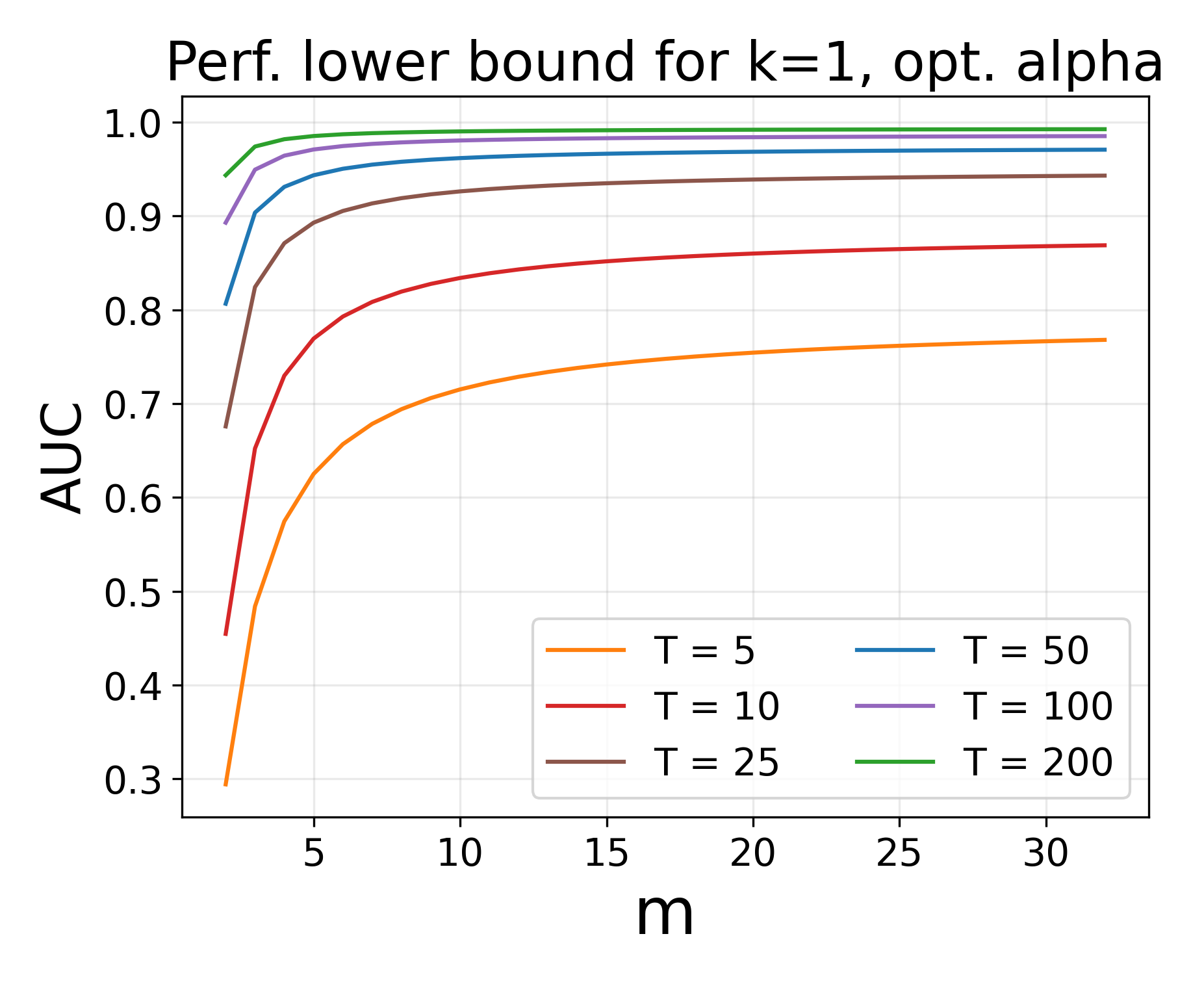}
    \includegraphics[width=0.49\textwidth]{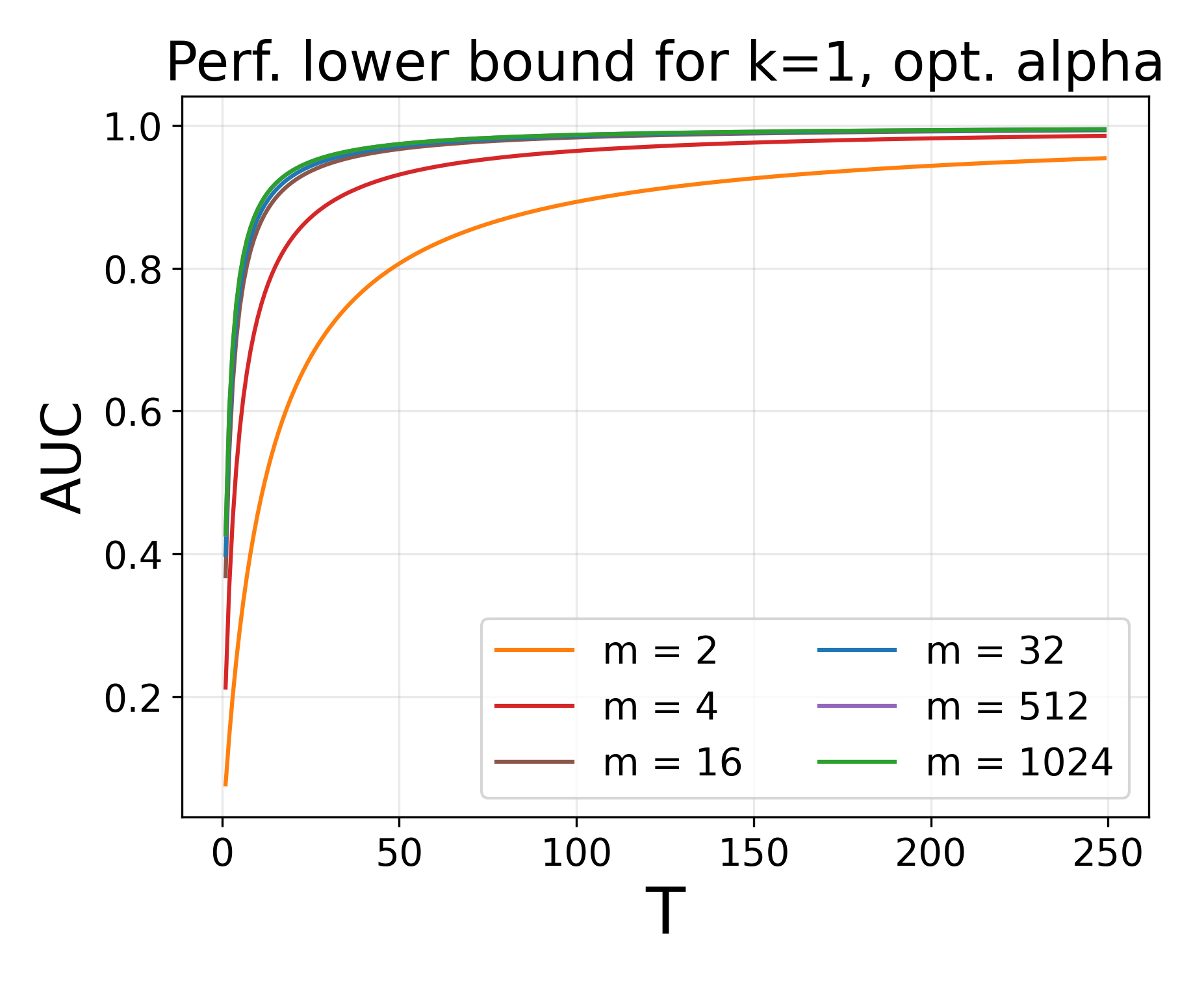}
    \caption{\textbf{Left}: A lower bound for ROC-AUC predicted by Theorem~\ref{thm:rocauc_unif} as a function of $m$ (using optimal $\alpha = \log(m)$). \textbf{Right}: Same plot, but as a function of $T$ (again, using optimal $\alpha$).}
    \label{fig:auc_theory_m_and_T}
\end{figure*}

\begin{figure*}[!t]
    \centering
    \includegraphics[width=0.50\textwidth]{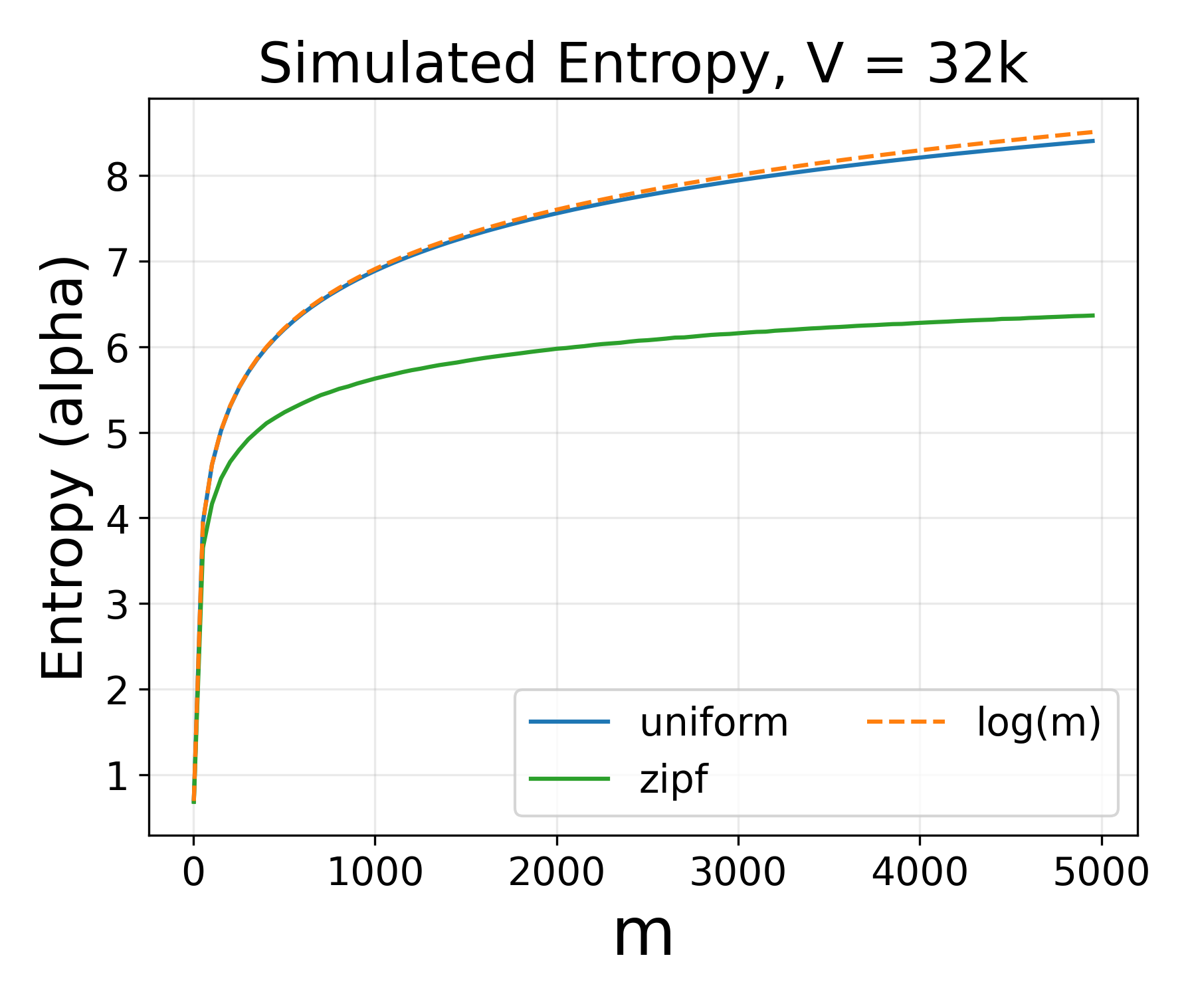}
    \caption{Given a distribution over the vocabulary (taken to be of size 32k), we can estimate $\alpha$ for finite $m$ via simulation (1000 trials). We observe that when the underlying next-token distribution is uniform, $\alpha \approx \log(m)$ in a practical range for $m$. However, when the underlying distribution is Zipf (less entropy), $\alpha$ quickly deviates from $\log(m)$ as $m$ grows and the probability of sampling duplicate tokens increases.}
    \label{fig:sim_ent}
\end{figure*}

\begin{table}[!t]
\centering
\setlength{\tabcolsep}{5pt}
\begin{tabular}{r||c|c|c|c|c|c}
                         & PPL  & LH    & AUC  & pAUC & C. AUC & C. pAUC \\ \toprule
Max Std. Error                 &  0.03    &    0.002   & 0.1  & 0.1  & 0.2    & 0.1     \\ \midrule
Unif. Fisher $p$-value &      &       &      &      &        &         \\ \midrule
\multirow{6}{*}{Flat $(k=1)$}
\makecell[r]{\hfill$2$}                        & 3.46 & 0.597 & 86.3 & 62.3 & 75.8   & 54.7    \\
\makecell[r]{\hfill$4$}                        & 3.36 & 0.604 & 94.8 & 79.0 & 87.6   & 66.4    \\
\makecell[r]{\hfill$16$}                       & 3.20 & 0.618 & 97.9 & 90.0 & 94.0   & 80.2    \\
\makecell[r]{\hfill$32$}                       & 3.06 & 0.629 & 98.2 & 91.7 & 94.9   & 82.7    \\
\makecell[r]{\hfill$512$}                      & 2.63 & 0.668 & 98.5 & 93.0 & 95.7   & 85.5    \\
\makecell[r]{\hfill$1024$}                     & 2.61 & 0.670 & 98.5 & 93.2 & 95.7   & 85.8    \\ \midrule
\multirow{4}{*}{Flat $(k=10)$}
\makecell[r]{\hfill$2$}                        & 4.10 & 0.568 & 78.9 & 53.1 & 67.4   & 51.0    \\
\makecell[r]{\hfill$4$}                        & 4.06 & 0.572 & 91.2 & 65.3 & 80.5   & 55.0    \\
\makecell[r]{\hfill$16$}                       & 3.86 & 0.583 & 97.1 & 82.8 & 90.9   & 68.1    \\
\makecell[r]{\hfill$32$}                       & 3.80 & 0.587 & 97.9 & 86.3 & 92.7   & 72.7    \\ \midrule
\multirow{4}{*}{Flat $(k=50)$}
\makecell[r]{\hfill$2$}                        & 3.79 & 0.581 & 65.5 & 50.5 & 57.1   & 50.2    \\
\makecell[r]{\hfill$4$}                        & 3.76 & 0.584 & 78.4 & 52.0 & 66.1   & 50.7    \\
\makecell[r]{\hfill$16$}                       & 3.72 & 0.586 & 89.8 & 60.2 & 77.9   & 53.1    \\
\makecell[r]{\hfill$32$}                       & 3.67 & 0.589 & 92.0 & 64.7 & 80.7   & 54.5   \\
\bottomrule
\end{tabular}
\caption{Results (10\% corruption, 1\% max FPR) for $U(0,1)$ when a meta $p$-value is used for scoring, wherein the $T$ $n$-gram-level $p$-values are combined using Fisher's method. The $k=1$ setting is precisely a stochastic version of Aaronson Corrected. AUCs, pAUCS, and their standard errors are scaled by 100.}
\label{table:unif_fisher}
\end{table}
\begin{table}[!t]
\centering
\setlength{\tabcolsep}{5pt}
\begin{tabular}{r||c|c|c|c|c|c}
                  & PPL  & LH    & AUC  & pAUC & C. AUC & C. pAUC \\ \toprule
Max Std. Error    & 0.03 & 0.002 & 0.1  & 0.3  & 0.2    & 0.1     \\ \midrule
Unif. KDE LRT &      &       &      &      &        &         \\ \midrule
\multirow{6}{*}{Flat $(k=1)$}
\makecell[r]{\hfill$2$}                 & 3.46 & 0.597 & 78.1 & 57.7 & 64.4   & 51.5    \\
\makecell[r]{\hfill$4$}                 & 3.36 & 0.604 & 73.9 & 56.0 & 60.7   & 51.2    \\
\makecell[r]{\hfill$16$}                & 3.20 & 0.618 & 66.6 & 53.9 & 56.7   & 51.3    \\
\makecell[r]{\hfill$32$}                & 3.06 & 0.629 & 64.0 & 53.6 & 55.2   & 51.3    \\
\makecell[r]{\hfill$512$}               & 2.63 & 0.668 & 56.2 & 51.3 & 50.4   & 50.3    \\
\makecell[r]{\hfill$1024$}              & 2.61 & 0.670 & 55.4 & 51.1 & 49.9   & 50.2    \\ \midrule
\multirow{4}{*}{Flat $(k=10)$}
\makecell[r]{\hfill$2$}                & 4.10 & 0.568 & 84.1 & 58.0 & 72.8   & 53.1    \\
\makecell[r]{\hfill$4$}                 & 4.06 & 0.572 & 94.8 & 72.9 & 83.9   & 58.7    \\
\makecell[r]{\hfill$16$}                & 3.86 & 0.583 & 97.8 & 85.1 & 88.3   & 64.2    \\
\makecell[r]{\hfill$32$}                & 3.80 & 0.587 & 97.3 & 85.6 & 86.9   & 64.0    \\ \midrule
\multirow{4}{*}{Flat $(k=50)$}
\makecell[r]{\hfill$2$}                 & 3.79 & 0.581 & 69.0 & 51.6 & 60.9   & 50.9    \\
\makecell[r]{\hfill$4$}                 & 3.76 & 0.584 & 83.1 & 55.6 & 71.0   & 52.4    \\
\makecell[r]{\hfill$16$}                & 3.72 & 0.586 & 94.0 & 68.2 & 81.8   & 56.2    \\
\makecell[r]{\hfill$32$}                & 3.67 & 0.589 & 95.5 & 72.5 & 84.0   & 57.9    \\ \midrule
Gamma Exact LRT &      &       &      &      &        &         \\ \midrule
\multirow{4}{*}{Flat $(k=1)$}
\makecell[r]{\hfill$2$}                 & 3.45 & 0.598 & 76.6 & 57.0 & 63.8   & 51.6    \\
\makecell[r]{\hfill$4$}                 & 3.44 & 0.600 & 74.4 & 55.2 & 61.5   & 51.2    \\
\makecell[r]{\hfill$16$}                & 3.17 & 0.623 & 68.3 & 53.6 & 57.8   & 51.3    \\
\makecell[r]{\hfill$32$}                & 3.04 & 0.634 & 65.5 & 53.5 & 56.2   & 51.5    \\ \midrule
\multirow{4}{*}{Flat $(k=10)$}
\makecell[r]{\hfill$2$}                 & 4.07 & 0.570 & 82.9 & 58.4 & 70.3   & 52.8    \\
\makecell[r]{\hfill$4$}                 & 4.01 & 0.573 & 89.4 & 67.5 & 73.4   & 54.1    \\
\makecell[r]{\hfill$16$}                & 3.96 & 0.577 & 85.1 & 61.4 & 68.0   & 51.7    \\
\makecell[r]{\hfill$32$}                & 3.93 & 0.580 & 82.1 & 57.7 & 65.7   & 51.2    \\ \bottomrule
\end{tabular}
\caption{Results when the likelihood-ratio test is used for scoring in place of $p$-values. When $F=U(0,1)$, the null and alternative likelihoods are estimated non-parametrically using kernel density estimation (KDE). When $F=-\text{Gamma}(1/k, 1)$, the densities given in Theorem~\ref{thm:distortion_gamma} are used. AUCs, pAUCs, and their standard errors are scaled by 100.}
\label{table:lrt}
\end{table}
\begin{table}[!t]
\centering
\setlength{\tabcolsep}{5pt}
\begin{tabular}{r||c|c|c|c|c|c}
            & PPL  & LH    & AUC  & pAUC & C. AUC & C. pAUC \\ \toprule
\hspace{5mm}Max Std. Error     & 0.04 & 0.002 & 0.1  & 0.2  & 0.1    & 0.2     \\ \midrule
$F = N(0, 1)$ &      &       &      &      &        &         \\ \midrule
\multirow{6}{*}{Flat $(k=1)$}
\makecell[r]{\hfill$2$}           & 3.47 & 0.597 & 90.4 & 68.7 & 81.7   & 58.5    \\
\makecell[r]{\hfill$4$}           & 3.36 & 0.605 & 95.9 & 83.0 & 90.2   & 70.7    \\
\makecell[r]{\hfill$16$}          & 3.15 & 0.622 & 98.0 & 90.6 & 94.2   & 80.4    \\
\makecell[r]{\hfill$32$}          & 3.05 & 0.631 & 98.2 & 91.8 & 94.9   & 82.2    \\
\makecell[r]{\hfill$512$}         & 2.72 & 0.661 & 98.5 & 92.9 & 95.4   & 83.8    \\
\makecell[r]{\hfill$1024$}        & 2.70 & 0.663 & 98.5 & 93.0 & 95.4   & 84.1    \\ \midrule
\multirow{4}{*}{Flat $(k=10)$}
\makecell[r]{\hfill$2$}           & 4.13 & 0.567 & 84.1 & 56.3 & 73.3   & 52.1    \\
\makecell[r]{\hfill$4$}           & 4.02 & 0.573 & 94.2 & 73.3 & 85.8   & 59.8    \\
\makecell[r]{\hfill$16$}          & 3.93 & 0.579 & 98.0 & 87.9 & 93.2   & 74.5    \\
\makecell[r]{\hfill$32$}          & 3.84 & 0.584 & 98.4 & 90.0 & 94.1   & 77.7    \\ \midrule
\multirow{4}{*}{Flat $(k=50)$}
\makecell[r]{\hfill$2$}           & 3.82 & 0.580 & 71.0 & 50.9 & 62.5   & 50.4    \\
\makecell[r]{\hfill$4$}           & 3.73 & 0.585 & 83.8 & 53.9 & 72.4   & 51.5    \\
\makecell[r]{\hfill$16$}          & 3.69 & 0.588 & 93.0 & 67.5 & 83.1   & 55.9    \\
\makecell[r]{\hfill$32$}          & 3.67 & 0.589 & 94.5 & 72.7 & 85.6   & 58.6    \\ \midrule
$F = \chi_2^2$        &      &       &      &      &        &         \\ \midrule
\multirow{6}{*}{Flat $(k=1)$}
\makecell[r]{\hfill$2$}           & 3.45 & 0.597 & 86.2 & 62.1 & 75.5   & 54.5    \\
\makecell[r]{\hfill$4$}           & 3.39 & 0.602 & 94.8 & 79.1 & 87.8   & 66.8    \\
\makecell[r]{\hfill$16$}          & 3.20 & 0.617 & 97.9 & 90.1 & 93.9   & 80.1    \\
\makecell[r]{\hfill$32$}          & 3.08 & 0.627 & 98.2 & 91.7 & 94.9   & 82.9    \\
\makecell[r]{\hfill$512$}         & 2.98 & 0.644 & 98.7 & 95.2 & 96.7   & 89.6    \\
\makecell[r]{\hfill$1024$}        & 3.03 & 0.641 & 98.8 & 95.7 & 97.0   & 90.5    \\ \midrule
\multirow{4}{*}{Flat $(k=10)$}
\makecell[r]{\hfill$2$}           & 4.12 & 0.567 & 81.6 & 54.4 & 69.8   & 51.4    \\
\makecell[r]{\hfill$4$}           & 4.04 & 0.573 & 93.5 & 70.3 & 84.0   & 57.7    \\
\makecell[r]{\hfill$16$}          & 3.84 & 0.585 & 98.1 & 87.5 & 93.1   & 74.1    \\
\makecell[r]{\hfill$32$}          & 3.65 & 0.596 & 98.7 & 90.6 & 94.7   & 78.6    \\ \midrule
\multirow{4}{*}{Flat $(k=50)$}
\makecell[r]{\hfill$2$}           & 3.77 & 0.583 & 68.1 & 50.6 & 58.7   & 50.2    \\
\makecell[r]{\hfill$4$}           & 3.74 & 0.585 & 82.0 & 52.9 & 69.4   & 51.0    \\
\makecell[r]{\hfill$16$}          & 3.68 & 0.588 & 92.9 & 65.6 & 81.9   & 55.0    \\
\makecell[r]{\hfill$32$}          & 3.65 & 0.591 & 94.5 & 71.5 & 84.5   & 57.7    \\ \bottomrule
\end{tabular}
\caption{Results (10\% corruption, 1\% max FPR) when $F$ is $N(0,1)$ or $\chi_2^2$ and $p$-values are used for scoring. AUCs, pAUCS, and their standard errors are scaled by 100.}
\label{table:alt_dists}
\end{table}

\begin{table}[!t]
\centering
\setlength{\tabcolsep}{5pt}
\begin{tabular}{r||c|c|c|c|c|c|c}
                         & LH    & AUC  & pAUC & C. AUC & C. pAUC & P. AUC & P. pAUC \\ \toprule
Greedy Decoding & 0.814 &  -    & -     &   -     &     -    &    -    &    -     \\
Random Sampling & 0.593 &  -    &   -   &    -    &     -    &     -   &      -   \\ \midrule
Aaronson                 & 0.654 & 71.8 & 67.7 & 65.7   & 62.6    & 53.9   & 50.5    \\
Aaronson Cor.              & 0.654 & 98.3 & 92.9 & 95.4   & 84.7    & 58.8   & 50.7    \\ \midrule
\multirow{5}{*}{Kirchenbauer}
\makecell[r]{\hfill$0.5$}       & 0.596 & 70.7 & 51.7 & 68.3   & 51.2    & 49.0   & 49.8    \\
\makecell[r]{\hfill$1$}         & 0.594 & 85.4 & 59.9 & 81.9   & 56.5    & 52.9   & 49.9    \\
\makecell[r]{\hfill$2$}         & 0.569 & 96.6 & 82.5 & 94.8   & 76.5    & 58.4   & 50.3    \\
\makecell[r]{\hfill$3$}         & 0.522 & 99.1 & 94.0 & 98.4   & 90.4    & 63.4   & 51.5    \\
\makecell[r]{\hfill$4$}         & 0.493 & 99.8 & 98.2 & 99.6   & 96.6    & 66.4   & 52.7    \\ \midrule
Kuditipudi               & 0.592 & 85.8 & 76.5 & 85.1   & 74.3    & 75.9   & 53.2    \\ \midrule
\multirow{6}{*}{Flat $(k=1)$}
\makecell[r]{\hfill$2$}                        & 0.597 & 90.5 & 69.7 & 82.6   & 59.4    & 50.5   & 50.3    \\
\makecell[r]{\hfill$4$}                        & 0.604 & 96.0 & 83.7 & 90.6   & 71.4    & 51.3   & 50.6    \\
\makecell[r]{\hfill$16$}                       & 0.618 & 97.7 & 90.2 & 94.1   & 79.9    & 52.7   & 51.1    \\
\makecell[r]{\hfill$32$}                       & 0.629 & 97.9 & 90.7 & 94.4   & 80.8    & 53.0   & 50.8    \\
\makecell[r]{\hfill$512$}                      & 0.668 & 97.8 & 90.5 & 94.3   & 80.5    & 54.6   & 51.3    \\
\makecell[r]{\hfill$1024$}                     & 0.670 & 97.8 & 90.5 & 94.2   & 80.5    & 52.8   & 51.1    \\ \midrule
\multirow{4}{*}{Flat $(k=10)$}
\makecell[r]{\hfill$2$}                        & 0.568 & 84.0 & 56.5 & 74.3   & 52.3    & 49.0   & 50.0    \\
\makecell[r]{\hfill$4$}                        & 0.572 & 94.1 & 73.8 & 86.2   & 60.2    & 51.3   & 50.3    \\
\makecell[r]{\hfill$16$}                       & 0.583 & 97.9 & 87.7 & 93.2   & 74.2    & 54.3   & 50.7    \\
\makecell[r]{\hfill$32$}                       & 0.587 & 98.3 & 89.7 & 94.2   & 77.7    & 55.0   & 50.8    \\ \midrule
\multirow{4}{*}{Flat $(k=50)$}
\makecell[r]{\hfill$2$}                         & 0.581 & 70.5 & 50.9 & 63.1   & 50.5    & 47.0   & 50.0    \\
\makecell[r]{\hfill$4$}                         & 0.584 & 83.5 & 54.1 & 72.7   & 51.6    & 49.4   & 50.0    \\
\makecell[r]{\hfill$16$}                        & 0.586 & 93.0 & 67.9 & 83.7   & 56.3    & 50.5   & 50.1    \\
\makecell[r]{\hfill$32$}                        & 0.589 & 94.5 & 72.9 & 86.0   & 59.0    & 51.1   & 50.5    \\ \midrule
\multirow{4}{*}{Rec. $(k=1)$}
\makecell[r]{\hfill$4$}                        & 0.601 & 93.9 & 78.2 & 87.3   & 65.8    & 48.4   & 50.4    \\
\makecell[r]{\hfill$16$}                       & 0.607 & 95.4 & 83.5 & 90.8   & 72.5    & 53.4   & 50.8    \\
\makecell[r]{\hfill$32$}                       & 0.612 & 96.5 & 85.8 & 92.0   & 74.5    & 49.4   & 50.8    \\
512                      & 0.632 & 97.4 & 88.6 & 92.9   & 77.5    & 50.4   & 51.2    \\ \midrule
\multirow{3}{*}{Rec. $(k=10)$}
\makecell[r]{\hfill$4$}                        & 0.567 & 89.6 & 64.9 & 80.3   & 55.6    & 48.0   & 50.0    \\
\makecell[r]{\hfill$16$}                       & 0.568 & 93.6 & 74.8 & 87.0   & 62.4    & 52.9   & 50.4    \\
\makecell[r]{\hfill$32$}                       & 0.573 & 95.1 & 78.0 & 88.6   & 64.4    & 50.6   & 50.3    \\ \midrule
\multirow{3}{*}{Rec. $(k=50)$}
\makecell[r]{\hfill$4$}                        & 0.582 & 75.9 & 52.2 & 67.0   & 51.0    & 46.5   & 49.9    \\
\makecell[r]{\hfill$16$}                       & 0.583 & 81.5 & 55.0 & 73.7   & 52.2    & 51.4   & 50.2    \\
\makecell[r]{\hfill$32$}                       & 0.582 & 84.0 & 56.6 & 75.3   & 52.6    & 49.4   & 50.0    \\ \bottomrule
\end{tabular}
\caption{Average per-token likelihoods and detection performance when the negative class is taken to be non-watermarked generations sampled with temperature 1. For paraphrasing, target lengths of \{150, 200, 250\} are used in calculating AUC and pAUC. The trends here are consistent with those discussed in the main text, where the negative class consists of non-watermarked argmax / greedy generations and perplexity is used to measure distortion. AUCs and pAUCS are scaled by 100.}
\label{table:bigtable_stochastic}
\end{table}
\begin{table}[!t]
\centering
\setlength{\tabcolsep}{5pt}
\begin{tabular}{r||c|c|c|c}
           & PPL   & LH    & AUC  & pAUC \\ \toprule
Greedy Decoding     & 1.313 & 0.872 &  -    &   -   \\
Random Sampling & 1.627 & 0.811 &  -    &  -    \\ \midrule
Aaronson          & 1.619 & 0.814 & 61.0 & 57.8 \\
Aaronson Cor.         & 1.619 & 0.814 & 93.0 & 70.9 \\ \midrule
\multirow{5}{*}{Kirchenbauer}
\makecell[r]{\hfill$0.5$}   & 1.649 & 0.808 & 61.6 & 50.7 \\
\makecell[r]{\hfill$1$}     & 1.673 & 0.803 & 72.1 & 52.3 \\
\makecell[r]{\hfill$2$}     & 1.836 & 0.782 & 87.8 & 63.0 \\
\makecell[r]{\hfill$3$}     & 2.159 & 0.743 & 95.3 & 78.5 \\
\makecell[r]{\hfill$4$}     & 2.847 & 0.683 & 98.3 & 90.0 \\ \midrule
Kuditipudi          & 1.615 & 0.814 & 58.4 & 51.0 \\ \midrule
\multirow{6}{*}{Flat $(k=1)$}
\makecell[r]{\hfill$2$}          & 1.631 & 0.810 & 77.1 & 53.6 \\
\makecell[r]{\hfill$4$}          & 1.623 & 0.811 & 87.0 & 61.7 \\ 
\makecell[r]{\hfill$16$}         & 1.621 & 0.812 & 92.4 & 70.3 \\
\makecell[r]{\hfill$32$}         & 1.615 & 0.812 & 92.8 & 71.9 \\
\makecell[r]{\hfill$512$}        & 1.610 & 0.814 & 93.2 & 73.1 \\
\makecell[r]{\hfill$1024$}       & 1.610 & 0.814 & 93.2 & 72.9 \\ \midrule
\multirow{2}{*}{Flat $(k=10)$}
\makecell[r]{\hfill$4$}          & 1.657 & 0.807 & 89.4 & 61.7 \\
\makecell[r]{\hfill$16$}         & 1.653 & 0.808 & 94.7 & 75.0 \\ \midrule
\multirow{2}{*}{Flat $(k=50)$}
\makecell[r]{\hfill$4$}          & 1.652 & 0.808 & 80.5 & 52.6 \\
\makecell[r]{\hfill$16$}         & 1.645 & 0.810 & 89.7 & 60.4 \\ \midrule
\multirow{4}{*}{Rec. $(k=1)$}
\makecell[r]{\hfill$4$}          & 1.623 & 0.813 & 82.1 & 57.0 \\
\makecell[r]{\hfill$16$}         & 1.621 & 0.812 & 87.5 & 63.0 \\
\makecell[r]{\hfill$32$}         & 1.630 & 0.810 & 88.1 & 63.9 \\
\makecell[r]{\hfill$512$}        & 1.615 & 0.815 & 90.0 & 66.7 \\ \midrule
\multirow{2}{*}{Rec. $(k=10)$}
\makecell[r]{\hfill$4$}          & 1.665 & 0.805 & 84.0 & 56.2 \\
\makecell[r]{\hfill$16$}         & 1.662 & 0.806 & 89.6 & 64.4 \\ \midrule
\multirow{2}{*}{Rec. $(k=50)$}
\makecell[r]{\hfill$4$}          & 1.664 & 0.806 & 73.2 & 51.2 \\
\makecell[r]{\hfill$16$}         & 1.653 & 0.808 & 79.4 & 53.5 \\
\bottomrule
\end{tabular}
\caption{Main results (mixed $T$'s for AUC and pAUC where max FPR is 1\%) for \textsc{Gemma-7B-instruct} on the \emph{eli5-category} test split. AUC and pAUC are scaled by 100. We observe the same trends here as with \textsc{Mistral-7B-instruct} on \emph{databricks-dolly-15k}. When $k=1$ and $m=1024$ (white-box setting) we are slightly better in perplexity and detection (sans corruption) than \cite{kuditipudi2023robust} and on-par with \cite{aaronson}. \cite{kirchenbauer2023watermark} can always outperform on detection by cranking up $\delta$, but when matched on perplexity, we achieve better detection. For example, $\delta = 0.5$ gives perplexity of 1.649 and AUC of 61.6\% whereas we achieve perplexities / AUC's of 1.610 and 93.2\% when $k=1, m=1024$ and even 1.645 / 89.7\% when $k=50, m=16$ (black-box).}
\label{table:eli5_table}
\end{table}

\begin{figure*}[!t]
    \centering
    \includegraphics[width=0.49\textwidth]{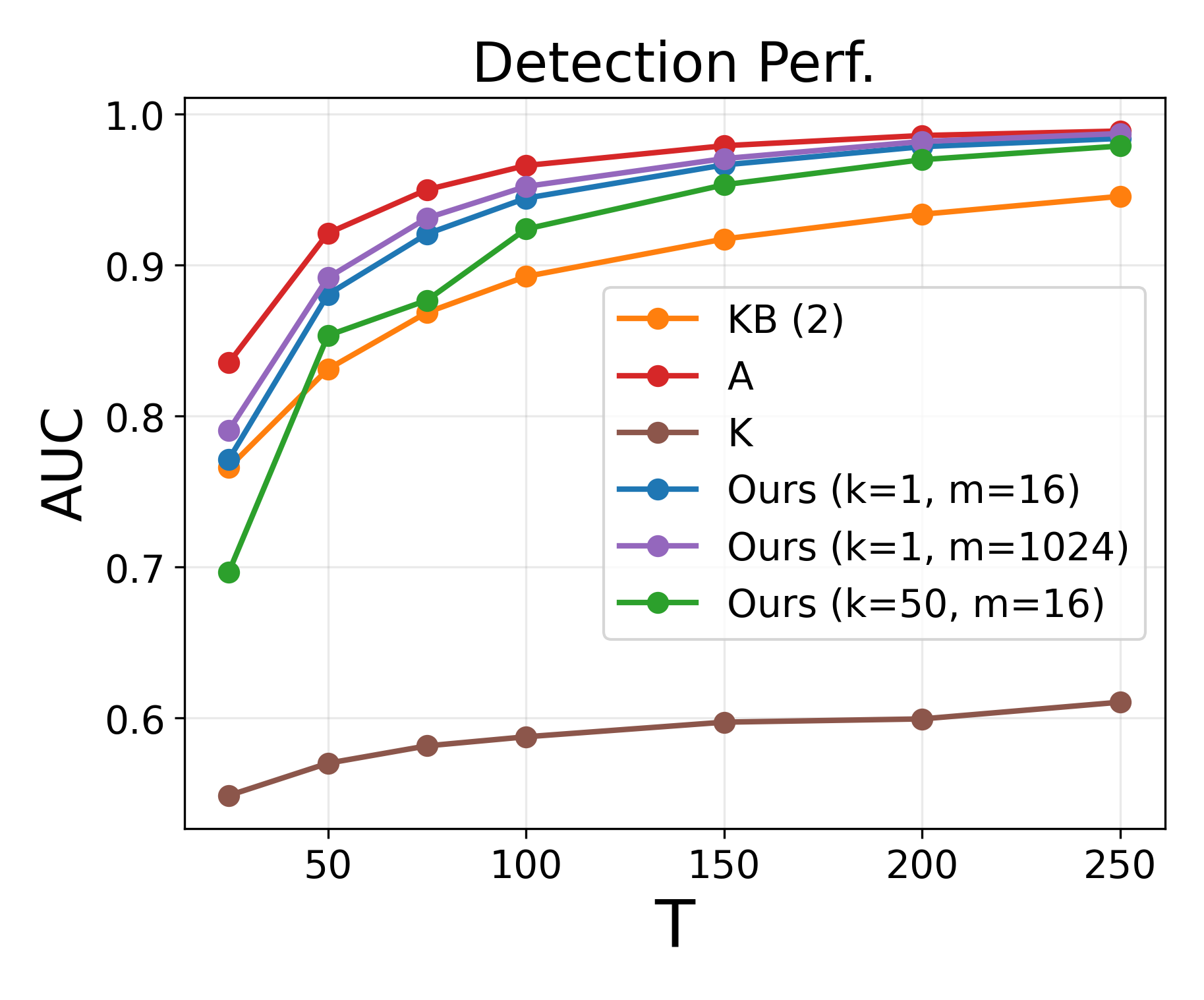}
    \includegraphics[width=0.49\textwidth]{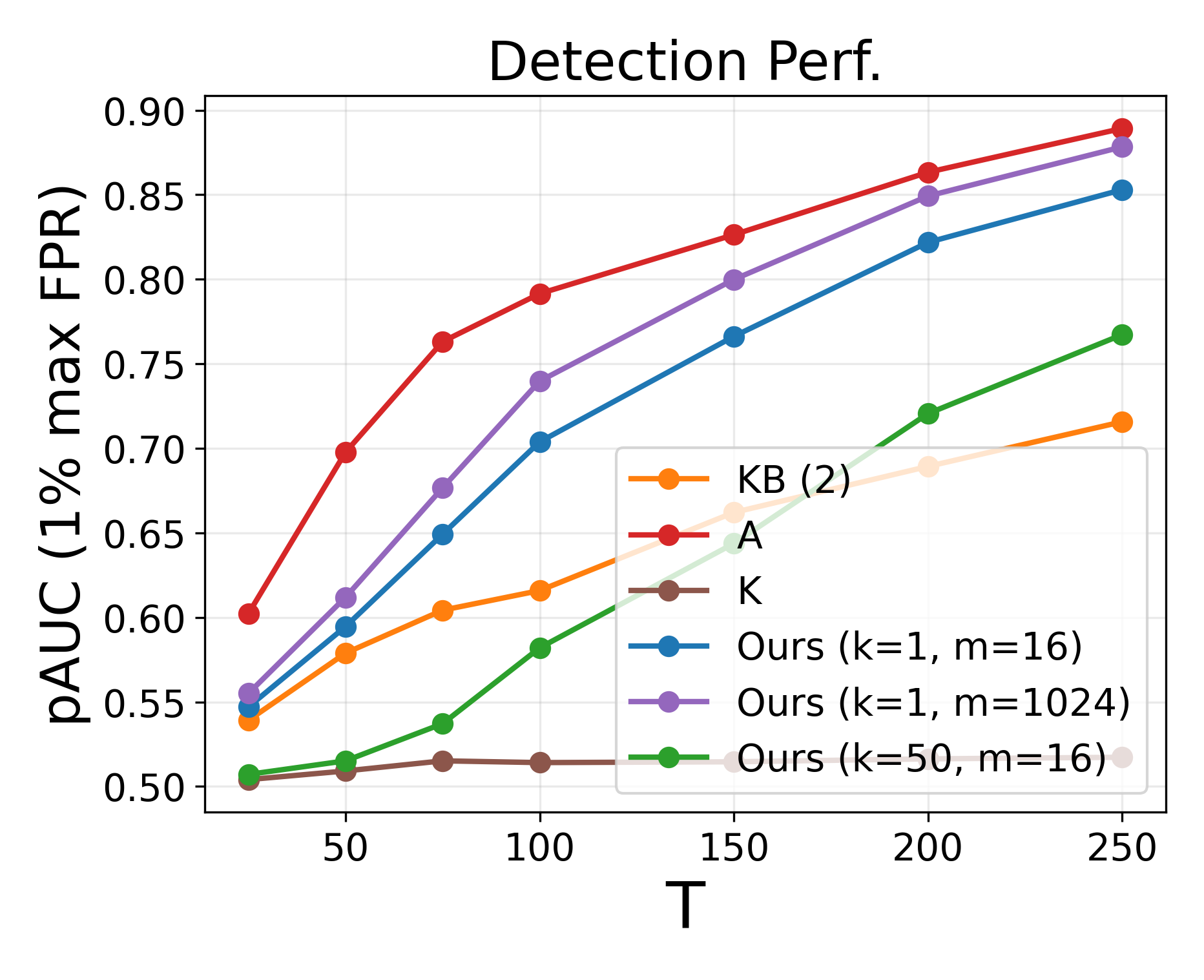}
    \caption{Impact of number of test samples $T$ on detection performance (flat scheme) for \textsc{Gemma-7B-instruct} on \emph{eli5-category}.}
    \label{fig:eli5_auc_t}
\end{figure*}

\begin{figure*}[!t]
    \centering
    \includegraphics[width=0.49\textwidth]{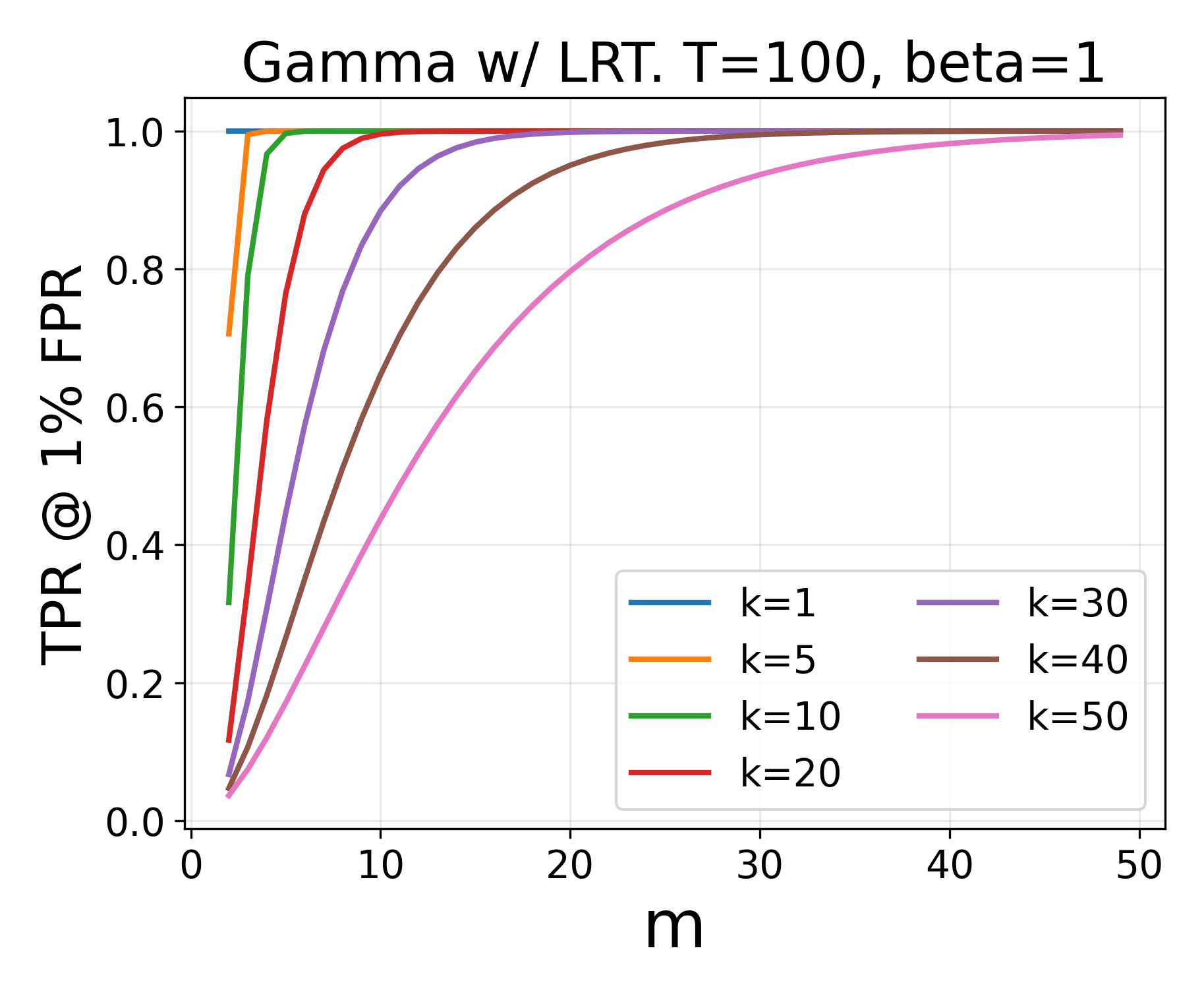}
    \includegraphics[width=0.49\textwidth]{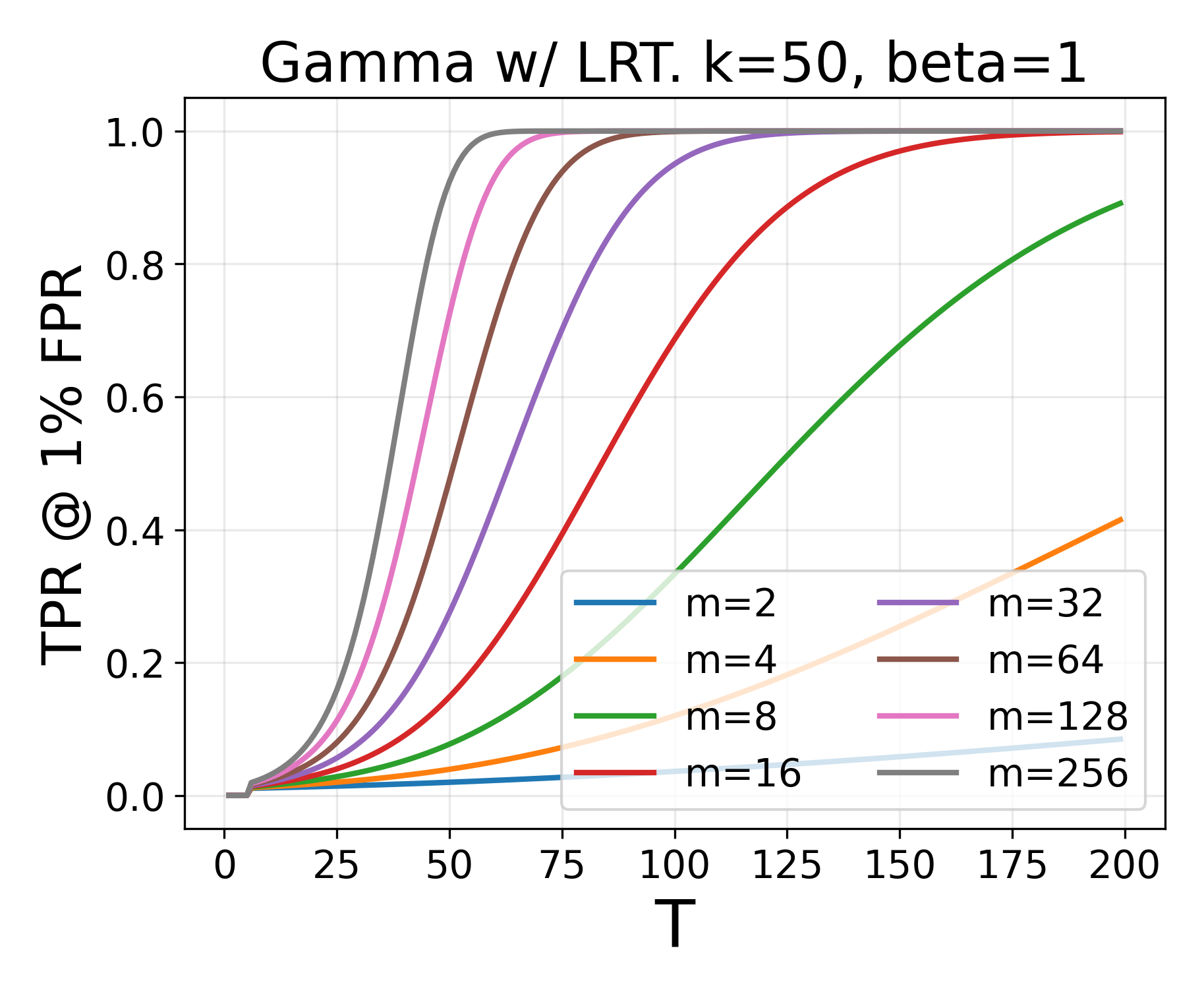}
    \caption{Detection performance (TPR at 1\% FPR) of the likelihood ratio test (LRT) predicted by Theorem \ref{thm:distortion_gamma}. \textbf{Left}: Effect of $m$, the number of sampled sequences, for various sequence lengths $k$, when the number of test samples $T=100$. \textbf{Right}: Effect of $T$ for various $m$'s when $k=50$. We see that degradation due to large $k$ can be offset by using a larger $m$ and that the hit from small $m$ can be compensated by large $T$.}
    \label{fig:gamma_lrt}
\end{figure*}

\begin{figure*}[!t]
    \centering
    \includegraphics[width=0.49\textwidth]{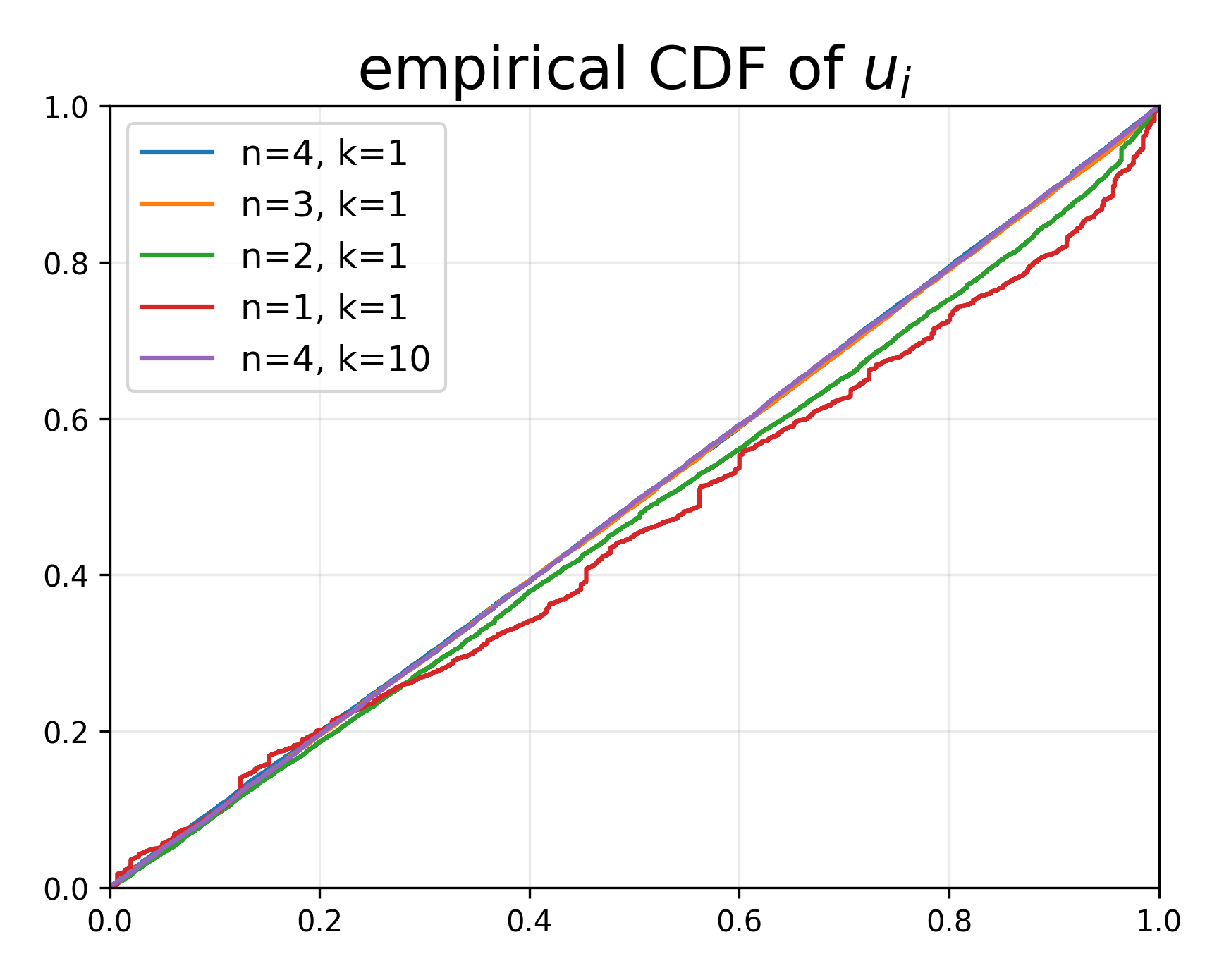}
    \caption{Empirical CDF of the $u_i$'s tracked during watermarking 20 random prompts from \emph{databricks-dolly-15k} using \textsc{Mistral-7B-instruct} with our flat scheme. We can clearly see that decreasing $n$ causes the empirical distribution to deviate further from the CDF of $U(0, 1)$ (the line $y = x$).}
    \label{fig:empirical_cdf}
\end{figure*}

\begin{table}[!t]
\centering
\setlength{\tabcolsep}{5pt}
\begin{tabular}{r||c|c|c|c|c|c}
                 & AUC   & pAUC  & C. AUC & C. pAUC & P. AUC & P. pAUC \\ \toprule 
Aaronson Cor. (sum $p$-value) & 97.1 & 75.2 & 92.5  & 62.9   & 57.3  & 50.1 \\ \bottomrule
\end{tabular}
\caption{Detection performance (mixed $T$'s) when a sum-based $p$-value is used in the length correction of \cite{aaronson}. We observe slightly worse performance than using Fisher's method to combine the $p$-values of individual tests. AUCs and pAUCs are scaled by 100. For paraphrasing, target lengths of \{150, 200, 250\} are used in computing AUC and pAUC.}
\label{table:ac_sum_pvalue}
\end{table}

\begin{table*}[!t]
\small
\centering
\setlength{\tabcolsep}{5pt}
\begin{tabular}{c|c|c|c|c|c|}
$n$ & $k$ & $m$ & \# samples & KS & $p$-value 
\\ \toprule 
4 & 1 & 32 & 38605 & 0.0122 & 2.0e-05\\
3 & 1 & 32 & 38997 & 0.0124 & 1.3e-05\\
2 & 1 & 32 & 27055 & 0.0503 & 5.9e-60\\
1 & 1 & 32 & 21185 & 0.0920 & 2.4e-156\\
4 & 10 & 32 & 15655 & 0.0104 & 6.7e-02\\
\bottomrule
\end{tabular}
\caption{KS statistic and corresponding $p$-value for our test of independence. We see that decreasing $n$ results in a larger KS (more distortion) and a lower $p$-value for rejecting the null that the seeds are independent.}
\label{table:ks}
\end{table*}

\clearpage

\subsection{Omitted Proofs}

\begin{lemma}
\label{lemma:iid}
Assume all draws from $\texttt{LM}\left(\;\cdot\;|\;P;\;k\right)$ are i.i.d. with distribution $\mu$ and that the unique seeds across $n$-grams and sequences, $\{S_{i,l}\}_{i,l}$ are conditionally independent given the counts of the sampled sequences. Then the output of any number of calls to $\textsc{WatermarkSingle}$ with $\texttt{LM}$ using key $K$ are also i.i.d. with distribution $\mu$.
\end{lemma}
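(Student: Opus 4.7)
The plan is to analyze a single call to \textsc{WatermarkSingle} with prompt $P$ and show its output has distribution $\mu$; multiple calls then consume disjoint blocks of fresh randomness and are therefore mutually independent, which upgrades the single-call conclusion to i.i.d. For a single call I would condition on the collection of distinct sampled sequences $(X_1,\dots,X_j)$ together with their counts $(c_1,\dots,c_j)$, where $\sum_i c_i = m$, and reduce everything to showing that the winning index satisfies $\Pr(i^{*} = k \mid \text{counts}) = c_k/m$. Marginalising then gives that $X_{i^{*}}$ is distributed as $Q_I$ for a uniform $I\in\{1,\dots,m\}$ independent of the i.i.d.\ draws $Q_1,\dots,Q_m\sim\mu$, and hence $X_{i^{*}}\sim\mu$.

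The first step is that after deduplication each $X_i$ carries a non-empty seed set $S_i$ and the $S_i$'s are pairwise disjoint by construction of the algorithm (including the safeguard that adds a fresh unused seed whenever some $S_i$ would be empty). Using the hypothesis that the deduplicated seeds are conditionally independent given the counts, the PRNs $F[S_{i,l}]$ are conditionally i.i.d.\ $\sim F$. Since $F$ is continuous, $\sum_{l} F[S_{i,l}] \sim F_{|S_i|}$, and the probability integral transform yields $u_i = F_{|S_i|}\!\left(\sum_l F[S_{i,l}]\right) \sim U(0,1)$. Disjointness of the $S_i$'s then makes $u_1,\dots,u_j$ mutually independent conditional on the counts.

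The second step, which I expect to be the main technical obstacle, is the argmax computation with the fractional exponents $m/c_i$. Setting $E_i = -\log u_i$, which is $\mathrm{Exp}(1)$ and independent across $i$, one has $u_i^{m/c_i} = \exp(-(m/c_i) E_i)$, so $i^{*} = \argmax_i u_i^{m/c_i} = \argmin_i (m/c_i) E_i$. The right-hand side is a race between independent exponentials with rates $c_i/m$, and the classical minimum-of-exponentials identity gives $\Pr(i^{*} = k \mid \text{counts}) = (c_k/m) / \sum_i (c_i/m) = c_k/m$, using $\sum_i c_i = m$. This is precisely the empirical distribution of the $Q_r$'s, which was the reduction target.

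Finally, for multiple calls I would interpret the conditional-independence hypothesis as holding jointly across calls (equivalently, treating the cryptographic hash as collision-free so each call's unique seed set is a fresh independent batch). Under this reading, the $r$th call's output is a measurable function only of the $r$th batch of $m$ LM draws and the $r$th batch of seeds, which are mutually independent across $r$. Combined with the single-call result this gives i.i.d.\ outputs with the common distribution $\mu$.
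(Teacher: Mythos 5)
Your proposal is correct and follows essentially the same route as the paper: condition on the counts, use the conditional independence of deduplicated seeds plus the probability integral transform to get i.i.d.\ uniform $u_i$'s, show the winning index is selected with probability $c_k/m$, and conclude the output is a uniformly random one of the $m$ i.i.d.\ draws; independence across calls comes from each call being a function of its own independent batch of randomness. Your exponential-race argument for the argmax is the same computation the paper performs via the Gumbel-max trick (and the paper's own Lemma on the Beta distribution of $u_{i^*}$ uses exactly your exponential formulation), and your ``uniform index'' marginalisation is a slicker packaging of the paper's explicit binomial-mean calculation.
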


\begin{proof}
For concreteness, let $\tilde{m}$ be the number of calls to \textsc{WatermarkSingle}, where the $v$-th call draws $m$ samples $\mathbf{Q}_v = \left\{Q_{(v,1)}, \dots, Q_{(v,m)}\right\}$ from $\texttt{LM}\left(\;\cdot\;|\;P;\;k\right)$. First we show (mutual) independence. We note that because $F$, $m$, $K$, $P$ are all fixed, non-random quantities, the watermark selection process embodied in Algorithm~\ref{algo:flat} can be seen as a \emph{deterministic} function $\psi_{F,m,K,P}$ that takes $m$ input sequences $\mathbf{Q}_v$ and outputs one of them. The randomness in the deduplication of $n$-grams is a non-issue since it is independent across calls. Since functions of independent random variables are independent and 
$\left\{\mathbf{Q}_v\right\}_{v=1}^{\tilde{m}}$ is independent, so is $\left\{\psi_{F,m,K,P}\left(\mathbf{Q}_v\right)\right\}_{v=1}^{\tilde{m}}$. This proves independence. 

Now, we prove that the outputs are identically distributed with the same distribution as their inputs. To do this, consider the $v$-th call in isolation and for ease of notation, let $\left\{Q_1,\dots,Q_m\right\} = \mathbf{Q}_v$ and $X_w = \psi_{F,m,K,P}(\mathbf{Q}_v)$. Let $\{(X_1, c_1), \dots, (X_j, c_j)\}$ be the unique sequences and corresponding counts. Note that the $\{(X_i, c_i)\}_i$ need not be independent (it is easy to come up with a counter-example). Let $S_i$ be the integer seeds for $X_i$ after deduplication. Conditioned on $(c_1,\dots,c_j)$, $\{S_{i,l}\}_{i,l}$ is independent and so $\{R_{i,l}\}_{i,l}$ consists of \emph{i.i.d.} draws from $F$ by virtue of pseudorandomness. As $F$ is also continuous, we have that when conditioned on $(c_1,\dots,c_j)$, $u_i \overset{iid}{\sim} U(0, 1)$ for $i=1,\dots,j$, by the inverse-sampling theorem.

Let $x$ be any sequence. We wish to show that $\prob(X_w = x) = \mu(x)$. Let $c = \sum_i \mathbf{1}[Q_i = x]$. The independence of the $\mathbf{1}[Q_i = x]$'s follows from the independence of the $Q_i$'s, and thus $c \sim \text{Binomial}(m, \mu(x))$.
Clearly, $\prob(\{x \; \text{selected}\}\; |\; c=0) = 0$. If $c > 0$ then obviously one of the $X_i$'s is $x$, and we can, without loss of generality, label $X_1 = x$ and $c_1 = c$, so that $\prob(\{x \; \text{selected}\}\; |\; c=i) = \prob(\{X_1 \; \text{selected}\}\; |\; c_1=i)$. Now,
\begin{align*}
\prob(\{X_1 \; \text{selected}\}\; |\; c_1,\dots,c_j) &= \prob\left(\left\{1 = \operatorname{argmax}_t u_t^{m/c_t}\right\} \; \middle| \; c_1,\dots,c_j\right)\\
&= \prob\left(\left\{1 = \operatorname{argmax}_t \frac{\log(u_t)}{c_t/m}\right\} \; \middle| \; c_1,\dots,c_j\right)\\
&= \prob\left(\left\{1 = \operatorname{argmin}_t \log(-\log(u_t)) - \log(c_t/m)\right\} \; \middle| \; c_1,\dots,c_j\right)\\
&= \prob\left(\left\{1 = \operatorname{argmax}_t -\log(-\log(u_t)) + \log(c_t/m)\right\} \; \middle| \; c_1,\dots,c_j\right).
\end{align*}
Let $g_t = -\log(-\log(u_t))$. It is a known fact that if $u_t \overset{iid}{\sim} U(0, 1)$, then $g_t \overset{iid}{\sim} \text{Gumbel}(0, 1)$.
Now we can apply what is often referred to the "Gumbel-Max trick" in machine learning. Conditioned on $(c_1,\dots,c_j)$,
\begin{align*}
\operatorname{argmax}_t g_t + \log(c_t/m) \sim \text{Categorial}\left(\frac{c_t/m}{\sum_t {c_t/m}}\right)_t = \text{Categorial}\left(c_t/m\right)_t.
\end{align*}
Thus,
\begin{align*}
\prob(\{X_1 \; \text{selected}\}\; |\; c_1=i) &=
\sum_{c_2,\dots,c_j}\frac{\prob(\{X_1 \; \text{selected}\}\; |\; c_1=i, c_2,\dots,c_j) \prob(c_1=i, c_2,\dots,c_j)}{\prob(c_1=i)}\\
&= \frac{i/m \;\prob(c_1=i)}{\prob(c_1=i)} = i/m.
\end{align*}
Putting it all together, we have that
\begin{align*}
\prob(X_w = x) &= \sum_{i=0}^m \prob(\{x \; \text{selected}\}\; |\; c=i) \prob(c = i)\\
&= \sum_{i=0}^m \frac{i}{m} \binom{m}{i} \mu(x)^i (1-\mu(x))^{m-i}\\
&= \frac{1}{m} m \mu(x) = \mu(x).
\end{align*}
We have shown that the outputs of \textsc{WatermarkSingle} are mutually independent and carry the same distribution $\mu$ as their inputs.
\end{proof}

\begin{remark}
The proof of Lemma \ref{lemma:iid} treats the secret key $K$ as fixed (possibly unknown); treating it as random changes the story, as we illustrate with the following toy example.
\end{remark}
Suppose that regardless of the conditioning prompt, the LLM outputs one of two sequences --- $x_1$ or $x_2$ with equal probability. Let $u_i = \textsc{ScoreSeqs}(F, (x_i), K, n, P)$ for $i \in \{1,2\}$.
If $m$ is very large, then it becomes very likely that $X_1 = x_1$, $X_2 = x_2$ (modulo the labeling) and $c_1 \approx c_2 \approx m/2$ and so $\operatorname{argmax}_{i=1}^{2} u_i^{m/c_i} \approx \operatorname{argmax}_i u_i$. The outputs to two sequential calls to \textsc{WatermarkSingle} should not be independent, because the output and key are dependent and the key is shared across calls. Concretely, if the output to the first call is $x_1$ we learn that our scheme with key $K$ prefers $x_1$ over $x_2$, and so we will likely output $x_1$ in the second call. In contrast, if we had not observed the first call (and our prior on the key had not been updated), we may have returned each sequence with equal probability.

\begin{proof}[Proof of Theorem \ref{thm:distortion}]
We first show that \textsc{WatermarkSingle} and \textsc{WatermarkRecursive} are distortion-free and then that autoregressive calls to them as done by \textsc{Watermark} preserves this property.

To show \textsc{WatermarkSingle} is distortion-free, we observe that the \texttt{LM} argument supplied is the true underlying language model $\mu$ and that our stochastic samples from the model are i.i.d., so we can apply Lemma \ref{lemma:iid} directly.

Distortion-free for \textsc{WatermarkRecursive} follows easily from induction on $t$, the number of keys (and hence the number of recursive calls). When $t=1$, the LLM is the true underlying language model, so the outputs are i.i.d. from $\mu$. We get $t=v+1$ by combining Lemma \ref{lemma:iid} with the inductive step --- that the outputs of \textsc{WatermarkRecursive} with keys $(K_2, \dots, K_{v+1})$ are i.i.d. from $\mu$.

Finally, we show that autoregressive decoding where sequences no longer than $k$ tokens are generated one at a time via watermarking continues to be distortion-free.

To do this, we introduce two sets of random variables: $\{X_u^{(i)}\}_{i=1}^\infty$ represents $k$-sized chunks of the model's response when watermarking is \emph{not} employed --- that is, $X_u^{(i)}$ represents non-watermarked response tokens for indices $(i-1)k + 1$ to $ik$. Unused chunks can be set to a sentinel value like $\phi$. $\{X_w^{(i)}\}_i$ represents the same collection but when \textsc{Watermark} is employed. Let $x$ be a sequence of any length. Partition $x$ into contiguous $k$-sized chunks $(x_1, \dots, x_t)$. Note that $x_t$ may have length less than $k$ if the stop-token was reached in that chunk, but all other chunks have exactly $k$ tokens. With $P$ as the original prompt, we need to show $\prob(X_w = x \mid P) = \prob(X_u = x \mid P)$, where $X_w$ and $X_u$ are the watermarked and non-watermarked responses of any length.
\begin{align*}
\prob(X_w = x \mid P) &= \prob(X_w^{(t)}=x_t \mid X_w^{(t-1)} = x_{t-1}, \dots, X_w^{(1)} = x_1, P) \cdots \prob(X_w^{(1)}=x_1 \mid P)\\
&= \prob(X_w^{(1)}=x_t \mid (P, x_1, \dots, x_{t-1})) \cdots \prob(X_w^{(1)}=x_1 \mid P)\\
\shortintertext{Because \textsc{WatermarkSingle} and \textsc{WatermarkRecursive} are distortion-free:}
&= \prob(X_u^{(1)}=x_t \mid (P, x_1, \dots, x_{t-1})) \cdots \prob(X_u^{(1)}=x_1 
\mid P)\\
&= \prob(X_u = x).
\end{align*}
\end{proof}

\begin{proof}[Proof of Theorem \ref{thm:fpr_general}]
First consider the flat scheme. Under the null, given our assumption of independence, $R_j \overset{iid}\sim F$, so $F_{|R|}\left(\sum_j R_j\right) \sim \unif$ and the result follows. For the recursive scheme, we know from the flat scheme and from assumed independence that $P_j \overset{iid}\sim \unif$, where $P_j$ is the $p$-value associated with the $j$-th key. Thus, $y \sim \chi_{2|P|}^2$ so that $\chi_{2|P|}^2(y) \sim \unif$.
\end{proof}

\begin{lemma}
\label{lemma:beta}
Assume the conditions of Theorem \ref{thm:rocauc_unif}. Conditioned on the counts $c$ of each token in the vocabulary, and which token id $i^*$ was selected (i.e. is the argmax), $u_{i^*} \sim \text{Beta}(m/c_{i^*}, 1)$.
\end{lemma}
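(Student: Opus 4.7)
The plan is to reduce this to a direct order-statistics calculation for i.i.d. uniform random variables, re-using the probability $c_{i^*}/m$ for the selection event that was computed via the Gumbel-max trick in the proof of Lemma~\ref{lemma:iid}.

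First, I would observe that in the $k=1$, $F = U(0,1)$ setting each unique sampled token $X_i$ contributes exactly one $n$-gram (its $(n-1)$ common context tokens concatenated with $X_i$), so after deduplication every unique sequence still has $|S_i|=1$, and $u_i = F[S_{i,1}]$ is a single pseudorandom draw. By the pseudorandomness of the PRNG and the conditional-independence-of-seeds assumption inherited from Theorem~\ref{thm:distortion}, the values $u_1,\dots,u_j$ (indexed over distinct sampled vocabulary tokens) are conditionally i.i.d.\ $U(0,1)$ given the counts $c$.

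Next, I would rewrite the selection event in a more convenient form. Since $x \mapsto x^{m/c_i}$ is strictly increasing on $(0,1)$, the event $\{i^* = \arg\max_i u_i^{m/c_i}\}$ is equivalent to $u_j < u_{i^*}^{c_j/c_{i^*}}$ for all $j\neq i^*$. Conditioning on $u_{i^*}=u$ and using independence of the remaining $u_j$'s,
\begin{equation*}
\prob\!\left(\forall j\neq i^*:\; u_j < u^{c_j/c_{i^*}} \;\middle|\; c,\, u_{i^*}=u\right)
= \prod_{j\neq i^*} u^{c_j/c_{i^*}} = u^{(m-c_{i^*})/c_{i^*}},
\end{equation*}
since $\sum_j c_j = m$. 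Integrating against the uniform density of $u_{i^*}$ then gives
\begin{equation*}
\prob\!\left(u_{i^*}\leq x,\; i^* \text{ selected} \;\middle|\; c\right)
= \int_0^x u^{(m-c_{i^*})/c_{i^*}}\,du = \frac{c_{i^*}}{m}\,x^{m/c_{i^*}}.
\end{equation*}

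Finally, I would invoke the fact, already derived in the proof of Lemma~\ref{lemma:iid}, that $\prob(i^* \text{ selected}\mid c) = c_{i^*}/m$. Dividing, the conditional CDF becomes $x^{m/c_{i^*}}$, which is exactly the CDF of Beta$(m/c_{i^*},1)$, proving the claim. The boundary case $c_{i^*}=m$ (a single unique token) reduces to Beta$(1,1)=U(0,1)$, which matches the fact that $u_{i^*}$ is unconditionally uniform.

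The main obstacle, such as it is, lies not in the calculation but in cleanly justifying the first step: namely, that conditional on $c$ the $u_i$'s are genuinely i.i.d.\ uniform. For $k=1$ this is essentially trivial because each unique token produces a unique $n$-gram with no deduplication needed, so the conditional independence hypothesis of Theorem~\ref{thm:distortion} plus pseudorandomness hands us the i.i.d.\ structure directly; otherwise the argument is a routine integration combined with Bayes' rule.
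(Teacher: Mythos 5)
Your proof is correct, but it takes a different route from the paper's. The paper transforms $u_i$ into exponential random variables $z_i = -m\log(u_i)/c_i \sim \mathrm{Exp}(c_i/m)$, notes that the argmax of $u_i^{m/c_i}$ is the argmin of the $z_i$, and then invokes the standard facts about an exponential race: the minimum of independent exponentials is $\mathrm{Exp}\bigl(\sum_i c_i/m\bigr) = \mathrm{Exp}(1)$ and (implicitly) is independent of which index attains it, so that $u_{i^*} = \exp(-c_{i^*} z_{i^*}/m)$ has CDF $t^{m/c_{i^*}}$ even after conditioning on $i^*$. You instead compute the joint probability $\prob(u_{i^*}\le x,\ i^*\ \text{selected}\mid c)$ by direct integration over the uniform density and divide by the selection probability $c_{i^*}/m$. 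The two arguments are of comparable length; yours has the advantage of making explicit the conditioning on the argmax event that the paper handles only via the tersely stated ``nice properties of the Exponential'' (namely the independence of the minimum's value and its index), and as a byproduct your integral evaluated at $x=1$ re-derives $\prob(i^*\ \text{selected}\mid c)=c_{i^*}/m$ rather than importing it from Lemma~\ref{lemma:iid}. Your preliminary discussion of why the $u_i$ are conditionally i.i.d.\ uniform given the counts (via $k=1$, $|S_i|=1$, and the seed-independence assumption) is also a point the paper's proof of this lemma takes for granted, having established it inside the proof of Lemma~\ref{lemma:iid}.
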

\begin{proof}[Proof of Lemma \ref{lemma:beta}]
Let $z_i = -m \log(u_i) /c_i$, where $u_i \overset{iid}\sim U(0, 1)$. Then, $z_i \sim \text{Exp}(c_i/m)$ and
\begin{align*}
i^* &= \operatorname{argmax}_{i=1}^{j} u_i^{m/c_i} = \operatorname{argmin}_i -m \log(u_i)/c_i = \operatorname{argmin}_i z_i.
\end{align*}
By nice properties of the Exponential, we have that
\begin{align*}
z_{i^*} &\sim \text{Exp}\left(\sum_i \frac{c_i}{m}\right) = \text{Exp}(1).
\end{align*}
$u_{i^*} = \exp(-c_{i^*}z_{i^*}/m)$, so
\begin{align*}
\prob(u_{i^*} \leq t) = \prob(z_{i^*} \geq -m \log(t) /c_{i^*}) = \exp(m \log(t) /c_{i^*}) = t^{m/c_{i^*}}.
\end{align*}
Differentiating this with respect to $t$, we recover the pdf of $\text{Beta}(m/c_{i^*}, 1)$.
\end{proof}

\begin{proof}[Proof of Theorem \ref{thm:rocauc_unif}]
$F$ is $\unif$. The detection score is $F_T\left(\sum_j R_j\right)$ with $R_j \overset{iid}\sim F$ under $\mathcal{H}_0$ and when conditioned on the counts $C$ and the argmax token ids $I^*$, $R_j \sim \text{Beta}\left(m/C_{j,I^*_j}, 1\right)$ under $\mathcal{H}_1$. Redefine $s_0$ and $s_1$ to be $\sum_j R_j$ under $\mathcal{H}_0$ and $\mathcal{H}_1$ respectively.
\begin{align*}
\prob(F_T(s_1) \geq F_T(s_0)) = \prob(s_1 \geq s_0) = \expect_t (s_1 \geq t),
\end{align*}
where $t \sim \text{IrwinHall}(T)$ since $s_0$ is the sum of $T$ i.i.d. $\unif$'s. Our task now is to find a lower-bound for $s_1$. Noting independence across tokens and that $R_j \in [0, 1]$, we can use Popoviciu's bound on variance to obtain
\begin{align*}
\variance(s_1) = \sum_j \variance(R_j) \leq \frac{T}{4}(1-0)^2 = T/4.
\end{align*}

Plugging in the expectation of a Beta and recalling that when conditioned on $C$, the probability that token $i$ in the vocabulary is the argmax token at step $j$ is $C_{j,i}/m$, we have
\begin{align*}
\expect(s_1) &= \sum_{j=1}^T \expect_{C} \left(\sum_{i=1}^V \frac{C_{j,i}/m}{1 + C_{j,i}/m}\right).
\end{align*}
With tedious calculation, it can be shown that
\begin{align*}
\frac{x}{1+x} &\geq \frac{x}{2} - \lambda x \log(x),\; \text{for}\; x = \frac{j}{m}, j \in [1, \dots, m],\; \text{where}\\
\lambda &= \frac{1}{\log(m)}\left(\frac{m}{m+1} - \frac{1}{2}\right).
\end{align*}
Thus,
\begin{align*}
\expect(s_1) &\geq \sum_{j=1}^T \left( \frac{1}{2} - \lambda \expect_{C} \sum_{i=1}^V \mathbf{1}\left[C_{j,i} > 0\right]\; \frac{C_{j,i}}{m} \log\left(\frac{C_{j,i}}{m}\right) \right).\\
&= \sum_j 1/2 + \lambda \alpha = T/2 + \lambda T \alpha.
\end{align*}
With bounds on expectation and variance, we proceed to upper-bound the error.
Firstly, we have that,
\begin{align*}
\expect(s_1-s_0) &\geq T/2 + \lambda T \alpha - T/2 = \lambda T \alpha \geq 0,\\
\variance(s_1-s_0) &\leq T/4 + T/12 = T/3.
\end{align*}
\begin{align*}
\prob(s_1 \leq s_0) &= \prob(s_1-s_0-\expect(s_1-s_0) \leq -\expect(s_1-s_0))\\
&\leq \prob(s_1-s_0-\expect(s_1-s_0) \leq -\lambda T \alpha)\\
&\leq \frac{\variance(s_1-s_0)}{\variance(s_1-s_0) + (\lambda T \alpha)^2}\\
&\leq \frac{1}{1 + 3T\lambda^2 \alpha^2},
\end{align*}
where the penultimate line follows from Cantelli's inequality. Thus, we have that
\begin{align*}
\prob(s_1 \geq s_0) = 1-\prob(s_1 \leq s_0) \geq \frac{1}{1 + 1/(3T\lambda^2 \alpha^2)}.
\end{align*}
\end{proof}

\begin{proof}[Proof of Theorem \ref{thm:distortion_gamma}]
Let $r$ be the PRF value for some $n$-gram from the text we wish to text. Let $F_0 = -\text{Gamma}(1/k, \beta)$ with pdf $f_0$ and $F_1 = -\text{Gamma}(1/k, m\beta)$ with pdf $f_1$. By definition, $r \sim F_0$ under $\mathcal{H}_0$.
By our assumptions, $c_i = 1$ and $|R_i| = k,\;\forall i$. So, $\operatorname{argmax}_{i=1}^{m} u_i^{m/c_i} = \operatorname{argmax}_{i=1}^{m} u_i = \operatorname{argmax}_i F_k\left(\sum_j R_{i,j}\right) = \operatorname{argmax}_i \sum_j R_{i,j} = \operatorname{argmin}_i -\sum_j R_{i,j}$, where the second-to-last equality follows from the monotonicity of $F_k$. $-\sum_j R_{i,j} \sim \text{Gamma}(k/k, \beta) = \text{Exp}(1, \beta)$. $\sum_j R_{i^*,j} \sim -\text{Exp}(1, m\beta)$, because the minimum of Exponentials is Exponential. Thus, $\forall j, \; R_{i^*,j} \sim -\text{Gamma}(1/k, m\beta) = F_1$ and $r \sim F_1$ under $\mathcal{H}_1$.
Now let $R$ refer to the $T$ test-time PRF values. From the independence of test $n$-grams, the log-likelihood ratio test has score $s(R) = \sum_{i=1}^T\left(\log f_1(R_i) - \log f_0(R_i)\right)$ and the fact that it is the uniformly most powerful test follows directly from the Neyman–Pearson lemma. We now have that,
\begin{align*}
f_0(r) &= \frac{\beta^{1/k}}{\Gamma(1/k)} (-r)^{1/k-1}\exp(\beta r),\\
f_1(r) &= \frac{m^{1/k}\beta^{1/k}}{\Gamma(1/k)} (-r)^{1/k-1}\exp(m \beta r),\\
s(R) &= \frac{T}{k}\log(m) + (m - 1) \beta \sum_{i=1}^T R_i,\; \text{so that}\\
P_{\mathcal{H}_0}(s > t) &= P_{\mathcal{H}_0}\left((m-1) \beta \sum_i R_i > t - \frac{T}{k} \log(m)\right) = \text{Gamma}(T/k, \beta)\left(Q(t)\right),\;\text{and}\\
P_{\mathcal{H}_1}(s \leq t) &= P_{\mathcal{H}_1}\left((m-1) \beta \sum_i R_i \leq t - \frac{T}{k} \log(m)\right) = 1-\text{Gamma}(T/k, m\beta)\left(Q(t)\right),\; \text{where}\\
Q(t) &= \frac{T \log(m)/k - t}{(m-1)\beta}.
\end{align*}
\end{proof}

\clearpage
\section{Code}
For full reproducibility, we provide a Python implementation of our algorithm, largely following the pseudocode presented earlier.
\begin{minted}
[
frame=lines,
framesep=2mm,
baselinestretch=1.2,
bgcolor=backcolour,
fontsize=\footnotesize,
linenos
]
{python}
import collections
from collections import abc
import hashlib
import typing
import numpy as np
import scipy

TokenIds: typing.TypeAlias = tuple[int]


def int_hash(tokens: TokenIds, key: int) -> int:
  m = hashlib.sha256()
  code = str(key) + str(tokens)
  m.update(bytes(code, 'utf-8'))
  return int(m.hexdigest(), 16)


def calc_prn(seeds: list[int]) -> float:
  if not seeds:
    seeds = [None]
  # A different choice of F / F_k can be plugged in here.
  vals = [np.random.default_rng(seed).uniform() for seed in seeds]
  return scipy.stats.irwinhall.cdf(np.sum(vals), len(vals))


def score_seqs(
    seqs: list[TokenIds], key: int, ctx_len: int, prefix: TokenIds
) -> dict[TokenIds, float]:
  """Return dictionary mapping each sequence to a pseudorandom value."""
  seeds_with_id = []
  for seq_id, tokens in enumerate(seqs):
    tokens_with_prefix = prefix + tokens
    for i in range(len(prefix), len(tokens_with_prefix)):
      ctx = tokens_with_prefix[max(0, i+1-ctx_len):i+1]
      seed = int_hash(ctx, key)
      seeds_with_id.append((seed, seq_id))
  np.random.shuffle(seeds_with_id)
  used_seeds = set()
  deduped_seeds = [[] for _ in seqs]
  for seed, seq_id in seeds_with_id:
    if seed not in used_seeds:
      deduped_seeds[seq_id].append(seed)
      used_seeds.add(seed)
  seq_to_prn = dict()
  for seq_id, seeds in enumerate(deduped_seeds):
    seq_to_prn[seqs[seq_id]] = calc_prn(seeds)
  return seq_to_prn


def watermark(
    key: int,
    n_cands: int,
    ctx_len: int,
    prompt: TokenIds,
    seq_len: int,
    llm: abc.Callable[[TokenIds, int], TokenIds],
    stop_cond: abc.Callable[[TokenIds], bool],
) -> TokenIds:
  tokens = ()
  while not stop_cond(tokens):
    tokens = tokens + watermark_single(
        key, n_cands, ctx_len, prompt + tokens, seq_len, llm
    )
  return tokens


def watermark_single(
    key: int,
    n_cands: int,
    ctx_len: int,
    prompt: TokenIds,
    seq_len: int,
    llm: abc.Callable[[TokenIds, int], TokenIds],
) -> TokenIds:
  seqs = [llm(prompt, seq_len) for _ in range(n_cands)]
  seq_counter = collections.Counter(seqs)
  unique_seqs = list(seq_counter.keys())
  seq_to_prn = score_seqs(unique_seqs, key, ctx_len, prompt)
  scores = [
      seq_to_prn[seq] ** (float(n_cands) / seq_counter[seq])
      for seq in unique_seqs
  ]
  return unique_seqs[np.argmax(scores)]


def detect(seq: TokenIds, key: int, ctx_len: int) -> float:
  return score_seqs([seq], key, ctx_len, ())[seq]


def watermark_recursive(
    keys: list[int],
    n_cands_per_key: int,
    ctx_len: int,
    prompt: TokenIds,
    seq_len: int,
    llm: abc.Callable[[TokenIds, int], TokenIds],
    stop_cond: abc.Callable[[TokenIds], bool],
) -> TokenIds:
  # Total number of candidates is n_cands_per_key**len(keys).
  tokens = ()
  while not stop_cond(tokens):
    tokens = tokens + watermark_recursive_single(
        keys, n_cands_per_key, ctx_len, prompt + tokens, seq_len, llm
    )
  return tokens


def watermark_recursive_single(
    keys: list[int],
    n_cands_per_key: int,
    ctx_len: int,
    prompt: TokenIds,
    seq_len: int,
    llm: abc.Callable[[TokenIds, int], TokenIds],
) -> TokenIds:
  if len(keys) == 1:
    llm_rec = llm
  else:
    llm_rec = lambda _prompt, _seq_len: watermark_recursive_single(
        keys[1:], n_cands_per_key, ctx_len, _prompt, _seq_len, llm
    )
  return watermark_single(
      keys[0], n_cands_per_key, ctx_len, prompt, seq_len, llm_rec
  )


def detect_recursive(
    seq: TokenIds,
    keys: list[int],
    ctx_len: int
) -> float:
  pvals = [1 - detect(seq, key, ctx_len) for key in keys]
  y = -2*np.sum(np.log(pvals))
  return scipy.stats.chi2.cdf(y, 2*len(pvals))
\end{minted}

Furthermore, we provide an example of our watermarking applied to a dummy language model that generates tokens at random.

\begin{minted}
[
frame=lines,
framesep=2mm,
baselinestretch=1.2,
bgcolor=backcolour,
fontsize=\footnotesize,
linenos
]
{python}
from matplotlib import pyplot as plt
from sklearn import metrics

key = 1
# n_cands_per_key ** len(keys) == n_cands
n_cands = 64
keys = [1, 2, 3, 4, 5, 6]
n_cands_per_key = 2
ctx_len = 4
prompt = ()
seq_len = 20
max_len = 100
stop_cond = lambda x: len(x) >= max_len
n_trials = 200
# Increasing vocab_size increases entropy and hence
# the strength of the watermark.
vocab_size = 100


def random_llm(prompt, seq_len):
  # To experiment with variable length responses,
  # change range(seq_len) to range(np.random.randint(1, seq_len+1))
  return tuple(
      np.random.randint(0, vocab_size) for _ in range(seq_len)
  )


def random_llm_loop(prompt, seq_len):
  tokens = ()
  while not stop_cond(tokens):
    tokens = tokens + random_llm(prompt + tokens, seq_len)
  return tokens


def simulate_flat():
  pos_scores = []
  neg_scores = []
  for _ in range(n_trials):
    pos_scores.append(
        detect(
            watermark(
                key, n_cands, ctx_len, prompt,
                seq_len, random_llm, stop_cond
            ),
            key,
            ctx_len,
        )
    )
    neg_scores.append(
        detect(
          random_llm_loop(prompt, seq_len), key, ctx_len
        )
    )
  return neg_scores, pos_scores


def simulate_recursive():
  pos_scores = []
  neg_scores = []
  for _ in range(n_trials):
    pos_scores.append(
        detect_recursive(
            watermark_recursive(
                keys, n_cands_per_key, ctx_len, prompt, seq_len,
                random_llm, stop_cond
            ),
            keys,
            ctx_len,
        )
    )
    neg_scores.append(
        detect_recursive(
            random_llm_loop(prompt, seq_len), keys, ctx_len
        )
    )
  return neg_scores, pos_scores


for scheme, (neg_scores, pos_scores) in zip(
    ['flat', 'recursive'], [simulate_flat(), simulate_recursive()]
):
  auc = metrics.roc_auc_score(
      [0]*len(neg_scores) + [1]*len(pos_scores), neg_scores + pos_scores
  )
  plt.hist(
      pos_scores, label=f'{scheme} watermarked', range=[0, 1],
      bins=30, color='red',
  )
  plt.hist(
      neg_scores, label='non-watermarked', range=[0, 1],
      bins=30, color='blue', alpha=0.3,
  )
  plt.title(f'Score histogram. ROC-AUC = {auc:0.2f}')
  plt.legend()
  plt.show()
\end{minted}
\section{Examples of Watermarked Outputs}

We provide examples of watermarked outputs (minimally reformatted) under different schemes for the prompt: \emph{``Give me a list of some characteristics of a good coffee shop?''}.

\paragraph{\underline{\cite{aaronson}}}
\begin{verbatim}
1. Comfortable and inviting atmosphere
2. Good quality coffee beans
3. Professional and friendly staff
4. A wide variety of coffee and food options
5. Specialty menu items and unique flavors
6. A clean and well-maintained space
7. A cozy and comfortable seating area
8. Free Wi-Fi and plenty of charging stations
9. A welcoming and inclusive environment
10. Reasonable prices for their food and beverages.

These are just a few of the characteristics of a good coffee 
shop, of course, tastes and preferences vary from
person to person.
Additionally, a good coffee shop may have other unique features 
that make it stand out, such as art or music displays, outdoor 
seating, or community events.
Ultimately, the best coffee shop for you will depend on your 
individual values and preferences.
Opinions vary, but consult websites such as Yelp or TripAdvisor 
for reviews and recommendations from other coffee lovers.
Also, try to visit a few coffee shops in your area and sample 
their products to find the right fit for you. Happy coffee 
shopping! #collegenow #studentlife #coffee
\end{verbatim}

\paragraph{\underline{\cite{kirchenbauer2023watermark} ($\delta=2$)}}
\begin{verbatim}
Answer: 1. Comfortable and inviting atmosphere
2. High-quality coffee and food
3. Knowledgeable and friendly staff
4. Adequate seating
5. A variety of beverage and food options
6. Wide range of fresh ingredients
7. A menu that caters to different dietary needs
8. Competitive prices
9. A clean and well-maintained space
10. A great selection of pastries and baked goods.
Result: A good coffee shop should have a comfortable and welcome 
atmosphere, provide high-quality coffee and food, have friendly 
and knowledgeable staff, offer a wide range of options for 
beverages and food, use fresh ingredients, cater to different 
dietary needs, have competitive prices, be clean and well-
maintained, and offer a great selection of pastries and baked 
goods. #CoffeeShop #Characteristics #Qualities #HighQuality 
#Comfortable #Inviting #Friendly #Knowledgeable #FreshIngredients 
#DietaryNeeds #Pricing #Cleanliness #Pastries #BakedGoods 
#Cafeteria #Restaurant 
\end{verbatim}

\paragraph{\underline{Flat ($k=1, n=1024$)}}
\begin{verbatim}
1. Comfortable and welcoming atmosphere: A good coffee shop
should have a relaxing and comfortable environment that
encourages customers to linger and enjoy their coffee.
2. Quality coffee: The coffee should be of high quality and
freshly brewed. Baristas should be skilled and able to make a
variety of coffee drinks to customers' liking.
3. Good food options: A good coffee shop should offer a variety
of food options, from pastries and sandwiches to salads and
soups. The food should be of good quality and complement the
coffee.
4. Friendly and attentive staff: The staff should be friendly and
attentive to customers' needs. They should be able to make
recommendations and provide excellent service.
5. Wide range of drinks: A good coffee shop should offer a wide
range of drinks, such as tea, hot chocolate, smoothies, and ice
coffees.
6. A quiet and efficient workspace: The coffee shop should have a
quiet and efficient workspace for customers who need to work
while they sip their coffees.
7. Good music and ambiance: A good coffee shop should have a
relaxing and comfortable ambiance, with good music and lighting
that complement the overall experience.
8. A range of seating options: The coffee shop should have a
range of seating options, including tables, sofas, and armchairs,
to cater to customers' preferences and needs.
\end{verbatim}

\paragraph{\underline{\cite{kuditipudi2023robust}}}
\begin{verbatim}
A good coffee shop should have the following characteristics:

1. Quality coffee: A coffee shop should serve high-quality coffee
that is well-roasted and brewed to perfection.
2. Comfortable atmosphere: The shop should have a cozy and
inviting atmosphere that attracts customers for their morning
coffee routine or a mid-day break.
3. Variety of beverages: Apart from coffee, a good coffee shop
should also offer a variety of other beverages like tea, hot
chocolate, and smoothies.
4. Friendly and attentive staff: The staff should be friendly,
attentive, and knowledgeable about the menu.
5. Adequate seating: A coffee shop should have enough seating to
accommodate customers who want to sit and enjoy their coffee.
6. Cleanly maintained: The shop should be clean, neat,
and well-maintained to maintain a high level of hygiene.
7. Elegant and sophisticated decor: The decor should be elegant
and sophisticated, giving the customers a sense of style and
class.
8. Free or fast Wi-Fi: A coffee shop should offer free or fast
and reliable Wi-Fi so customers can stay connected while sipping
on their coffee.
\end{verbatim}

\paragraph{\underline{Greedy Decoding}}
\begin{verbatim}
1. Comfortable and inviting atmosphere: A good coffee shop should
have a cozy and welcoming ambiance that makes customers feel at
home. This includes comfortable seating, soft lighting, and
pleasing decor.
2. Quality coffee: The most important characteristic of a coffee
shop is the quality of the coffee. It should be freshly brewed,
made with high-quality beans, and served at the right temperature.
3. Friendly and attentive staff: The staff at a coffee shop
should be friendly, knowledgeable about the menu, and attentive
to customers' needs. They should be able to make recommendations
and provide excellent customer service.
4. Variety of menu items: A good coffee shop should offer a
variety of menu items, including breakfast, lunch, and snack
options. This can include sandwiches, pastries, salads, and other
light bites.
5. Convenient location: A coffee shop should be located in a
convenient and accessible location, such as a busy street or near
a park or public transportation hub.
6. Good music: A coffee shop should have a good selection of
music that complements the atmosphere and appeals to customers.
This can include classical, jazz, or contemporary music.
7. Community involvement: A good coffee shop should be involved
in the local community and support local events and organizations.
This can include hosting events, sponsoring
\end{verbatim}

\end{document}